\documentclass{lmcs}
\pdfoutput=1

\usepackage{lastpage}
\lmcsdoi{16}{4}{4}
\lmcsheading{}{\pageref{LastPage}}{}{}%
{Dec.~25,~2019}{Oct.~13,~2020}{}

\usepackage[utf8]{inputenc}

\usepackage{amsfonts,amstext,amsmath}
\usepackage[labelformat=simple]{subcaption}
\usepackage{mathrsfs}
\usepackage{extarrows,fontawesome}
\usepackage{url}
\usepackage{tikz}
\usepackage{booktabs}
\usepackage{multirow}
\usepackage[ruled]{algorithm2e}

\captionsetup[subfigure]{labelfont=rm}

\renewcommand{\vec}[1]{\mathbf{#1}}

\newcommand{\set}[1]{\{#1\}}

\newlength{\ml}

\newcommand{\alain}[1]{\todo[inline,color=red!20]{{\bf Al:} #1}}
\newcommand{\praveen}[1]{\todo[inline,color=blue!10]{{\bf P:} #1}}

\newcommand{\Nat}{\mathbb{N}}
\newcommand{\Int}{\mathbb{Z}}

\newcommand{\counters}{K}
\newcommand{\cstrans}[2]{\xlongrightarrow[#2]{#1}}
\newcommand{\cval}{\vec{\nu}}

\newcommand{\channels}{F}
\newcommand{\fstrans}[1]{\xlongrightarrow{#1}}

\newcommand{\fval}{\vec{w}}

\newcommand{\NP}{{\normalfont\textsc{Np}}}
\newcommand{\PSPACE}{{\normalfont\textsc{Pspace}}}
\newcommand{\EXPSPACE}{{\normalfont\textsc{Expspace}}}
\newcommand{\PTIME}{{\normalfont\textsc{Ptime}}}

\newcommand{\cnt}{\mathrm{count}}
\newcommand{\ord}{\mathrm{order}}
\newcommand{\sync}{\mathrm{sync}}
\newcommand{\simCont}{\mathrm{content}}
\newcommand{\flatcs}{\mathrm{flat}}

\newcommand{\ket}[1]{[#1\rangle} 

\usetikzlibrary{decorations.pathmorphing}
\usetikzlibrary{decorations.text}

\tikzstyle{state}=[draw=black, fill=black, circle, inner sep=0.01cm, minimum
height=0.1cm]
\tikzstyle{longpath}=[decorate,decoration={snake,amplitude=0.02cm,segment length=0.2cm}]
\tikzstyle{unreachable}=[draw=black!20,fill=black!20,text=black!20]

\keywords{Infinite state machines, FIFO, counters, flat machines, reachability, termination, complexity}

\theoremstyle{remark}
\newtheorem*{problem}{Problem}

\begin{document}

\title{Verification of Flat FIFO Machines} 

\titlecomment{A preliminary version of this work appeared in the conference CONCUR 2019. The work reported was carried out in the framework of ReLaX, UMI2000 (ENS
Paris-Saclay, CNRS, Univ. Bordeaux, CMI, IMSc). This work was also supported
by the grant ANR--17--CE40--0028 of the French National Research Agency ANR
(project BRAVAS).}
\author{Alain Finkel\rsuper{{a,c}}}
\author{M.~Praveen\rsuper{{b,c}}}
\address{\lsuper{a}Universit\'{e} Paris-Saclay, ENS Paris-Saclay, CNRS, Laboratoire Sp\'{e}cification et V\'{e}rification, 91190, Gif-sur-Yvette, France and Institut Universitaire de France}
\address{\lsuper{b}Chennai Mathematical Institute, India}
\address{\lsuper{c}UMI ReLaX}

\thanks{Partially supported by a grant from the Infosys foundation.}

\maketitle

\begin{abstract}
The decidability and complexity of reachability problems and
model-checking for flat counter machines have been explored in detail.
However, only few results are known for flat (lossy) FIFO machines, only in
some particular cases (a single loop or a single bounded expression).
%
%
We prove, by establishing reductions between properties, and by
reducing SAT to a subset of these properties that many verification
problems like reachability, non-termination, unboundedness are \NP{}-complete for flat
FIFO machines, generalizing similar existing results for flat counter machines. We also show that reachability is \NP{}-complete for flat
lossy FIFO machines and for flat front-lossy FIFO machines.
%
%
%
We construct a trace-flattable system of many counter machines communicating via rendez-vous
that is bisimilar to a
given flat FIFO machine, which allows to model-check the original flat
FIFO machine. Our results lay the theoretical foundations and open the way to build a verification tool for (general) FIFO machines based on analysis of flat sub-machines.

\end{abstract}

\section{Introduction}

\subsection*{FIFO machines}
%
%
Asynchronous distributed processes communicating through First In
First Out (FIFO) channels are used since the seventies as models for
protocols~\cite{DBLP:journals/sigops/Bochmann75}, distributed and
concurrent programming and more recently for web service choreography
interface~\cite{DBLP:conf/coordination/BusiGGLZ06}.
Since FIFO machines simulate counter machines, most reachability
properties are  \emph{undecidable} for FIFO machines: for example, the basic task of
checking if the number of messages buffered in a channel can grow
unboundedly is undecidable~\cite{DBLP:journals/jacm/BrandZ83}.

There aren't many interesting and useful FIFO
subclasses with a  \emph{decidable} reachability problem. Considering
FIFO machines with a unique FIFO channel is not a useful restriction
since they may simulate Turing machines~\cite{DBLP:journals/jacm/BrandZ83}. A few examples of decidable
subclasses are half-duplex systems~\cite{CF-icomp05} (but they are
restricted to two machines since the natural extension to three
machines leads to undecidability), existentially bounded deadlock free
FIFO machines~\cite{DBLP:journals/fuin/GenestKM07} (but it is
undecidable to check if a machine is existentially bounded, even for
deadlock free FIFO machines), synchronisable FIFO machines (the property
of synchronisability is undecidable~\cite{FL-icalp17} and moreover, it
is not clear which properties of synchronisable machines are
decidable), flat FIFO machines~\cite{DBLP:conf/sas/BoigelotGWW97,DBLP:journals/tcs/BouajjaniH99} and
lossy FIFO machines~\cite{DBLP:journals/fmsd/AbdullaCBJ04} (but one
loses the perfect FIFO mechanism).

\subsection*{Flat machine} A flat machine~\cite{BFLS05-atva,FG-lmcs12,DHLST-CONCUR17,DBLP:journals/sttt/Boigelot12}
is a machine with a single finite control structure such that every control-state belongs to at
most one loop. Equivalently, the language of the
control structure is included in a bounded language of the form
$w_1^*w_2^* \cdots w_k^*$ where every $w_i$ is a non empty word.
Analyzing flat machines essentially reduces to
accelerating loops (i.e., to compute finite representations of the
effect of iterating each loop arbitrarily many times) and to connect these finite
representations with one another.  Flat machines are particularly
interesting since one may under-approximate any machine by its flat
submachines.

For counter machines~\cite{DFGD-jancl10,DBLP:conf/atva/IosifS16}, this strategy lead to some tools
like FAST~\cite{FAST-cav03}, LASH, TREX~\cite{DBLP:conf/cav/2001},
FLATA~\cite{DBLP:conf/cade/BozgaIKV10} which enumerate all flat
submachines till the reachability set is reached. This strategy
is not an algorithm since it may never terminate on some inputs.
However in practice, it terminates in many cases; e.g., in~\cite{FAST-cav03}, 80\% of the examples (including Petri nets and
multi-threaded Java programs) could be effectively verified. The
complexity of flat counter machines is well-known: reachability is
\NP{}-complete for variations of flat counter machines~\cite{DBLP:phd/ethos/Haase12,DBLP:journals/corr/BozgaIK13,DBLP:journals/iandc/DemriDS15},
model-checking first-order formulae and linear $\mu$-calculus formulae is
\PSPACE{}-complete while model-checking B\"{u}chi automata is \NP{}-complete~\cite{DDS-icalp13}; equivalence between model-checking flat counter
machines and Presburger arithmetic is established in~\cite{DDS-tcs17}.

\subsection*{Flat FIFO machines} We know almost nothing about flat
FIFO machines, even the complexity of reachability is not known.
Boigelot et al.~\cite{DBLP:conf/sas/BoigelotGWW97} used recognizable languages (QDD) for accelerating loops in a subclass of flat FIFO machines, where there are restrictions on the number of channels that a loop can operate on.
Bouajjani and Habermehl~\cite{DBLP:journals/tcs/BouajjaniH99} proved that the acceleration of
\emph{any} loop can be finitely represented by combining a
deterministic flat finite automaton and a Presburger formula (CQDD)
that are both computable.  However, surprisingly, no upper
bound for the Boigelot et al.'s and for the Bouajjani et al.'s
loop-acceleration algorithms are known. Just the complexity of the
inclusion problem for QDD, CQDD and SLRE (SLRE are both QDD and CQDD)
are partially known (respectively \PSPACE-complete,
\textsc{N2Exptime}-hard,
\textsc{CoNp}-complete)~\cite{FPS-ICOMP}. But the  complexity of the
reachability problem for flat FIFO machines was not known. Only the
complexity of the control-state reachability problem was known to be
\NP-complete for single-path flat
FIFO machines~\cite{EGM2012}. Moreover, other properties and
model-checking have not been studied for flat FIFO machines. Similarly, Abdulla et al.'s studied the verification of lossy FIFO machines by accelerating loops and representing them by a class of regular expressions called Simple Regular Expressions (SRE)~\cite{DBLP:conf/cav/AbdullaBJ98,DBLP:journals/fmsd/AbdullaCBJ04} and gave a polynomial (quadratic) algorithm for computing  the reachability set $\sigma^*(L)$ of a loop labeled by $\sigma$ from a SRE language $L$. But the  complexity of the
reachability problem for flat lossy FIFO machines was not known.

\subsection*{Contributions}
We solve the open problem of the complexity of the reachability
problem for flat FIFO machines by showing that it is \NP{}-complete; we
extend this result to other usual verification properties and show
that they are also \NP{}-complete. We also show that the reachability
problem is \NP{}-complete for flat (front-)lossy FIFO machines. 
Then we show that a flat FIFO
machine can be simulated by a synchronized
product of counter machines.
This synchronized product is flattable and its reachability set is
semilinear.
%


\section{Preliminaries}
We write $\Int$ (resp.~$\Nat$) to denote the set of integers
(resp.~non-negative integers).
A finite alphabet is any finite set
$\Sigma$. Its elements are referred to as letters; $\Sigma^{*}$ is the
set of all finite sequences of letters, referred to as words. We
denote by $w_{1}w_{2}$ the word obtained by concatenating $w_{1}$ and
$w_2$; and
$\epsilon$ is the empty sequence, which is the unity for the
concatenation operation. We write $\Sigma^{+}$ for
$\Sigma^{*}\setminus \set{\epsilon}$. If $w_{1}$ is a prefix of
$w_{2}$, we denote by $w_{1}^{-1}w_{2}$ the word obtained from $w_{2}$
by dropping the prefix $w_{1}$. If $w_{1}$ is not a prefix of $w_{2}$,
then $w_{1}^{-1}w_{2}$ is undefined. A word $z \in \Sigma^{*}$ is
primitive if $z \notin w^{*}\setminus \set{w}$ for any $w \in
\Sigma^{*}$. We denote by $\mathit{Parikh}(w): \Sigma \to \Nat$ the
function that maps each letter $a \in \Sigma$ to the number of times
$a$ occurs in $w$. We denote by $w^{n}$ the concatenation of $n$ copies of
$w$. The infinite word $x^{\omega}$ is obtained by
concatenating $x$ infinitely many times.

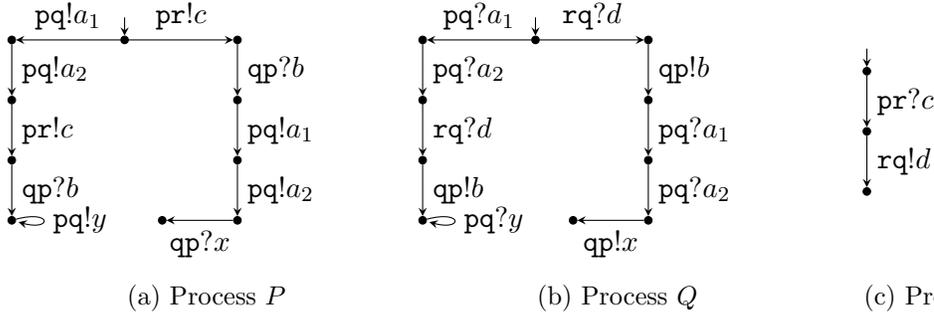
\begin{figure}
\centering
\begin{subfigure}[t]{0.35\textwidth}
\begin{tikzpicture}[>=stealth]
    \node[state] (q1) at (0cm,0cm) {};
    \node[state] (q2) at ([xshift=-1.5cm]q1) {};
    \node[state] (q3) at ([xshift=1.5cm]q1) {};
    \node[state] (q4) at ([yshift=-0.8cm]q2) {};
    \node[state] (q5) at ([yshift=-0.8cm]q4) {};
    \node[state] (q6) at ([yshift=-0.8cm]q5) {};
    \node[state] (q7) at ([yshift=-0.8cm]q3) {};
    \node[state] (q8) at ([yshift=-0.8cm]q7) {};
    \node[state] (q9) at ([yshift=-0.8cm]q8) {};
    \node[state] (q10) at ([xshift=-1cm]q9) {};

    \draw[->] ([yshift=0.3cm]q1.center) -- (q1);
    \draw[->] (q1) -- node[auto=right] {$\mathtt{pq}!a_1$} (q2);
    \draw[->] (q1) -- node[auto=left] {$\mathtt{pr}!c$} (q3);
    \draw[->] (q2) -- node[auto=left] {$\mathtt{pq}!a_2$} (q4);
    \draw[->] (q4) -- node[auto=left] {$\mathtt{pr}!c$} (q5);
    \draw[->] (q5) -- node[auto=left] {$\mathtt{qp}?b$} (q6);
    \draw[->] (q6) edge [loop right] node {$\mathtt{pq}!y$} (q6);
    \draw[->] (q3) -- node[auto=left] {$\mathtt{qp}?b$} (q7);
    \draw[->] (q7) -- node[auto=left] {$\mathtt{pq}!a_1$} (q8);
    \draw[->] (q8) -- node[auto=left] {$\mathtt{pq}!a_2$} (q9);
    \draw[->] (q9) -- node[auto=left] {$\mathtt{qp}?x$} (q10);
\end{tikzpicture}
\caption{Process $P$}
\end{subfigure}
\begin{subfigure}[t]{0.35\textwidth}
\begin{tikzpicture}[>=stealth]
    \node[state] (q1) at (0cm,0cm) {};
    \node[state] (q2) at ([xshift=-1.5cm]q1) {};
    \node[state] (q3) at ([xshift=1.5cm]q1) {};
    \node[state] (q4) at ([yshift=-0.8cm]q2) {};
    \node[state] (q5) at ([yshift=-0.8cm]q4) {};
    \node[state] (q6) at ([yshift=-0.8cm]q5) {};
    \node[state] (q7) at ([yshift=-0.8cm]q3) {};
    \node[state] (q8) at ([yshift=-0.8cm]q7) {};
    \node[state] (q9) at ([yshift=-0.8cm]q8) {};
    \node[state] (q10) at ([xshift=-1cm]q9) {};

    \draw[->] ([yshift=0.3cm]q1.center) -- (q1);
    \draw[->] (q1) -- node[auto=right] {$\mathtt{pq}?a_1$} (q2);
    \draw[->] (q1) -- node[auto=left] {$\mathtt{rq}?d$} (q3);
    \draw[->] (q2) -- node[auto=left] {$\mathtt{pq}?a_2$} (q4);
    \draw[->] (q4) -- node[auto=left] {$\mathtt{rq}?d$} (q5);
    \draw[->] (q5) -- node[auto=left] {$\mathtt{qp}!b$} (q6);
    \draw[->] (q6) edge [loop right] node {$\mathtt{pq}?y$} (q6);
    \draw[->] (q3) -- node[auto=left] {$\mathtt{qp}!b$} (q7);
    \draw[->] (q7) -- node[auto=left] {$\mathtt{pq}?a_1$} (q8);
    \draw[->] (q8) -- node[auto=left] {$\mathtt{pq}?a_2$} (q9);
    \draw[->] (q9) -- node[auto=left] {$\mathtt{qp}!x$} (q10);
\end{tikzpicture}
\caption{Process $Q$}
\end{subfigure}
\begin{subfigure}[t]{0.2\textwidth}
\begin{tikzpicture}[>=stealth]
    \node[state] (q1) at (0cm,0cm) {};
    \node[state] (q2) at ([yshift=-0.8cm]q1) {};
    \node[state] (q3) at ([yshift=-0.8cm]q2) {};

    \draw[->] ([yshift=0.3cm]q1.center) -- (q1);
    \draw[->] (q1) -- node[auto=left] {$\mathtt{pr}?c$} (q2);
    \draw[->] (q2) -- node[auto=left] {$\mathtt{rq}!d$} (q3);

    \path[use as bounding box] ([xshift=-0.5cm,yshift=0.5cm]q1) rectangle
    ([xshift=1cm,yshift=-1cm]q3);
\end{tikzpicture}
\caption{Process $R$}
\end{subfigure}
\caption{FIFO system of Example~\ref{ex:motivation} (from~\cite{DBLP:journals/corr/abs-1901-09606})}%
\label{fig:motivation}
\end{figure}

\subsection*{FIFO Machines}
\begin{defi}[FIFO machines]%
    \label{def:fifoSystems}
  A FIFO machine $S$ is a tuple $(Q, \channels, M, \Delta)$ where
  $Q$ is a finite set of control states,
  $\channels$ is a finite set of FIFO channels,
  $M$ is a finite message alphabet and $\Delta \subseteq (Q\times Q) \cup (Q \times
  (\channels \times \set{!,?} \times M) \times Q)$ is a finite set of
  transitions.
\end{defi}

We write a transition $(q, (\mathtt{c},?,a), q')$ as $q
\fstrans{\mathtt{c}?a} q'$; we similarly modify other transitions. We
call $q$ the source state and $q'$ the target state. Transitions
of the form $q \fstrans{\mathtt{c}?a} q'$ (resp.~$q
\fstrans{\mathtt{c}!a} q'$) denote
retrieve actions (resp.~send actions). Transitions of the form $q
\fstrans{} q'$ do not change the channel contents but only change the
control state.

The channels in $\channels$ hold strings in $M^{*}$. A channel
valuation $\fval$ is a fuction from $F$ to $M^*$. We denote the set of
all channel valuations by ${(M^{*})}^{\channels}$. Given two channel
valuations $\fval_{1}, \fval_{2} \in {(M^{*})}^{\channels}$, we denote
by $\fval_{1} \cdot \fval_{2}$ the valuation obtained by concatenating
the contents in $\fval_{1}$ and $\fval_{2}$ channel-wise. For a letter
$a \in M$ and a channel $\mathtt{c} \in \channels$, we denote by $\vec{a}_{\mathtt{c}}$
the channel valuation that assigns $a$ to $\mathtt{c}$ and $\epsilon$ to all
other channels.
The semantics of a FIFO machine $S$ is given by a transition system
$T_{S}$ whose set of states is $Q \times {(M^{*})}^{\channels}$, also
called configurations. Every
transition $q \fstrans{\mathtt{c}?a} q'$ of $S$ and channel valuation
$\fval \in {(M^{*})}^{\channels}$ results in the transition
$(q,\vec{a}_{\mathtt{c}}\cdot \fval) \fstrans{\mathtt{c}?a}
(q',\fval)$ in $T_{S}$. Every transition $q \fstrans{\mathtt{c}!a} q'$
of $S$ and channel valuation $\fval \in {(M^{*})}^{\channels}$ results
in the transition $(q,\fval) \fstrans{\mathtt{c}!a} (q',\fval\cdot
\vec{a}_{\mathtt{c}})$ in $T_{S}$. Intuitively, the transition
$q \fstrans{\mathtt{c}?a} q'$ (resp.~$q \fstrans{\mathtt{c}!a} q'$)
retrieves the letter $a$ from the front of the channel $\mathtt{c}$
(resp.~sends the letter $a$ to the back of the channel $\mathtt{c}$).
A run of $S$ is a (finite or infinite) sequence of configurations
$(q_{0}, \fval_{0}) (q_{1}, \fval_1) \cdots$ such that for every
$i \ge 0$, there is a transition $t_{i}$ such that $(q_{i},
\fval_{i}) \fstrans{t_{i}} (q_{i+1},\fval_{i+1})$.

\begin{exa}%
\label{ex:motivation}
Figure~\ref{fig:motivation} shows a FIFO system (from~\cite{DBLP:journals/corr/abs-1901-09606}) with three processes $P, Q,
R$ that communicate through four FIFO channels $pq, qp, pr, rq$.
Processes are FIFO machines where transitions are labeled
by sending or receiving operations with FIFO channels and, for
example, channel $pq$ is an unidirectional FIFO channel from process
$P$ to process $Q$. From this FIFO system, we get a FIFO
machine as given in Definition~\ref{def:fifoSystems} by product
construction.\footnote{We use \emph{FIFO machine} for one
automaton and \emph{FIFO system} when there are multiple
automata interacting with each other.} The control states of the
product FIFO machines are
triples, containing control states of processes $P, Q, R$. The product
FIFO machine can go from one control state to another if one of the
processes goes from a control state to another and the other two
processes remain in their states. For example, the product machine has
the transition $(q_{1},q_{2},q_{3}) \fstrans{\mathtt{pq}!a_{1}}
(q_{1}',q_{2}, q_{3})$, if process $P$ has the transition $q_{1}
\fstrans{\mathtt{pq}!a_{1}} q_{1}'$.
\end{exa}
%

For analyzing the running time of
algorithms, we assume the size of a machine to be the number of bits
needed to specify a machine (and source/target configurations if
necessary) using a reasonable encoding.
Let us begin to present the reachability problems that we tackle in
this paper.

\begin{problem}[Reachability]
\emph{Given:} A FIFO machine $S$ and two configurations
$(q_0,\fval_0)$ and $(q,\fval)$.
\emph{Question:} Is there a run starting from $(q_0,\fval_0)$ and
ending at $(q,\fval)$?
\end{problem}
\begin{problem}[Control-state reachability]
\emph{Given:}  A FIFO machine $S$, a configuration $(q_0,\fval_0)$ and a control-state $q$.
\emph{Question:} Is there a channel valuation $\fval$ such that $(q,\fval)$ is reachable from $(q_0,\fval_0)$?
\end{problem}
It is folklore that reachability and control-state reachability are
undecidable for machines operating on FIFO channels.

\subsection*{Flat machines} For a FIFO machine
$S=(Q,\channels,M,\Delta)$, its \emph{machine graph} $G_{S}$ is a
directed graph whose set of vertices is $Q$. There is a directed edge
from $q$ to $q'$ if there is some transition $q \fstrans{\mathtt{c}?a}
q'$ or $q \fstrans{\mathtt{c}!a} q'$ for some channel $\mathtt{c}$ and
some letter $a$, or there is a transition $q \fstrans{} q'$. We say
that $S$ is \emph{flat} if in $G_{S}$, every vertex is in at most one
directed cycle. Figure~\ref{fig:flatFIFO} shows a flat FIFO machine.
\begin{figure}[t]
    \begin{subfigure}[t]{0.45\textwidth}
        \centering
        \begin{tikzpicture}[>=stealth]
            \node[state, label=-90:$q_{0}$] (q0) at (0\ml,0\ml) {};
            \node[state, label=-90:$q_{1}$] (q1) at ([xshift=1cm]q0) {};
            \node[state, label=-90:$q_{2}$] (q2) at ([xshift=1cm]q1) {};
            \node[state, label=-90:$q_{3}$] (q3) at ([xshift=1cm]q2) {};
            \node[state] (q01) at (barycentric cs:q0=1,q1=1) {};
            \node[state] (q12) at (barycentric cs:q1=1,q2=1) {};
            \node[state] (q23) at (barycentric cs:q2=1,q3=1) {};

            \draw[->] (q0) -- (q01);
            \draw[->] (q01) -- (q1);
            \draw[->] (q1) -- (q12);
            \draw[->] (q12) -- (q2);
            \draw[->] (q2) -- (q23);
            \draw[->] (q23) -- (q3);

            \node[state, label=90:$q_{4}$] (q4) at ([yshift=1cm]q1) {};
            \node[state, label=90:$q_{5}$] (q5) at ([yshift=1cm]q2) {};
            \node[state] (q14) at ([xshift=-0.25cm]barycentric cs:q1=1,q4=1) {};
            \node[state] (q25) at ([xshift=-0.25cm]barycentric cs:q2=1,q5=1) {};
            \node[state] (p14) at ([xshift=0.25cm]barycentric cs:q1=1,q4=1) {};
            \node[state, label=0:$q_{6}$] (p25) at ([xshift=0.25cm]barycentric cs:q2=1,q5=1) {};

            \draw (q1) edge[->,bend left] (q14);
            \draw (q14) edge[->, bend left] (q4);
            \draw (q4) edge[->, bend left] (p14);
            \draw (p14) edge[->, bend left] (q1);
            \draw (q2) edge[->, bend left] (q25);
            \draw (q25) edge[->, bend left] (q5);
            \draw (q5) edge[->, bend left] (p25);
            \draw (p25) edge[->, bend left] (q2);
            \draw[->] (q1) edge[bend right=49] (q3);
            \draw[->] (q4) -- (q5);
            \draw[->] (p25) -- (q3);

            \node at (barycentric cs:q14=1,p14=1) {$\ell_{1}$};
            \node at (barycentric cs:q25=1,p25=1) {$\ell_{2}$};

            \node at ([yshift=0.2cm] barycentric cs:q0=1,q1=1) {};
            \node at ([yshift=0.2cm] barycentric cs:q1=1,q2=1) {};
            \node at ([yshift=0.2cm] barycentric cs:q2=1,q3=1) {};

            \node at ([xshift=0.03cm] barycentric cs:q1=1,q4=1) {};
            \node at ([xshift=0.03cm] barycentric cs:q2=1,q5=1) {};
        \end{tikzpicture}
        \caption{Flat FIFO machine}%
        \label{fig:flatFIFO}
    \end{subfigure}
    \begin{subfigure}[t]{0.5\textwidth}
        \centering
        \begin{tikzpicture}[>=stealth]
            \node[state, label=-90:$q_{0}$] (q0) at (0\ml,0\ml) {};
            \node[state, label=-90:$q_{1}$] (q1) at ([xshift=1cm]q0) {};
            \node[state, label=-90:$q_{2}$] (q2) at ([xshift=2cm]q1) {};
            \node[state, label=-90:$q_{3}$] (q3) at ([xshift=1cm]q2) {};

            \draw[longpath,->] (q0) -- (q1);
            \draw[longpath,->] (q1) -- (q2);
            \draw[longpath,->] (q2) -- (q3);

            \node[coordinate] (q4) at ([yshift=1cm]q1) {};
            \node[coordinate] (q5) at ([yshift=1cm]q2) {};

            \draw (q1) edge[longpath, bend left=45] (q4);
            \draw (q4) edge[longpath, bend left=45, ->] (q1);
            \draw (q2) edge[longpath, bend left=45] (q5);
            \draw (q5) edge[longpath, bend left=45, ->] (q2);

            \node at ([yshift=0.2cm] barycentric cs:q0=1,q1=1) {$p_0$};
            \node at ([yshift=0.2cm] barycentric cs:q1=1,q2=1) {$p_1$};
            \node at ([yshift=0.2cm] barycentric cs:q2=1,q3=1) {$p_2$};

            \node at ([xshift=0.03cm] barycentric cs:q1=1,q4=1) {$\ell_1$};
            \node at ([xshift=0.03cm] barycentric cs:q2=1,q5=1) {$\ell_2$};
        \end{tikzpicture}
        \caption{Path schema denoted by $p_0{(\ell_1)}^*p_1{(l_2)}^*p_2$}%
        \label{fig:pathSchema}
    \end{subfigure}
    \caption{Example flat FIFO machine and path schema}
\end{figure}
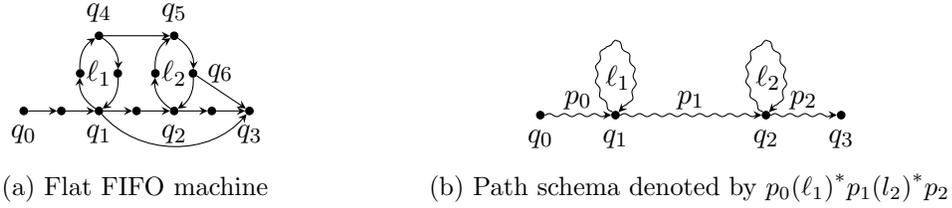

We call a FIFO machine $S=(Q,\channels,M,\Delta)$ a \emph{path segment} from
state $q_0$ to state $q_r$ if $Q=\set{q_0,\ldots,q_r}$,
$\Delta=\set{t_1,\ldots, t_r}$ and for every $i \in \set{1,\ldots,r}$,
$q_{i-1}$ is the source of $t_i$ and $q_i$ is its target. We call a
FIFO machine $S=(Q,\channels,M,\Delta)$ an \emph{elementary loop} on $q_0$ if $Q=\set{q_0,
\ldots, q_r}$, $\Delta=\set{t_1,\ldots, t_{r+1}}$ and for each $i\in
\set{1,\ldots,r+1}$, $t_i$ has source $q_{i-1}$ and target $q_{i \mod
(r+1)}$. We call $t_1 \cdots t_{r+1}$ the label of the loop.
A \emph{path schema} is a flat FIFO machine comprising of a sequence
$p_0\ell_1p_1\ell_2p_2\cdots l_{r}p_r$, where $p_0, \ldots, p_r$ are
path segments and $\ell_1, \ldots, \ell_r$ are elementary loops. There are
states $q_0, q_1, \ldots, q_{r+1}$ such that $p_0$ is a path segment
from $q_0$ to $q_1$ and for every $i \in \set{1, \ldots, r}$, $p_i$ is
a path segment from $q_i$ to $q_{i+1}$ and $\ell_i$ is an elementary loop on
$q_i$. Except $q_{i}$, none of the other states in $\ell_i$ appear in
other path segments or elementary loops. To emphasize that $\ell_1,
\ldots, \ell_r$ are elementary loops, we denote the path schema as
$p_0{(\ell_1)}^*p_1\cdots {(\ell_r)}^*p_r$. We use the term elementary
loop to distinguish them from loops in general, which may have some
states appearing more than once. All loops in flat FIFO machines are
elementary.  Figure~\ref{fig:pathSchema} shows a path schema, where
wavy lines indicate long path segments or elementary loops that may
have many intermediate states and transitions. This path schema is
obtained from the flat FIFO machine of Figure~\ref{fig:flatFIFO} by
removing the transitions from $q_{1}$ to $q_{3}$, $q_{4}$ to
$q_{5}$ and $q_{6}$ to $q_{3}$.

\begin{rem}[Fig.~\ref{fig:motivation}]
Each process $P, Q, R$ is flat and the cartesian product of the three
automata is almost flat except on one state:
there are two loops, one sending $y$ in channel $\mathtt{pq}$ and
another one retrieving $y$ from channel $\mathtt{pq}$.
\end{rem}

\subsection*{Notations and definitions}
 For any sequence $\sigma$ of transitions of a FIFO machine and channel
$\mathtt{c} \in \channels$, we denote by
$y^{\sigma}_{\mathtt{c}}$
(resp.~$x^{\sigma}_{\mathtt{c}}$) the sequence of
letters sent to (resp.~retrieved from) the channel $\mathtt{c}$ by
$\sigma$.
For a configuration $(q,\fval)$, let $\fval({\mathtt{c}})$ denote the
contents of channel $\mathtt{c}$.

\subsection*{Equations on words}

We recall some classical results reasoning about words and
prove one of them, to be used later.
The well-known Levi's Lemma says that the words $u,v \in \Sigma^*$
that are solutions of the equation $uv=vu$ satisfy $u,v \in z^*$ where
$z$ is a primitive word.
%
The solutions of the equation $uv=vw$ satisfy $u=xy,
w=yx, v={(xy)}^{n}x$, for some words $x,y$ and some integer $n\geq 0$.
The following lemma is used in~\cite{FPS-ICOMP} for exactly the same
purpose as here.

\begin{lem}%
    \label{lem:solutionInfWordEqn}
Consider three finite words $x,y \in \Sigma^+$ and $w \in \Sigma^*$. The equation $x^{\omega} =  wy^{\omega}$ holds iff
there exists a primitive word $z \neq \epsilon$ and two words $x',x''$ such
that $x=x'x''$, $x''x' \in z^*$, $w \in x^*x'$ and $y \in z^*$.
%
\end{lem}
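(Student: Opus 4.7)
The plan is to prove both directions by first reducing the infinite-word equation to an equation between finite periodic words, and then invoking Levi's Lemma (stated just before the lemma) to extract the primitive root $z$.

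For the forward direction, I would begin by observing that $w$ is a finite prefix of the infinite word $x^\omega$. Writing $|w| = k|x| + r$ with $0 \le r < |x|$, this forces $w = x^k x'$, where $x'$ is the length-$r$ prefix of $x$; let $x''$ be the corresponding suffix, so $x = x'x''$. Substituting into $x^\omega = wy^\omega$ and cancelling the common prefix $x^k x'$ (valid since $x^\omega = x^k x^\omega = x^k x' x'' x^\omega$), I obtain
\[
x'' x^\omega \;=\; y^\omega.
\]
Rewriting the left side as $x''(x'x'')^\omega = (x''x')^\omega$ gives $(x''x')^\omega = y^\omega$. Since $x = x'x'' \in \Sigma^+$, the word $x''x'$ is non-empty, and $y$ is non-empty by hypothesis. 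Equality of the two infinite periodic words is equivalent to $(x''x')\,y = y\,(x''x')$, so by Levi's Lemma there exists a primitive $z \neq \epsilon$ with $x''x' \in z^*$ and $y \in z^*$. Together with $w = x^k x' \in x^*x'$ and $x = x'x''$, these are exactly the conditions required.

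For the converse direction, I would simply substitute. Assume $x = x'x''$, $x''x' \in z^*$ with $z$ primitive (hence $x''x' \neq \epsilon$), $y \in z^*$, and $w = x^k x'$ for some $k \ge 0$. Then $y^\omega = z^\omega$ (since $y$ is a non-empty power of $z$), and likewise $(x''x')^\omega = z^\omega$. Consequently,
\[
w y^\omega \;=\; x^k x' \, z^\omega \;=\; x^k x' (x''x')^\omega \;=\; x^k (x'x'')^\omega \;=\; x^k x^\omega \;=\; x^\omega,
\]
as required.

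There is no serious obstacle: the only subtle point is the passage from the infinite equation $(x''x')^\omega = y^\omega$ to a common primitive root, which is an immediate application of Levi's Lemma through the finite equation $(x''x')y = y(x''x')$. Minor care must be taken with the degenerate cases $x' = \epsilon$ or $x'' = \epsilon$, but both are handled uniformly by the decomposition above, since $x \in \Sigma^+$ guarantees $x''x' \neq \epsilon$ and the argument goes through without change.
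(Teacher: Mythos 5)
Your proof is correct and follows essentially the same route as the paper: decompose $w = x^k x'$ as a prefix of $x^\omega$, cancel to reduce the hypothesis to $(x''x')^\omega = y^\omega$, and extract the common primitive root $z$ via Levi's Lemma, with the converse by direct substitution. The only cosmetic differences are that you treat $w=\epsilon$ and $w\neq\epsilon$ uniformly (the paper splits them into two cases) and that you pass from $(x''x')^\omega = y^\omega$ to the commutation equation $(x''x')y = y(x''x')$ rather than to $x^{|y|}=y^{|x|}$; both bridges are valid applications of the same lemma.
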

\begin{proof}
Suppose $x,w,y$ satisfy the equation $x^{\omega} =  w.y^{\omega}$. If
$w=\epsilon$, then the equation reduces to $x^{\omega} =  y^{\omega}$.
Hence we deduce that $x^{\mid y \mid} =  y^{\mid x \mid}$. In this
case, we show (using Levi's Lemma and considering the three cases
$\mid x \mid =  \mid y \mid$ or $\mid x \mid <  \mid y \mid$ or $\mid
y \mid <  \mid x \mid$) that the solutions are the words $x,y \in z^*$
where $z$ is a finite primitive word.
Now suppose that $w \neq \epsilon$, so choose the smallest $n\geq0$ such that $w=x^{n}x'$ with $x=x'x''$. Hence, we obtain that ${(x''x')}^{\omega}=y^{\omega}$, and again we know that the solutions of this equation are $x''x',y \in z^*$  where $z$ is a primitive word.

For the converse, suppose $x=x'x''$, $x''x' = z^{j}$, $w = x^{n}x'$ and
$y = z^{k}$. We have $x^{\omega}=x^{n}x'{(x''x')}^{\omega}=w{(z^j)}^{\omega}=w{(z^k)}^{\omega}=wy^{\omega}$.
%
%
%
%
%
\end{proof}

\section{Complexity of Reachability Properties for Flat FIFO Machines}

In this section, we give complexity bounds for the reachability
problem for flat FIFO machines. We also establish the complexity of
other related problems, viz.~repeated control state reachability,
termination, boundedness, channel boundedness and letter channel
boundedness. We use the algorithm for repeated control state
reachability as a subroutine for solving termination and boundedness.
For channel boundedness and letter channel boundedness, we use another
argument based on integer linear programming.
Flat FIFO machines can simulate counter machines and reachability and
related problems are known to be \NP{}-hard for flat counter machines.
However, the lower bound proofs for flat counter machines use binary
encoding of counter updates, while the simulation of counter machines
by FIFO machinbes use unary encoding. Hence, we cannot deduce lower bounds for flat
FIFO machines from the lower bounds for flat counter machines. We prove
the lower bounds for flat FIFO machines directly.

In~\cite{EGM2012}, Esparza,  Ganty, and Majumdar studied the
complexity of reachability for highly undecidable models
(multipushdown automata) but synchronized by bounded languages in the
context of bounded model-checking. In particular, they proved that
control-state reachability is \NP{}-complete for flat FIFO machines
(in fact for single-path FIFO machines, i.e. FIFO machines controlled
by a bounded language).
The \NP{}
upper bound is based on a simulation of FIFO path schemas by multi head
pushdown automata.
Some constraints need to be imposed on the multi head pushdown
automata to ensure the correctness of the simulation. The structure of
path schemas enables these constraints to be expressed as linear
constraints on integer variables and this leads to the \NP{} upper
bound.

Surprisingly, the \NP{} upper bound in~\cite{EGM2012} is given only
for the control-state reachability problem;
the complexity of the reachability problem is not
established in~\cite{EGM2012} while it is given for all other
considered models. However,
there is a simple linear reduction from reachability to control-state
reachability for FIFO (and Last In First Out) machines~\cite{sutre18}.
Such reductions are not known to exist for other models like counter
machines and vector addition systems.

We begin by reducing reachability to control-state reachability
(personal communication from Gr\'egoire Sutre~\cite{sutre18}) for
(general and flat) FIFO machines.

\begin{propC}[\cite{sutre18}]%
\label{reachability}
Reachability reduces (with a linear reduction) to control-state
reachability, for general FIFO machines and for flat FIFO machines.
\end{propC}

\begin{proof}
    Let $A$ be a  FIFO machine, $q$ a control-state and
    $(q,\fval)$ a configuration of $A$. We reduce reachability
    to control-state reachability. We construct the machine
    $B_{A,(q,\fval)}$ from $A$ and $(q,\fval)$ as follows. The machine
    $B_{A,(q,\fval)}$ is obtained from $A$ by adding a path to control
    state $q$ as follows, where $\#$ is a new symbol not in $M$
    and $\channels = \set{\mathtt{1}, \ldots, \mathtt{p}}$.
    The transition labeled $\mathtt{1} ? \fval(\mathtt{1})\#$ is to be understood
    as a sequence of transitions whose effect is to retrieve the string
    $\fval(\mathtt{1})\#$ from channel $\mathtt{1}$.
    \begin{center}
        \begin{tikzpicture}[>=stealth]
            \node[state, label=-90:$q$](q1) at (0cm,0cm) {};
            \node[state] (q2) at ([xshift=1.5cm]q1) {};
            \node[state] (q3) at ([xshift=1.75cm]q2) {};
            \node[state] (q4) at ([xshift=1cm]q3) {};
            \node[state] (q5) at ([xshift=1.5cm]q4) {};
            \node[state, label=-90:$q_{\mathrm{stop}}$] (q6) at ([xshift=1.75cm]q5) {};

            \draw[->] (q1) -- node[auto=left] {$\mathtt{1} ! \#$} (q2);
            \draw[->] (q2) -- node[auto=left] {$\mathtt{1} ? \fval(\mathtt{1})\#$} (q3);
            \draw[dotted,->] (q3) -- (q4);
            \draw[->] (q4) -- node[auto=left] {$\mathtt{p} ! \#$} (q5);
            \draw[->] (q5) -- node[auto=left] {$\mathtt{p} ? \fval(\mathtt{p})\#$} (q6);
        \end{tikzpicture}
    \end{center}


    The configuration $(q,\fval)$ is reachable in $A$ iff the control state
    $q_{\mathrm{stop}}$ is reachable in $B_{A,(q,\fval)}$. Note that
    if $A$ is flat, then $B_{A,(q,\fval)}$ is also flat.
\end{proof}

\begin{rem}
Control-state reachability is reducible to reachability for general
FIFO machines. Suppose $\Sigma = \set{a_{1}, \ldots, a_{d}}$ and there
are $\mathtt{p}$ channels. Using the same notations as in the previous proof,
from $A$ and $q$, one constructs the machine $B_{A,q}$ as follows: one
adds, to $A$, $d\times \mathtt{p}$ self loops $\ell_{i,j}$, each labeled by
$\mathtt{j}?a_i$, for $i \in \set{1,..,d}$ and $j \in \set{1, \ldots, \mathtt{p}}$, all
from and to the control-state $q$. We infer that $q$ is reachable in $A$
if and only if (by definition) there exists $\fval$ such that
$(q,\fval)$ is reachable in $A$ if and only if
$(q,\vec{\epsilon})$ is reachable in $B_{A,q}$. Here,
$(q,\vec{\epsilon})$ denotes the configuration where $q$ is the
control state and all channels are empty. Note that $B_{A,q}$ is not
necessarily flat, even if $A$ is flat. Hence, this reduction does not
imply \NP{}-hardness of reachability in flat FIFO machines. We will
prove \NP{}-hardness later using a different reduction.
\end{rem}

It is proved in~\cite[Theorem 7]{EGM2012} that control state
reachability is in \NP{} for flat FIFO machine.

\begin{cor}%
\label{cor:reachability}
Reachability is in \NP{} for flat FIFO machines.
\end{cor}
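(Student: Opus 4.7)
The plan is to chain together Proposition~\ref{reachability} and the \NP{} upper bound of~\cite[Theorem 7]{EGM2012} for control-state reachability in flat FIFO machines. Given an instance of reachability $(A,(q_0,\fval_0),(q,\fval))$ with $A$ flat, I would first apply the construction from the proof of Proposition~\ref{reachability} to obtain, in linear time, the machine $B_{A,(q,\fval)}$ in which reaching the fresh control state $q_{\mathrm{stop}}$ from $(q_0,\fval_0)$ is equivalent to reaching $(q,\fval)$ from $(q_0,\fval_0)$ in $A$. That construction appends to $A$ a linear path of $O(|\fval|)$ fresh states and transitions (one send of $\#$ and one sequence of receives reading $\fval(\mathtt{i})\#$ per channel $\mathtt{i}$), so the resulting machine is polynomial in the size of the input.

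Next I would observe that $B_{A,(q,\fval)}$ is still flat: the appended states are fresh and lie on a simple acyclic path, so they do not create any new cycle in the machine graph, and the original cycles of $A$ are preserved. Invoking~\cite[Theorem 7]{EGM2012} on the control-state reachability instance $(B_{A,(q,\fval)},(q_0,\fval_0),q_{\mathrm{stop}})$ then yields an \NP{} decision procedure whose acceptance is, by the correctness of the reduction, equivalent to reachability of $(q,\fval)$ in $A$. The only thing to check is that the reduction is a genuine polynomial-time many-one reduction that preserves flatness, and both points are immediate from the construction; there is no deeper obstacle to overcome.
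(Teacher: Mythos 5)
Your proposal is correct and follows exactly the paper's intended argument: the corollary is obtained by composing the flatness-preserving linear reduction of Proposition~\ref{reachability} with the \NP{} upper bound for control-state reachability in flat FIFO machines from~\cite{EGM2012}. Nothing further is needed.
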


Now we define problems concerned with infinite behaviors.

\begin{problem}[Repeated reachability]
\emph{Given:} A FIFO machine $S$, two configurations $(q_0,\fval_0)$
and $(q,\fval)$. \emph{Question:} Is there an infinite run from $(q_0,\fval_0)$ such
that $(q,\fval)$ occurs infinitely often along this run?
\end{problem}
\begin{problem}[Cyclicity]
\emph{Given:} A FIFO machine $S$ and a configuration $(q,\fval)$.
\emph{Question:} Is $(q,\fval)$ reachable (by a non-empty run) from $(q,\fval)$?
\end{problem}
\begin{problem}[Repeated control-state reachability]
\emph{Given:} A FIFO machine $S$, a configuration $(q_0,\fval_0)$ and a
control-state $q$. \emph{Question:} Is there an infinite run from $(q_0,\fval_0)$ such
that $q$ occurs infinitely often along this run?
\end{problem}

We can easily obtain an  \NP{} upper bound for repeated reachability
in flat FIFO machines. A
non-deterministic Turing machine first uses the previous algorithm for
reachability (Corollary~\ref{cor:reachability}) to verify that $(q,\fval)$ is
reachable from $(q_0,\fval_0)$. Then the same algorithm is used again
to verify that $(q,\fval)$ is reachable from $(q,\fval)$ (i.e.
cyclic).

\begin{cor}
Repeated reachability is in  \NP{} for flat FIFO machines.
\end{cor}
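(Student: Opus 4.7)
The plan is to reduce repeated reachability to two instances of the reachability problem, each of which is in \NP{} by Corollary~\ref{cor:reachability}. The key observation is that for a flat (or indeed any) FIFO machine, an infinite run from $(q_0,\fval_0)$ visits $(q,\fval)$ infinitely often if and only if (i) $(q,\fval)$ is reachable from $(q_0,\fval_0)$ and (ii) $(q,\fval)$ is reachable from itself via a \emph{non-empty} run. The forward direction is immediate; for the converse, one concatenates a witness for (i) with arbitrarily many copies of a witnessing cycle for (ii) to build an infinite run on which $(q,\fval)$ recurs infinitely often.

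The nondeterministic polynomial-time algorithm would then proceed in two phases. In the first phase, it guesses and verifies a reachability certificate from $(q_0,\fval_0)$ to $(q,\fval)$, using the \NP{} procedure provided by Corollary~\ref{cor:reachability}. In the second phase, it verifies cyclicity of $(q,\fval)$. The only subtlety here is that reachability as stated allows the empty run, while cyclicity demands a non-empty one; I would handle this by additionally guessing a first outgoing transition $t$ from $q$, computing the configuration $(q',\fval')$ obtained by firing $t$ from $(q,\fval)$, and then invoking the reachability procedure from $(q',\fval')$ back to $(q,\fval)$ in the same flat FIFO machine. Since $q$ has only polynomially many outgoing transitions, this adds only polynomial overhead to the guess.

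The potential obstacle is ensuring the second phase still operates on a flat machine so that Corollary~\ref{cor:reachability} applies. But no modification of the machine is needed — only the source and target configurations change, together with the forced first transition — so flatness is trivially preserved. Combining the two \NP{} subqueries yields an \NP{} upper bound for repeated reachability, matching the statement of the corollary.
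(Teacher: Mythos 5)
Your proposal is correct and follows essentially the same route as the paper: decompose repeated reachability into a reachability query from $(q_0,\fval_0)$ to $(q,\fval)$ followed by a cyclicity check on $(q,\fval)$, each discharged by the \NP{} reachability procedure of Corollary~\ref{cor:reachability}. Your extra step of guessing a first outgoing transition to force the cycle to be non-empty is a small refinement the paper leaves implicit, but it does not change the argument.
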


Let us recall that the cyclicity property is \EXPSPACE-complete for
Petri nets~\cite{AF-ZB-INF-97,DBLP:journals/corr/DrewesL15} while
structural cyclicity (every configuration is cyclic) is in \PTIME\@.
Let us show that one may decide the cyclicity property for flat FIFO
machines in linear time.

\begin{lem}%
    \label{lem:cyclicity}
In a flat FIFO machine, a configuration $(q,\fval)$ is reachable from
$(q,\fval)$ iff there is an elementary loop labeled by
$\sigma$, such that $(q,\fval)  \fstrans{\sigma} (q,\fval)$.

\end{lem}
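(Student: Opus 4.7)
The backward direction is immediate, since the elementary loop $\sigma$ itself witnesses the required non-empty run. For the forward direction, suppose $(q,\fval) \fstrans{\rho} (q,\fval)$ with $\rho$ non-empty. First I would exploit flatness: the closed walk in $G_{S}$ induced by $\rho$ lies entirely in the strongly connected component of $q$, and because every vertex of $G_{S}$ lies on at most one directed cycle this component must coincide with the unique elementary loop $\sigma$ on $q$ (the case where $q$ lies on no cycle is excluded since $\rho$ is non-empty). Any closed walk based at $q$ inside this lone cycle is necessarily of the form $\sigma^{k}$ for some $k \geq 1$, so $(q,\fval) \fstrans{\sigma^{k}} (q,\fval)$.

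The heart of the proof is then to show that $(q,\fval) \fstrans{\sigma^{k}} (q,\fval)$ already forces $(q,\fval) \fstrans{\sigma} (q,\fval)$. I would reason channel by channel. Fix $\mathtt{c} \in \channels$, put $w = \fval(\mathtt{c})$, $x = x^{\sigma}_{\mathtt{c}}$, $y = y^{\sigma}_{\mathtt{c}}$, and let $w_{i}$ denote the content of $\mathtt{c}$ after $i$ iterations of $\sigma$ starting from $\fval$. The FIFO semantics yield $w_{i-1}\, y = x\, w_{i}$ for every $i$, hence by induction $w\, y^{k} = x^{k}\, w_{k} = x^{k}\, w$ and in particular $|x| = |y|$. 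Since iterating $\sigma^{k}$ any number of times also returns to $\fval$, we obtain $w\, y^{jk} = x^{jk}\, w$ for every $j \geq 1$, which in the limit gives the infinite word equation $x^{\omega} = w\, y^{\omega}$. Lemma~\ref{lem:solutionInfWordEqn} then provides a primitive word $z$ and a decomposition $x = x'x''$ with $x''x' \in z^{*}$, $w \in x^{*}x'$ and $y \in z^{*}$, from which a direct computation ($w\, y = x^{n} x' \cdot z^{j} = x^{n}(x'x'')x' = x^{n+1} x' = x\, w$) yields $w\, y = x\, w$. Combined with $w\, y = x\, w_{1}$ this forces $w_{1} = w$, and performing this argument for every channel simultaneously shows that one iteration of $\sigma$ returns to $\fval$.

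The main difficulty is precisely this combinatorial step: one must rule out spurious returning runs in which $\sigma$ is traversed several times but no single iteration already reaches $\fval$. The trick is to package the finite equations $w\, y^{jk} = x^{jk}\, w$ into the infinite equation $x^{\omega} = w\, y^{\omega}$ so that the already-proved Lemma~\ref{lem:solutionInfWordEqn} can be applied cleanly. Degenerate cases where $x$ or $y$ is empty on some channel are handled by noting that $|x| = |y|$, so both are empty simultaneously and then $w\, y = w = x\, w$ holds trivially.
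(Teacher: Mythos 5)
Your proof is correct and follows essentially the same route as the paper's: reduce the returning run to $\sigma^{k}$ by flatness, derive $|x|=|y|$ and the infinite-word equation $x^{\omega}=w\,y^{\omega}$, apply Lemma~\ref{lem:solutionInfWordEqn}, and conclude $w_{1}=w$ by the same algebraic computation (your cancellation from $x\,w = w\,y = x\,w_{1}$ is the paper's $w_{1}=x^{-1}wy=w$ in disguise). The only cosmetic difference is that you obtain $x^{\omega}=w\,y^{\omega}$ by passing to the limit in the finite equations $w\,y^{jk}=x^{jk}\,w$, where the paper invokes infinite iterability of $\sigma^{k}$; both are sound.
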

\begin{proof}
The implication from right to left ($\Leftarrow$) is clear. For the
converse, suppose that $(q,\fval)$ is reachable from $(q,\fval)$. Flatness
implies that $q$ belongs to a (necessarily unique and elementary) loop, say
a loop labeled by
$\sigma$. As $(q,\fval)$ is reachable from $(q,\fval)$, there exists a
sequence of transitions $\gamma$ such that $(q,\fval)
\fstrans{\gamma} (q,\fval)$. Now, still from flatness, $\gamma$ is
necessarily a power of $\sigma$, say $\gamma=\sigma^k$, $k \geq 1$.
Hence we have: $(q,\fval)  \fstrans{\sigma^k} (q,\fval)$. Let us write
$(q,\fval)  \fstrans{\sigma} (q,\fval_1)  \fstrans{\sigma} (q,\fval_2)
\fstrans{\sigma}\cdots \fstrans{\sigma} (q,\fval_k)=(q,\fval)$.
The effect of $\sigma$ on the channel contents
must preserve their initial length, so we have $\mid
x^{\sigma}_{\mathtt{c}} \mid =
\mid y^{\sigma}_{\mathtt{c}} \mid$ for every channel
$\mathtt{c}$.
Since $\sigma$ is fireable from $(q,\fval)$ and reaches $(q,\fval_1) $, let us show that $\fval_1 = \fval$.
If $x^{\sigma}_{\mathtt{c}} = \epsilon$ then
$x^{\sigma}_{\mathtt{c}}=y^{\sigma}_{\mathtt{c}}=\epsilon$ and $\fval_1 = \fval$.
%
So, let us suppose that
$x^{\sigma}_{\mathtt{c}} \neq \epsilon$ (this also implies $y^{\sigma}_{\mathtt{c}} \neq \epsilon$).
From $(q,\fval)  \fstrans{\sigma^k} (q,\fval)$, we know that the
sequence $\sigma^k$ is infinitely iterable and we have (1)
${({(x^{\sigma}_{\mathtt{c}})}^k)}^{\omega}={\fval_{\mathtt{c}}(
{(y^{\sigma}_{\mathtt{c}})}^k)}^{\omega}$ and since $k \geq 1$,
$x^{\sigma}_{\mathtt{c}} \neq \epsilon$ and $y^{\sigma}_{\mathtt{c}}
\neq \epsilon$, equality (1) implies that
${(x^{\sigma}_{\mathtt{c}})}^{\omega}=w{(y^{\sigma}_{\mathtt{c}})}^{\omega}$.
In the rest of this proof, we skip the superscript $\sigma$ and the
subscript $\mathtt{c}$ for simplicity. We now write $x^{\omega} = wy^{\omega}$.

Lemma~\ref{lem:solutionInfWordEqn} implies that there exists a
primitive word $z \neq \epsilon$ and two words $x',x''$ such
that $x=x'x''$, $x''x' \in z^*$, $w \in x^*x'$ and $y \in z^*$. Let us
write  $y=z^{d}$. Since $x''x' \in z^*$ and since $x''x'$ has the same
length as $y$, we deduce that $x''x' = z^{d} = y$. From $w \in
x^*x'$, we obtain that $w \in {(x'x'')}^*x'=x'{(x''x')}^*$, hence  $w \in x'{(z^{d})}^*$. Hence, we have:
\[y= x''x' = z^{d}, x=x'x''  \;\; \text{and}  \;\; w=x'z^{ds} \;\; \text{for some}  \;\; s \geq 0 \;\;  \;\; (2)  \]


Since $(q,\fval)  \fstrans{\sigma} (q,\fval_1)$, the firing equation
$w_1=x^{-1}wy$ is satisfied. By replacing $x,w$ by their values in
$(2)$ in the firing equation, we obtain:
\[w_1=x^{-1}wy=x^{-1}x'z^{ds}z^{d}=x''^{-1}z^{d}z^{ds}=x''^{-1}x''x'z^{ds}=x'z^{ds}=w.\]
Hence $(q,\fval)  \fstrans{\sigma} (q,\fval)$.
\end{proof}

To decide whether $(q,\fval)  \fstrans{*} (q,\fval)$, one tests
whether $(q,\fval)  \fstrans{\sigma} (q,\fval)$ for some elementary
loop $\sigma$ in the flat FIFO machine. Since the FIFO machine is flat,
$q$ can be in at most one loop, so only one loop need to be tested.
This gives
a linear time algorithm for deciding cyclicity.
%

\begin{cor}
Testing cyclicity can be done in linear time for flat FIFO machines.
\end{cor}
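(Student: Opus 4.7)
The plan is to invoke Lemma~\ref{lem:cyclicity} directly: cyclicity of $(q,\fval)$ reduces to testing whether $(q,\fval) \fstrans{\sigma} (q,\fval)$ for the unique elementary loop $\sigma$ through $q$ (if one exists). So the algorithm has three stages, each running in time linear in the combined size of $S$ and $(q,\fval)$.

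First, locate the elementary loop through $q$, if any. Since $S$ is flat, $q$ lies on at most one elementary loop, and this loop can be isolated by a single graph traversal of $G_{S}$ from $q$: follow outgoing edges until either $q$ is revisited (yielding $\sigma$) or the traversal terminates without returning to $q$ (in which case no loop exists and cyclicity fails unless $\fval$ is the trivial empty computation, which is excluded by the non-empty-run condition). This takes time linear in $|S|$.

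Second, once $\sigma$ is in hand, simulate firing it from $(q,\fval)$: maintain the current contents of each channel, checking at each retrieve action that the letter at the front of the channel matches the one demanded by the transition, and appending letters for send actions. Reject if any retrieve fails. Finally, compare the resulting channel valuation to $\fval$ letter by letter. Both steps are linear in $|\sigma| + \sum_{\mathtt{c}} |\fval(\mathtt{c})|$.

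Correctness is immediate from Lemma~\ref{lem:cyclicity}: if $(q,\fval)$ is reachable from itself, then such a $\sigma$ exists and is unique by flatness, so the test accepts; conversely, if the test accepts then $(q,\fval) \fstrans{\sigma} (q,\fval)$ directly witnesses cyclicity. There is no genuine obstacle here, as the substantive work — compressing an arbitrary $k$-fold iteration to a single iteration via Lemma~\ref{lem:solutionInfWordEqn} — was carried out in the proof of Lemma~\ref{lem:cyclicity}.
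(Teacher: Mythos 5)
Your proposal is correct and follows the same route as the paper: by Lemma~\ref{lem:cyclicity}, cyclicity reduces to testing whether the (unique, by flatness) elementary loop through $q$ fires from $(q,\fval)$ and returns to $(q,\fval)$, which is a single linear-time simulation. The paper states exactly this argument, only more tersely; your added details (graph traversal to locate the loop, explicit simulation and comparison) are the natural implementation and introduce no gap.
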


We are now going to show an NP upper bound for repeated control state reachability.

Let a loop be labeled with $\sigma$. Recall that for
each channel $\mathtt{c}$, we denote by $x^{\sigma}_{\mathtt{c}}$ (resp.~$y^{\sigma}_{\mathtt{c}}$) the projection of
$\sigma$ to letters retrieved from (resp.~sent to) the channel
$\mathtt{c}$.
Let us write $\sigma_\mathtt{c}$ for the projection of $\sigma$ on channel $\mathtt{c}$.

\begin{rem} The loop labeled by $\sigma$ is infinitely iterable from $(q,\fval)$ iff
    $\sigma_\mathtt{c}$ is infinitely iterable from
    $(q,\fval(\mathtt{c}))$, for every channel $\mathtt{c}$.
%
If $\sigma$ is infinitely iterable from $(q,\fval)$ then each
projection $\sigma_\mathtt{c}$ is also infinitely iterable from
$(q,\fval(\mathtt{c}))$. Conversely, suppose $\sigma_\mathtt{c}$ is
infinitely iterable from $(q,\fval(\mathtt{c}))$, for every channel
$\mathtt{c}$. For all $\mathtt{c} \neq \mathtt{c'}$, the actions of
$\sigma_\mathtt{c}$ and $\sigma_\mathtt{c'}$ are on different channels
and hence independent of each other. Since $\sigma$ is a shuffle of
$\set{\sigma_{\mathtt{c}} \mid \mathtt{c} \in F}$, we deduce that
$\sigma$ is infinitely iterable from $(q,\fval)$.
\end{rem}
%

We now give a characterization for a loop to be infinitely iterable.

\begin{lem}%
    \label{lem:characterizeInfIterLoops}
    Suppose an elementary loop is on a control state $q$ and is labeled by
    $\sigma$. It is infinitely iterable starting from the
    configuration $(q,\fval)$ iff for every channel $\mathtt{c}$,
    $x^{\sigma}_\mathtt{c}=\epsilon$
%
%
%
%
    or the following three conditions are true: $\sigma$ is fireable
    at least once from $(q,\fval)$,
    ${(x^{\sigma}_\mathtt{c})}^{\omega}=
    {\fval(\mathtt{c})\cdot(y^{\sigma}_\mathtt{c})}^{\omega}$ and
    $|x^{\sigma}_\mathtt{c}| \leq |y^{\sigma}_\mathtt{c}|$.
\end{lem}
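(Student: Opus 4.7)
The plan is to reduce to the single-channel case via the preceding remark, then handle each direction by induction on the number of iterations of $\sigma$. Fix a channel $\mathtt{c}$, and abbreviate $x := x^\sigma_\mathtt{c}$, $y := y^\sigma_\mathtt{c}$, $w := \fval(\mathtt{c})$. If $x = \epsilon$ then $\sigma_\mathtt{c}$ consists solely of send actions and is trivially infinitely iterable, matching the first disjunct; so assume $x \neq \epsilon$ throughout.

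For the forward direction, infinite iterability yields (1) immediately. For (3), note that after $k$ iterations the content on $\mathtt{c}$ has length $|w| + k(|y|-|x|)$; if $|x| > |y|$, this would become negative for large $k$, contradicting fireability, so $|x| \leq |y|$. For (2), after $k$ iterations the total word retrieved is $x^k$, drawn in order from the front of the channel --- i.e., from positions $1,\ldots,k|x|$ of the accumulated word $wy^k$. So $x^k$ is a prefix of $wy^k$ for every $k \geq 1$; letting $k \to \infty$ yields $x^\omega = wy^\omega$.

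For the converse, I would prove by induction on $k \geq 0$ the invariant that the content $w_k$ on $\mathtt{c}$ after $k$ iterations of $\sigma_\mathtt{c}$ satisfies (i) $|w_k| \geq |w|$, (ii) $w_k y^\omega = x^\omega$, and (iii) $\sigma_\mathtt{c}$ is fireable from $w_k$. The base case $k=0$ is exactly conditions (1) and (2). For the inductive step, $x$ is a prefix of $w_k y^\omega$ by (ii), and has length $|x| \leq |y| \leq |w_k|+|y|$, so $x$ is already a prefix of the finite word $w_k y$ and the $(k{+}1)$-th iteration gives $w_{k+1} = x^{-1}(w_k y)$. Then $|w_{k+1}| = |w_k| + |y| - |x| \geq |w_k|$ preserves (i), and $w_{k+1} y^\omega = x^{-1}(w_k y^\omega) = x^{-1} x^\omega = x^\omega$ preserves (ii).

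The main obstacle is re-establishing (iii), namely fireability from $w_{k+1}$. I would argue action-by-action inside one iteration: at the $r$-th retrieve of $\sigma_\mathtt{c}$, after $s_r$ preceding sends in $\sigma_\mathtt{c}$, the front of the channel is position $r$ of the word $w_{k+1}\cdot y[1..s_r]$, which by (ii) is a prefix of $x^\omega$; hence the front equals $x^\omega[r] = x[r]$, precisely the letter the retrieve expects. The only additional requirement is that the channel be non-empty at that moment, i.e., $r \leq |w_{k+1}| + s_r$; this follows from the analogous inequality $r \leq |w| + s_r$ in the first iteration (guaranteed by condition (1)) together with $|w_{k+1}| \geq |w|$. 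Thus (iii) transfers to $k{+}1$, completing the induction and establishing infinite iterability of $\sigma_\mathtt{c}$ from $w$, hence of $\sigma$ from $(q,\fval)$ by the preceding remark.
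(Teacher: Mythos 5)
Your proof is correct. The forward direction is essentially the paper's argument (length count for $|x|\le|y|$, the prefix relation $x^k \le \fval(\mathtt{c})y^k$ pushed to the limit for the $\omega$-equation). The converse, however, takes a genuinely different route. The paper invokes Lemma~\ref{lem:solutionInfWordEqn} to decompose $x=x'x''$, $x''x'=z^j$, $y=z^k$, $w\in x^*x'$ for a primitive $z$, and then proves by induction the explicit monotonicity property that firing $\sigma$ from content $wz^n$ yields $wz^{n+(k-j)}$. You instead bypass the word-combinatorics lemma entirely and carry the equation $w_ky^{\omega}=x^{\omega}$ itself as an inductive invariant, together with the length lower bound $|w_k|\ge|w|$; fireability of each successive iteration is then re-derived action-by-action from the invariant plus the per-retrieve non-emptiness inequalities $r\le|w|+s_r$ inherited from the first firing. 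Your argument is more elementary and self-contained (it needs only the $\omega$-equation as a black box, not the structure of its solutions), and the key steps check out: $x$ is a prefix of the finite word $w_ky$ because $|x|\le|y|\le|w_k|+|y|$, the invariant propagates via $w_{k+1}y^{\omega}=x^{-1}(w_ky^{\omega})=x^{\omega}$, and the front letter at the $r$-th retrieve is $x^{\omega}[r]=x[r]$ since $r\le|x|$. What the paper's heavier route buys is the explicit periodic form $\fval(\mathtt{c})\cdot z^{n(k-j)}$ of the reachable channel contents, which is exploited immediately afterwards in the corollary describing the regular periodic language and in the \NP{} algorithm of Proposition~\ref{lem:RCSRInNP}; your proof establishes the lemma but would not by itself yield that description.
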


\begin{proof}
Let $\ell$ be an elementary loop on a control state $q$ and labeled by
    $\sigma$.
If  $\sigma$ is infinitely iterable starting from the configuration
$(q,\fval)$ then for every channel $\mathtt{c}$, one has $|x_{\mathtt{c}}| \le
|y_{\mathtt{c}}|$. Otherwise, $|x_{\mathtt{c}}| > |y_{\mathtt{c}}|$ (the number of letters
retrieved is more than the number of letters sent in each
iteration), so the size of the channel content reduces with each
iteration, so there is a bound on the number of possible
iterations.
Since $\sigma$ is infinitely iterable from $(q,\fval)$,
the inequation ${(x^{\sigma}_\mathtt{c})}^n \leq
\fval(\mathtt{c})\cdot{(y^{\sigma}_\mathtt{c})}^n$
must hold for all $n\geq0$ (here, $\leq$ denotes the prefix relation).
If  $x_\mathtt{c} \neq \epsilon$, we may go at the limit and we obtain
${(x^{\sigma}_\mathtt{c})}^{\omega} \leq
\fval(\mathtt{c})\cdot{(y^{\sigma}_\mathtt{c})}^{\omega}$.


Finally, $\sigma$ is fireable at least once from $(q,\fval)$ since it is fireable infinitely from $(q,\fval)$.

Now conversely,  suppose that for every channel $\mathtt{c}$,
$x^{\sigma}_\mathtt{c} = \epsilon$ or the following three conditions
are true: $\sigma$ is fireable at least once from
$(q,\fval)$, ${(x^{\sigma}_\mathtt{c})}^{\omega}=
\fval(\mathtt{c})\cdot{(y^{\sigma}_\mathtt{c})}^{\omega}$ and
$|x^{\sigma}_\mathtt{c}| \leq |y^{\sigma}_\mathtt{c}|$.
%
%

For the rest of this proof, we fix a channel $\mathtt{c}$ and write
$x^{\sigma}_{\mathtt{c}}, y^{\sigma}_{\mathtt{c}},
\fval(\mathtt{c})$ as $x,y,w$ to simplify the notation.

If
$x = \epsilon$ then
$\sigma$ is infinitely iterable because it doesn't retrieve anything.
So assume that $x \neq \epsilon$.
%
%
%
We have $x^{\omega} = wy^\omega$ from the hypothesis. We infer from
Lemma~\ref{lem:solutionInfWordEqn} that there is a primitive word
$z \ne \epsilon$ and words $x',x''$ such that $x=x'x''$, $x''x' \in
z^{*}$, $w \in x^{*}x'$ and $y \in z^{*}$. Suppose $x''x'=z^{j}$ and
$y = z^{k}$. Since $|y| \ge |x| = |x''x'|$, we have $k \ge j$. Let us
prove the following monotonicity property: for all $n \geq 0$, $\sigma$ is
fireable from any channel content $w z^n$ and the resulting channel
content is $w
z^{n+(k-j)}$ (this will imply that for all $m\geq 1$,
$w \fstrans{\sigma^m} w z^{m \times (k-j)}$, hence that $\sigma$ is
infinitely iterable). We prove the monotonicity property by induction on
$n$.

For the base case $n=0$, we need to prove that $w  \fstrans{\sigma} w
z^{k-j}$. By hypothesis, $\sigma$ is fireable at least once from
$w$, hence
$w  \fstrans{\sigma} w'$ for some $w'$. We have $w'=x^{-1}w y = x^{-1}
x^{r}x'z^{k}$ for some $r \in \Nat$. Since $k \geq j$, we have $w'= x^{-1}
x^{r}x'z^{j}z^{k-j}=x^{-1} x^{r}x'(x''x')z^{k-j}=x^{-1}
x^r(x'x'')x'z^{k-j}=x^{-1} x^{r+1}x'z^{k-j}=x^{r}x'z^{k-j}=w z^{k-j}$.
%
%

For the induction step, we have to show that $\sigma$ is fireable from
channel content $w z^{n+1}$ and the resulting channel content is $w z^{n+1+(k-j)}$.
From induction hypothesis, we know that $\sigma$ is fireable from
channel content $w z^n$. Since $y=z^{k}$, the channel content after
firing a prefix $\sigma_{1}$ of $\sigma$ is
$x_{1}^{-1}wz^{n}z^{s}z_{1}$, where $x_{1}$ is some prefix of
$x$, $s \in \Nat$ and $z_{1}$ is some prefix of $z$. By induction on
$|\sigma_{1}|$, we can
verify that $\sigma_{1}$ can be fired from $wz^{n+1}$ and results in
$x_{1}^{-1}wz^{n+1}z^{s}z_{1}$. Hence, $\sigma$ can be fired from
$wz^{n+1}$ and results in $x^{-1}wz^{n+1}y =
x^{-1}x^{r}x'z^{n+1}z^{k} =
x^{-1}x^{r}x'z^{j}z^{n+1+k-j}=x^{-1}x^{r}x'x''x'z^{n+1+k-j}=x^{-1}x^{r+1}x'z^{n+1+k-j}=wz^{n+1+k-j}$.
This completes the induction step and hence proves the monotonicity
property.

Hence $\sigma$ is infinitely
iterable.
\end{proof}

The proof of Lemma~\ref{lem:characterizeInfIterLoops} provides a
complete characterization of the contents of a FIFO channel when a
loop is infinitely iterable. One may observe that the channel acts
like a counter (of the number of  occurrences of $z$).

\begin{cor}
    With the previous notations, the set of words in channel
    $\mathtt{c}$ that
occur in control-state $q$ is the regular periodic language
$\fval(\mathtt{c})\cdot{[z_\mathtt{c}^{k-j}]}^*$, when the elementary
loop containing $q$ is iterated arbitrarily many times.
\end{cor}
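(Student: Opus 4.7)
The plan is to read off the corollary directly from the monotonicity computation done inside the proof of Lemma~\ref{lem:characterizeInfIterLoops}, combined with the flatness argument already used in Lemma~\ref{lem:cyclicity}.

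First, I would observe that because the FIFO machine is flat, the control state $q$ lies in at most one elementary loop, namely $\ell$ labeled by $\sigma$. Consequently, every run that visits $q$ more than once must, between consecutive visits, trace exactly a complete iteration of $\sigma$ (the flatness argument from Lemma~\ref{lem:cyclicity}). Hence the set of channel contents in $\mathtt{c}$ observable at $q$ is exactly
\[
\{ \fval_m(\mathtt{c}) \mid m \geq 0 \text{ and } (q,\fval) \fstrans{\sigma^m} (q,\fval_m) \}.
\]

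Next, I would invoke the key fact proved inside Lemma~\ref{lem:characterizeInfIterLoops}: in the notation of that proof, writing $w=\fval(\mathtt{c})$, $x=x^{\sigma}_{\mathtt{c}}$, $y=y^{\sigma}_{\mathtt{c}}$, we obtained $x=x'x''$, $x''x' = z_{\mathtt{c}}^{j}$, $y=z_{\mathtt{c}}^{k}$ with $k \geq j$, and the monotonicity property
\[
w \fstrans{\sigma} w \cdot z_{\mathtt{c}}^{k-j}, \quad \text{and, by induction,} \quad w \fstrans{\sigma^m} w \cdot z_{\mathtt{c}}^{m(k-j)} \text{ for all } m \geq 0.
\]
Plugging $\fval_m(\mathtt{c}) = \fval(\mathtt{c}) \cdot z_{\mathtt{c}}^{m(k-j)}$ into the set above yields precisely the inclusion of the observed contents in the language $\fval(\mathtt{c}) \cdot (z_{\mathtt{c}}^{k-j})^{*}$.

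For the reverse inclusion (every element of $\fval(\mathtt{c}) \cdot (z_{\mathtt{c}}^{k-j})^{*}$ actually occurs), I would use the same monotonicity formula: for each $m \geq 0$, iterating $\sigma$ exactly $m$ times from $(q,\fval)$ is well-defined (the loop is infinitely iterable by the hypotheses already collected) and produces the word $\fval(\mathtt{c}) \cdot z_{\mathtt{c}}^{m(k-j)}$ at $q$. The degenerate case $x^{\sigma}_{\mathtt{c}} = \epsilon$ (where $\sigma$ does not retrieve from $\mathtt{c}$) is handled separately: then the channel content at $q$ after $m$ iterations is $\fval(\mathtt{c}) \cdot (y^{\sigma}_{\mathtt{c}})^{m}$, which we still present as $\fval(\mathtt{c}) \cdot (z_{\mathtt{c}}^{k-j})^{*}$ by setting $z_{\mathtt{c}} := y^{\sigma}_{\mathtt{c}}$ and $j := 0$. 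I do not anticipate any real obstacle: the only subtlety is matching the notation of the corollary to the case-analysis of Lemma~\ref{lem:characterizeInfIterLoops} so that $z_{\mathtt{c}}$ and the exponent $k-j$ are unambiguously defined in all cases, including when $\sigma$ performs no retrieve on $\mathtt{c}$.
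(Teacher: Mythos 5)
Your proposal is correct and follows exactly the route the paper intends: the paper gives no separate proof, merely noting that the monotonicity property established inside the proof of Lemma~\ref{lem:characterizeInfIterLoops} (namely $w \fstrans{\sigma^m} w\,z_{\mathtt{c}}^{m(k-j)}$) already yields the characterization, which is precisely the computation you read off, supplemented by the flatness argument and the $x^{\sigma}_{\mathtt{c}}=\epsilon$ case that the paper leaves implicit.
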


\begin{rem}
One may find other similar results on infinitely iterable loops in
many papers~\cite{finkel86,DBLP:journals/tcs/JeronJ93,DBLP:conf/sas/BoigelotGWW97,DBLP:journals/tcs/BouajjaniH99,FPS-ICOMP}.
Our Lemma~\ref{lem:characterizeInfIterLoops} is the same as~\cite[Proposition 5.1]{FPS-ICOMP} except that it (easily) extends
it to machines with \emph{multiple} channels and also provides the converse.
Lemma~\ref{lem:characterizeInfIterLoops} simplifies and improves
Proposition~5.4 in~\cite{DBLP:journals/tcs/BouajjaniH99} that used
the equivalent but more complex notion of \emph{inc-repeating
sequence}. Also, the results in~\cite{DBLP:journals/tcs/BouajjaniH99} don't give the simple representation of the regular periodic language.
\end{rem}

\begin{prop}%
    \label{lem:RCSRInNP}
    The repeated control state reachability problem is in
    \NP{} for flat FIFO machines.
\end{prop}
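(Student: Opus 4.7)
The plan is to extend the NP algorithm for reachability (Corollary~\ref{cor:reachability}) by a verification of the conditions from Lemma~\ref{lem:characterizeInfIterLoops}. Since the machine is flat, an infinite run visiting $q$ infinitely often must eventually enter and iterate forever the unique elementary loop $\ell$ containing $q$: such a loop must exist, otherwise $q$ would be visited at most once, and it is unique by flatness. Let $\sigma$ be its label.

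The nondeterministic algorithm guesses a path schema $\pi$ from $q_0$ ending at $q$, together with polynomially-bounded binary iteration counts for the loops of $\pi$, and then verifies: (a) the extended schema $\pi\cdot\ell$ is feasible from $(q_0,\fval_0)$, which, via Corollary~\ref{cor:reachability}, is an NP check and simultaneously witnesses that $\sigma$ is fireable at least once from the configuration reached at $q$; (b) for each channel $\mathtt{c}$, either $x^{\sigma}_{\mathtt{c}} = \epsilon$, or both $|x^{\sigma}_{\mathtt{c}}| \le |y^{\sigma}_{\mathtt{c}}|$ and the word equation ${(x^{\sigma}_{\mathtt{c}})}^{\omega} = \fval(\mathtt{c}) \cdot {(y^{\sigma}_{\mathtt{c}})}^{\omega}$ holds, where $\fval$ denotes the channel valuation reached at $q$. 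By Lemma~\ref{lem:characterizeInfIterLoops} this is exactly infinite iterability of $\ell$ from $(q,\fval)$.

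The dichotomy $x^{\sigma}_{\mathtt{c}} = \epsilon$ versus $x^{\sigma}_{\mathtt{c}} \neq \epsilon$ and the length inequality depend only on $\sigma$ and are checked in polynomial time. The real content is the word equation. By Lemma~\ref{lem:solutionInfWordEqn}, it forces $\fval(\mathtt{c})$ to lie in the regular language $x^{*}x'$ where $x = x^{\sigma}_{\mathtt{c}} = x'x''$, $x''x' \in z^{*}$ with $z$ primitive, and $y^{\sigma}_{\mathtt{c}} \in z^{*}$. Since $x$ and $y$ are fixed by $\sigma$, the split $x=x'x''$ and the primitive root $z$ can be computed (or guessed and checked) in polynomial time. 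What remains is to enforce $\fval(\mathtt{c}) \in x^{*}x'$ on the valuation produced by $\pi$.

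The principal obstacle is to integrate this periodicity constraint into the integer-programming formulation underlying the NP reachability algorithm of~\cite{EGM2012} without blowing up its size. This is enabled by the observation that the content of each channel at the end of a flat FIFO path schema is, as a function of the loop iteration counts, itself periodic with a period computable from the schema; membership of this content in $x^{*}x'$ therefore reduces to a fixed-size congruence and prefix condition that translates into polynomially many linear constraints over the already-guessed iteration counts. Combining these with the constraints used for reachability yields a system of polynomial size whose solvability lies in NP, proving the upper bound.
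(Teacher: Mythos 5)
Your overall strategy matches the paper's: reduce to checking, for the unique elementary loop $\ell$ containing $q$ with label $\sigma$, that some reachable configuration $(q,\fval)$ satisfies the conditions of Lemma~\ref{lem:characterizeInfIterLoops}, and use Lemma~\ref{lem:solutionInfWordEqn} to turn the word equation into the membership condition $\fval(\mathtt{c}) \in {(x^{\sigma}_{\mathtt{c}})}^{*}x'_{\mathtt{c}}$ for guessed $x'_{\mathtt{c}}, x''_{\mathtt{c}}, z_{\mathtt{c}}$. Up to that point you are on the paper's track.

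The gap is in your last paragraph, which is where all the technical work actually lives. You assert that the channel content at the end of the schema is, as a function of the loop iteration counts, ``periodic with a period computable from the schema,'' and that membership in $x^{*}x'$ therefore reduces to ``a fixed-size congruence and prefix condition'' expressible by polynomially many linear constraints. This is not established and is far from obvious: the final content of channel $\mathtt{c}$ is a suffix of a word of the shape $u_0\,s_1^{n_1}\,u_1 \cdots s_r^{n_r}\,u_r$ (the sends interleaved with iterated loop projections), cut at a position that is itself an affine function of the $n_i$. Deciding whether such a parametrized suffix lies in $x^{*}x'$ requires aligning the period $x$ against each $s_i$, handling the non-periodic segments $u_i$, and case-splitting on where the cut falls --- none of which is a ``fixed-size congruence,'' and none of which you prove. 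In effect you have restated the problem, not solved it. The paper sidesteps this entirely with a machine construction: it appends to $q$, for each channel $\mathtt{c}$, a gadget that sends a fresh delimiter $\#$, repeatedly consumes $x^{\sigma}_{\mathtt{c}}$ (then $x'_{\mathtt{c}}$, then $\#$) from $\mathtt{c}$ while copying the consumed letters into a fresh channel $\mathtt{c}'$, and finally runs $\sigma'$ (the copy of $\sigma$ on the primed channels); the resulting machine is still flat, the gadget enforces $\fval(\mathtt{c}) \in {(x^{\sigma}_{\mathtt{c}})}^{*}x'_{\mathtt{c}}$ \emph{and} the one-shot fireability of $\sigma$ on the very same configuration, and a single invocation of the \NP{} control-state reachability algorithm of~\cite{EGM2012} on the extended machine finishes the proof. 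If you want to keep your linear-constraint route you would need to actually carry out the word-combinatorial analysis you allude to; as written, the claimed translation is an unproved assertion at exactly the point where the argument is hardest.
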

\begin{proof}
    We describe an \NP{} algorithm. Suppose $S$ is the given flat FIFO
    machine and the control state
    $q$ is to be reached repeatedly. Suppose $q$ is in a loop labeled
    with $\sigma$. The algorithm first verifies that for every
    channel $\mathtt{c}$, $|x^{\sigma}_{\mathtt{c}}| \le |y^{\sigma}_{\mathtt{c}}|$ --- if this condition is
    violated, the answer is no. From
    Lemma~\ref{lem:characterizeInfIterLoops}, it is enough to verify
    that we can reach a configuration $(q,\fval)$ such that
    $\sigma$ can be fired at least once from
    $(q,\fval)$ and for
    every channel $\mathtt{c}$ for which $x^{\sigma}_{\mathtt{c}} \ne \epsilon$, we have
    ${(x^{\sigma}_{\mathtt{c}})}^{\omega} =
    \fval({\mathtt{c}})\cdot{(y^{\sigma}_{\mathtt{c}})}^{\omega}$. Since the case
    of $x^{\sigma}_{\mathtt{c}} = \epsilon$ can be handled easily, we assume in the rest
    of this proof that $x^{\sigma}_{\mathtt{c}} \ne \epsilon$ for every $\mathtt{c}$. For verifying that
    ${(x^{\sigma}_{\mathtt{c}})}^{\omega} =
    \fval({\mathtt{c}})\cdot{(y^{\sigma}_{\mathtt{c}})}^{\omega}$, the algorithm
    depends on Lemma~\ref{lem:solutionInfWordEqn}: the algorithm
    guesses $x'_{\mathtt{c}}, x''_{\mathtt{c}}, z_{\mathtt{c}} \in M^{*}$ such that
    $x^{\sigma}_{\mathtt{c}} = x'_{\mathtt{c}}x''_{\mathtt{c}}$ and
    $x''_{\mathtt{c}}x'_{\mathtt{c}},y^{\sigma}_{\mathtt{c}} \in
    z_{\mathtt{c}}^{*}$.
    We have $|x'_{\mathtt{c}}|, |x''_{\mathtt{c}}| \le |x^{\sigma}_{\mathtt{c}}|$ and $|z_{\mathtt{c}}| \le
    |y^{\sigma}_{\mathtt{c}}|$ so the guessed strings are of size bounded by the size of
    the input. It remains to verify that we can reach a configuration
    $(q,\fval)$ such that for every channel $\mathtt{c}$,
    $\fval({\mathtt{c}}) \in {(x^{\sigma}_{\mathtt{c}})}^{*}x'_{\mathtt{c}}$ and $\sigma$ can be fired at
    least once from $(q,\fval)$. For
    accomplishing these two tasks, we add
    a channel $\mathtt{c}'$ for every channel $\mathtt{c}$ in the FIFO
machine $S$.
    The following gadgets are appended to the control state $q$, assuming
    that there are $\mathtt{p}$ channels and $\#$ is a special letter not in
    the channel alphabet $M$. We denote by $\sigma'$ the sequence of
    transitions obtained from $\sigma$ by replacing every channel
    $\mathtt{c}$ by $\mathtt{c}'$. A transition labeled with
    $\mathtt{c}?x^{\sigma}_{\mathtt{c}}; \mathtt{c}'!x^{\sigma}_{\mathtt{c}}$ is to
    be understood as a sequence of transitions whose effect is to
    retrieve
    $x^{\sigma}_{\mathtt{c}}$ from channel $\mathtt{c}$ and send $x^{\sigma}_{\mathtt{c}}$ to channel $\mathtt{c}'$.

    \begin{center}
        \begin{tikzpicture}[>=stealth]
            \node[state, label=90:$q$] (q) at (0cm,0cm) {};
            \node[state] (q1) at ([xshift=1cm]q) {};
            \node[state] (q2) at ([xshift=1.8cm]q1) {};
            \node[state] (q3) at ([xshift=1cm]q2) {};
            \node[state] (q4) at ([xshift=1cm]q3) {};
            \node[state] (q5) at ([xshift=2cm]q4) {};
            \node[state] (q6) at ([xshift=1cm]q5) {};
            \node[state] (q7) at ([xshift=0.7cm]q6) {};
            \node[state] (q8) at ([xshift=1cm]q7) {};
            \node[state] (q9) at ([xshift=2cm]q8) {};
            \node[state, label=90:$q'$] (q10) at ([xshift=1cm]q9) {};
            \node[state, label=90:$q_{f}$] (q11) at ([xshift=1cm]q10) {};

            \draw[->] (q) -- node[auto=right] {$\mathtt{1} ! \#$} (q1);
            \path[->] (q1) edge [loop above] node {$\mathtt{1} ? x^{\sigma}_{\mathtt{1}}; \mathtt{1}' !  x^{\sigma}_{\mathtt{1}}$} (q1);
            \draw[->] (q1) -- node[auto=right] {$\mathtt{1} ? x'_{\mathtt{1}}; \mathtt{1}' ! x'_{\mathtt{1}}$} (q2);
            \draw[->] (q2) -- node[auto=right] {$\mathtt{1} ? \#$} (q3);

            \draw[->] (q3) -- node[auto=right] {$\mathtt{2} ! \#$} (q4);
            \path[->] (q4) edge [loop above] node {$\mathtt{2} ? x^{\sigma}_{\mathtt{2}}; \mathtt{2}'!x^{\sigma}_{\mathtt{2}}$} (q4);
            \draw[->] (q4) -- node[auto=right] {$\mathtt{2} ? x'_{\mathtt{2}}; \mathtt{2}'!x'_{\mathtt{2}}$} (q5);
            \draw[->] (q5) -- node[auto=right] {$\mathtt{2} ? \#$} (q6);

            \draw[->, dotted] (q6) -- (q7);

            \draw[->] (q7) -- node[auto=right] {$\mathtt{p} ! \#$} (q8);
            \path[->] (q8) edge [loop above] node {$\mathtt{p} ? x^{\sigma}_{\mathtt{p}};\mathtt{p}'!x^{\sigma}_{\mathtt{p}}$} (q8);
            \draw[->] (q8) -- node[auto=right] {$\mathtt{p} ? x'_{\mathtt{p}};\mathtt{p}'!x'_{\mathtt{p}}$} (q9);
            \draw[->] (q9) -- node[auto=right] {$\mathtt{p} ? \#$} (q10);
            \draw[->] (q10) -- node[auto=right] {$\sigma'$} (q11);
        \end{tikzpicture}
    \end{center}

    Finally our algorithm runs the \NP{} algorithm to check that the control state
    $q_{f}$ is reachable. We claim that the control state $q$ can be visited infinitely
    often iff our algorithm accepts. Suppose $q$ can be visited
    infinitely often. So the loop containing $q$ can be iterated
    infinitely often. Hence from Lemma~\ref{lem:characterizeInfIterLoops},
    we infer that $S$ can reach a configuration $(q,
    \fval)$ such that $\sigma$ can be fired at least once and for every channel $\mathtt{c}$,
    $|x^{\sigma}_{\mathtt{c}}| \le |y^{\sigma}_{\mathtt{c}}|$ and
    ${(x^{\sigma}_{\mathtt{c}})}^{\omega} =
    \fval({\mathtt{c}})\cdot{(y^{\sigma}_{\mathtt{c}})}^{\omega}$. From
    Lemma~\ref{lem:solutionInfWordEqn}, there
    exist $x'_{\mathtt{c}}, x''_{\mathtt{c}}, z_{\mathtt{c}} \in M^{*}$ such that $x^{\sigma}_{\mathtt{c}} =
    x'_{\mathtt{c}}x''_{\mathtt{c}}$, $\fval({\mathtt{c}}) \in
    {(x^{\sigma}_{\mathtt{c}})}^{*}x'_{\mathtt{c}}$ and
    $x''_{\mathtt{c}}x'_{\mathtt{c}},y^{\sigma}_{\mathtt{c}} \in z_{\mathtt{c}}^{*}$. Our algorithm can guess
    exactly these words $x'_{\mathtt{c}}, x''_{\mathtt{c}},z_{\mathtt{c}}$. It is easy to verify
    that from the
    configuration $(q,\fval)$, the configuration $(q',
    \fval')$ can
    be reached, where $\fval'({\mathtt{c}'}) = \fval({\mathtt{c}})$ for every
    $\mathtt{c}$. Since
    $\sigma$ can be fired from $(q,\fval)$, $\sigma'$ can be
    fired from $(q',
    \fval')$ to reach $q_{f}$. So our algorithm accepts.

    Conversely, suppose our
    algorithm accepts. Hence the control state $q_{f}$ is reachable.
    By construction, we can verify that the run reaching the control
    state $q_{f}$ has to visit a configuration $(q,\fval)$ such
    that for every channel $\mathtt{c}$, $\fval({\mathtt{c}}) \in
    {(x^{\sigma}_{\mathtt{c}})}^{*}x'_{\mathtt{c}}$
    and $\sigma$ can be fired at least once from $(q,\fval)$.
    Our algorithm also verifies that $|x^{\sigma}_{\mathtt{c}}| \le |y^{\sigma}_{\mathtt{c}}|$, $x^{\sigma}_{\mathtt{c}} =
    x'_{\mathtt{c}}x''_{\mathtt{c}}$ and $x''_{\mathtt{c}}x'_{\mathtt{c}},y^{\sigma}_{\mathtt{c}} \in z_{\mathtt{c}}^{*}$. Hence,
    from Lemma~\ref{lem:solutionInfWordEqn} and
    Lemma~\ref{lem:characterizeInfIterLoops}, we infer that the loop
    containing $q$ can be iterated infinitely often starting from the
    configuration $(q,\fval)$. Hence, there is a run that
    visits $q$ infinitely often.
\end{proof}

Let us now introduce the non-termination and the unboundedness problems.

\begin{problem}[Non-termination]
\emph{Given:} A FIFO machine $S$ and an initial configuration
$(q_0,\fval_0)$. \emph{Question:} Is there an infinite run from
$(q_0,\fval_0)$?
\end{problem}
\begin{problem}[Unboundedness]
\emph{Given:} A FIFO machine $S$ and an initial configuration
$(q_0,\fval_0)$. \emph{Question:} Is the set of configurations
reachable from $(q_0,\fval_0)$
infinite?
\end{problem}

\begin{cor}
\label{cor:nonTerminationUnboundedness}
For flat FIFO machines, the
non-termination and unboundedness problems are in \NP{}.
\end{cor}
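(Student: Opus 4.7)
The plan is to reduce both problems to (a mild variation of) Proposition~\ref{lem:RCSRInNP}, exploiting the fact that in a flat FIFO machine every infinite behavior is carried by a single elementary loop iterated indefinitely.

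For non-termination, I would argue as follows. Any infinite run visits some control state infinitely often, because the set of control states is finite. By flatness this state must lie on an (elementary) loop; conversely, any control state that is repeatedly reachable witnesses an infinite run. Hence non-termination reduces to the disjunction, over elementary loops $\ell$, of repeated control-state reachability for an (arbitrary) control state of $\ell$. The \NP{} algorithm first non-deterministically guesses such an $\ell$ and a control state $q$ on $\ell$, then calls the procedure of Proposition~\ref{lem:RCSRInNP}.

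For unboundedness, the characterization is slightly more refined. A flat machine has infinite reachability set iff some elementary loop $\ell$, labeled by $\sigma$, is infinitely iterable from a reachable configuration $(q,\fval)$ and the iteration strictly grows some channel. By the corollary following Lemma~\ref{lem:characterizeInfIterLoops}, iterating $\sigma$ from $(q,\fval)$ produces channel contents in $\fval(\mathtt{c})\cdot{[z_\mathtt{c}^{k-j}]}^{*}$, which is infinite iff $k>j$, equivalently $|x^{\sigma}_{\mathtt{c}}|<|y^{\sigma}_{\mathtt{c}}|$ for some channel $\mathtt{c}$. Consequently, unboundedness holds iff there is an elementary loop $\ell$ infinitely iterable from a reachable configuration, together with a channel $\mathtt{c}$ on which $\sigma$ has strictly more sends than receives. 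The \NP{} procedure therefore guesses $\ell$ and such a distinguished channel $\mathtt{c}$, verifies the strict inequality in polynomial time, and otherwise runs exactly the algorithm of Proposition~\ref{lem:RCSRInNP} (guessing $x'_{\mathtt{c}}, x''_{\mathtt{c}}, z_{\mathtt{c}}$ for each channel, constructing the gadget appended to $q$, and invoking the \NP{} reachability algorithm on the added control state~$q_f$).

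I do not expect any substantive obstacle. Both reductions are essentially bookkeeping on top of Proposition~\ref{lem:RCSRInNP}: the combinatorial content, namely the characterization of infinitely iterable loops via Lemma~\ref{lem:characterizeInfIterLoops} and the word-equation analysis of Lemma~\ref{lem:solutionInfWordEqn}, has already been done. The only point that requires a sentence of care is the equivalence between infinitely many distinct reachable configurations and the strict inequality $|x^{\sigma}_{\mathtt{c}}|<|y^{\sigma}_{\mathtt{c}}|$ on some channel, which follows directly from the periodic-language description of the channel contents generated by iterating the loop.
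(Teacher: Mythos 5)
Your proposal is correct and follows essentially the same route as the paper: non-termination is handled by guessing a control state and invoking the repeated control-state reachability procedure of Proposition~\ref{lem:RCSRInNP}, and unboundedness is characterized exactly as the existence of an infinitely iterable elementary loop whose effect is strictly increasing on some channel, checked by a polynomial-time test on the loop label plus the same \NP{} subroutine. The only cosmetic difference is that you additionally guess the loop itself for non-termination, whereas the paper guesses only the control state; this changes nothing.
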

\begin{proof}
First we deal with non-termination. A flat machine is non-terminating
iff there is an infinite run $r$. As there are only a finite number of
control-states, the run will visit at least one control state (say $q$)
infinitely often. Hence to solve non-termination, we can guess a
control state $q$ and use the \NP{} algorithm of
Proposition~\ref{lem:RCSRInNP} to check that $q$ can be visited infinitely
often. This gives an \NP{} upper bound for non-termination.

Next we deal with unboundedness. The \emph{effect of a loop $\ell$}
labeled with $\sigma$ is a vector of integers
$\vec{v}_{\ell} \in \Int^\channels$ such that
$\vec{v}_{\ell}(\mathtt{c})=|x^{\sigma}_\mathtt{c}| - |y^{\sigma}_\mathtt{c}
|$ for every $\mathtt{c}\in \channels$. If $\ell$ is an infinitely iterable loop, then $\vec{v}_{\ell}\geq
\vec{0}$, where $\geq$ is component-wise comparison and $\vec{0}$ is
the vector with all components equal to $0$.
If none of the loops in a flat FIFO
machine are infinitely iterable, then only finitely many configurations
can be reached. Hence, an unbounded flat FIFO machine has at least one
loop $\ell$ that is infinitely iterable, hence $\vec{v}_{\ell} \ge
\vec{0}$. If every infinitely iterable loop $\ell$ has
$\vec{v}_{\ell}=\vec{0}$, then none of the infinitely iterable loops
will increase the length of any channel content. Hence, there is a
bound on the length of the channel contents in any reachable
configuration, so only finitely many configurations can be reached.
Hence, in an unbounded flat FIFO machine, there is at least one
infinitely iterable loop $\ell$ with $\vec{v}_{\ell}\ne 0$.

Conversely, suppose a flat FIFO machine has an infinitely iterable loop
$\ell$ with $\vec{v}_{\ell}\ne \vec{0}$. Since $\ell$ is infinitely
iterable, $\vec{v}_{\ell}\ge \vec{0}$. Hence there is some channel
$\mathtt{c}$ such that $\vec{v}_{\ell}(\mathtt{c}) \ge 1$. So every
iteration of the loop $\ell$ will increase the length of the content
of channel $\mathtt{c}$ by at least $1$. Hence, infinitely many
iterations of the loop $\ell$ will result in infinitely many
configurations.  So a machine $S$ is unbounded iff there exists an
infinitely iterable loop $\ell$ such that $\vec{v}_{\ell} \ge \vec{0}$
and $\vec{v}_{\ell} \ne \vec{0}$. Hence to decide unboundedness, we
guess a control state $q$, verify that it belongs to a loop whose
effect is non-negative on all channels and strictly positive on at
least one channel and use the algorithm of Proposition~\ref{lem:RCSRInNP} to
check that $q$ can be visited infinitely often. This gives an \NP{}
upper bound for unboundedness.
\end{proof}

For a word $w$ and a letter $a$, $|w|_{a}$ denotes the number of
occurrences of $a$ in $w$. For a FIFO machine, we say that a letter $a$
is unbounded in channel $\mathtt{c}$ if for every number $B$, there
exists a reachable configuration $(q,\fval)$ with
$|\fval(\mathtt{c})|_a \geq B$.
A channel $\mathtt{c}$ is unbounded if at least one letter $a$ is
unbounded in $\mathtt{c}$.

\begin{problem}[Channel-unboundedness]
    \emph{Given:} A FIFO machine $S$, an initial configuration
    $(q_0,\fval_0)$ and a channel $\mathtt{c}$. \emph{Question:} Is
    the channel $\mathtt{c}$ unbounded from $(q_0,\fval_0)$?
\end{problem}
\begin{problem}[Letter-channel-unboundedness]
    \emph{Given:} A FIFO machine $S$, an initial configuration
$(q_0,\fval_0)$, a channel $\mathtt{c}$ and a letter $a$.
\emph{Question:} Is the letter $a$ unbounded in channel $\mathtt{c}$
from $(q_0,\fval_0)$?
\end{problem}

Now we give an \NP{} upper bound for letter channel unboundedness in
flat FIFO machines. We
use the following two results in our proof.
\begin{thmC}[{\cite[Theorem 3, Theorem 7]{EGM2012}}]%
    \label{thm:fifoPathSchemaPresburgerFormula}
    Let $S = p_{0}{(\ell_{1})}^{*}p_{1} \cdots
    {(\ell_{r})}^{*}p_{r}$ be a FIFO path schema. We can compute in
    polynomial time an existential Presburger formula
    $\phi(x_{1}, \ldots, x_{r})$ satisfying the following property:
    there is a run of $S$ in which the loop $\ell_{i}$ is iterated
    exactly $n_{i}$ times for every $i \in \set{1, \ldots, r}$ iff
    $\phi(n_{1}, \ldots, n_{r})$ is true.
\end{thmC}
    For vectors $\vec{k},\vec{x}$ and matrix $\vec{A}$, the expression $\vec{k} \cdot
    \vec{x}$ denotes the dot product and the expression $\vec{A}\vec{x}$ denotes the
    matrix product.
    \begin{lemC}[{\cite[Lemma 3]{Papadimitriou1981}}]%
    \label{thm:maximiseIntegerLinearProgram}
  Suppose $\vec{A}$ is an integer matrix and
    $\vec{k},\vec{b}$ are integer vectors satisfying the following
    property: for every $B \in \Nat$, there exists a vector $\vec{x}$
    of \emph{rational numbers} such that $\vec{A}\vec{x}\ge \vec{b}$
    and $\vec{k}\cdot\vec{x} \ge B$. If there is an integer vector $\vec{x}$
    such that $\vec{A}\vec{x} \ge \vec{b}$, then for every $B \in
    \Nat$, there exists an \emph{integer} vector $\vec{x}$ such that
    $\vec{A}\vec{x} \ge \vec{b}$ and $\vec{k}\cdot\vec{x} \ge B$.
\end{lemC}

\begin{prop}\label{channel-boundedness}
Given a flat FIFO machine, a letter $a$ and channel $\mathtt{c}$, the problem of
checking whether $a$ is unbounded in $\mathtt{c}$ is in \NP{}.
\end{prop}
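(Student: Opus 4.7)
The plan is to nondeterministically guess a path schema of $S$ and a stopping position within it, then reduce the question to checking whether an integer-coefficient linear function $L$ (counting occurrences of $a$ in $\mathtt{c}$) can be made arbitrarily large over the integer tuples of iteration counts corresponding to valid runs. The valid iteration counts are described by an existential Presburger formula coming from Theorem~\ref{thm:fifoPathSchemaPresburgerFormula}, and unboundedness of $L$ over such a set will be handled by Lemma~\ref{thm:maximiseIntegerLinearProgram}.

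First I would guess a path schema $P = p_0 {(\ell_1)}^* p_1 \cdots {(\ell_r)}^* p_r$ of $S$ starting at $(q_0,\fval_0)$ and a stopping index $j \in \set{0,1,\ldots,r}$, meaning that we stop at the end of $p_j$ after iterating $\ell_1, \ldots, \ell_j$ some number of times. Stopping strictly inside a single loop iteration or path segment contributes only an additive constant and therefore does not influence unboundedness. Since $S$ is flat, each loop appears at most once in $P$, so $P$ has size polynomial in $|S|$ and is guessable in nondeterministic polynomial time. By Theorem~\ref{thm:fifoPathSchemaPresburgerFormula}, I compute in polynomial time an existential Presburger formula $\phi(x_1, \ldots, x_j)$ that holds on $(n_1, \ldots, n_j)$ iff the truncation of $P$ at the end of $p_j$ admits a valid run in which $\ell_i$ is iterated exactly $n_i$ times. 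Writing $\sigma_i$ for the label of $\ell_i$, I also write down the integer linear function $L(x_1, \ldots, x_j) = |\fval_0(\mathtt{c})|_a + c + \sum_{i=1}^{j} x_i \bigl(|y^{\sigma_i}_\mathtt{c}|_a - |x^{\sigma_i}_\mathtt{c}|_a\bigr)$, where $c$ is the constant net contribution of the path segments $p_0, \ldots, p_j$ to the count of $a$ in $\mathtt{c}$. Since every reachable configuration lies on a run along some path schema, the letter $a$ is unbounded in channel $\mathtt{c}$ iff for some guess $(P, j)$ the function $L$ is unbounded over the integer solutions of $\phi$.

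To decide this last question in NP, I put $\phi$ in disjunctive normal form and nondeterministically choose one disjunct (of polynomial size, obtained by committing to one branch of every internal disjunction of $\phi$); this disjunct is equivalent to a linear system $\vec{A}\vec{z} \geq \vec{b}$ over integer variables $\vec{z} = (\vec{x}, \vec{y})$, where $\vec{x}$ collects the iteration counts and $\vec{y}$ collects the auxiliary existentially quantified variables of $\phi$. The objective $L$ depends only on $\vec{x}$, so we can write $L = \vec{k} \cdot \vec{z}$ with zero entries in the $\vec{y}$-coordinates of $\vec{k}$. By Lemma~\ref{thm:maximiseIntegerLinearProgram}, $L$ is unbounded over the integer solutions of this disjunct iff (i) the system $\vec{A}\vec{z} \geq \vec{b}$ has some integer solution, which is in NP by existential Presburger satisfiability, and (ii) the rational LP $\sup\set{\vec{k} \cdot \vec{z} : \vec{A}\vec{z} \geq \vec{b}}$ over rational $\vec{z}$ equals $+\infty$, which is in polynomial time by checking whether the recession cone of the feasible region contains a ray along which $\vec{k} \cdot \vec{z}$ is strictly positive. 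Combining a polynomial-sized nondeterministic guess with these checks places the problem in NP. The main technical delicacy is the isolation of the linear function $L$: one must argue that the only unbounded source of $a$-occurrences in $\mathtt{c}$ is full loop iterations, so that partial executions and the bounded choice of $(P, j)$ contribute only additive constants that can be safely absorbed.
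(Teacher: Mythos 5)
Your proposal is correct and follows essentially the same route as the paper: guess the relevant path schema/control state, invoke Theorem~\ref{thm:fifoPathSchemaPresburgerFormula} to get an existential Presburger description of the valid iteration counts, nondeterministically resolve disjunctions into a single linear system, and combine polynomial-time rational LP unboundedness with integer feasibility via Lemma~\ref{thm:maximiseIntegerLinearProgram}. Your version is slightly more explicit about the additive constants (initial channel contents, path-segment contributions, partial iterations) that the paper elides, but the argument is the same.
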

\begin{proof}
The letter $a$ is unbounded in $\mathtt{c}$ iff there exists a control state
$q$ such that for every number $B$, there is a reachable configuration
with control state $q$ and at least $B$ occurrences of $a$ in
channel $\mathtt{c}$ (this follows from definitions since there are
only finitely many control states). A non-deterministic polynomial
time Turing machine begins by guessing a control state $q$. If there are $r$ loops in
the path schema ending at $q$, the Turing machine computes an
existential Presburger formula $\phi(x_1, \ldots, x_r)$ satisfying the
following property: $\phi(n_1, \ldots, n_r)$ is true iff there is a
run ending at $q$ in which loop $i$ is iterated $n_i$ times for every
$i \in \set{1, \ldots, r}$. Such a formula can be computed in
polynomial time (Theorem~\ref{thm:fifoPathSchemaPresburgerFormula}).
 Let
$k_i$ be the number of occurrences of the letter $a$ sent to channel $\mathtt{c}$ by one iteration of the
$i$\textsuperscript{th} loop ($k_i$ would be negative if $a$ is
retrieved instead). If loop $i$ is iterated $n_i$ times for
every $i$ in a run, then at the end of the run there are $k_1 n_1 + \cdots +
k_r n_r$ occurrences of the letter $a$ in channel $\mathtt{c}$. To check that $a$
is unbounded in channel $\mathtt{c}$, we have to verify that there are tuples
$\langle n_1, \ldots, n_r \rangle$ such that $\phi(n_1, \ldots, n_r)$
is true and  $k_1 n_1 + \cdots + k_r n_r$ is arbitrarily large. This
is easier to do if there are no disjunctions in the formula $\phi(x_1,
\ldots, x_r)$. If there are any sub-formulas with disjunctions, the
Turing machine non-deterministically chooses one of the disjuncts and
drops the other one. This is continued till all disjuncts are
discarded. This results in a conjunction of linear inequalities, say
$A\vec{x} \ge \vec{b}$, where $\vec{x}$ is the tuple of variables
$\langle x_1, \ldots, x_r \rangle$. The machine then tries to maximize $k_1 x_1 +
\cdots + k_r x_r$ over rationals subject to the constraints $A \vec{x}
\ge \vec{b}$. This can be done in polynomial time, since linear
programming is in polynomial time. If the value $k_1 x_1 + \cdots + k_r x_r$
is unbounded above
over rationals subject to the constraints $A \vec{x} \ge \vec{b}$, then the machine invokes the \NP{} algorithm to check
if the constraints $A \vec{x} \ge \vec{b}$ has a feasible solution
over integers. If it does, then $k_1 x_1 + \cdots + k_r x_r$ is
also unbounded above over integers
(Lemma~\ref{thm:maximiseIntegerLinearProgram}). Hence, in this case, $a$ is unbounded in
channel $\mathtt{c}$.
\end{proof}

The above result also gives an \NP{} upper bound for
channel-unboundedness. We just guess a letter $a$ and check that it is
unbounded in the given channel.

We adapt the proof of \NP{}-hardness for the control state
reachability problem from~\cite{EGM2012} to prove \NP{} hardness for
reachability, repeated control state reachability, unboundedness and
non-termination.
\begin{thm}%
    \label{lem:reachNPHard}
    For flat FIFO machines, reachability,
    repeated control-state reachability, non-termination, unboundedness,
    channel-unboundedness and letter-channel-unboundedness are
    NP-hard.
\end{thm}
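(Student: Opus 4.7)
The plan is to reduce \textsc{3-Sat} simultaneously to each of the six listed problems by a single small family of reductions that preserves flatness throughout. The starting point is the \NP{}-hardness reduction of Esparza, Ganty and Majumdar for control-state reachability on flat (indeed single-path) FIFO machines: a 3-CNF formula $\varphi$ is encoded by a flat FIFO path schema $S_{\varphi}$ in which a distinguished control state $q_{\mathrm{sat}}$ is reachable iff $\varphi$ is satisfiable, using loop iteration counts and channel contents to guess and verify an assignment.

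First I would upgrade their construction to a reduction to (configuration) reachability. To do this I append to $S_{\varphi}$, after $q_{\mathrm{sat}}$, a drain path segment consisting, for every channel $\mathtt{c} \in \channels$ and every letter $a \in M$, of self loops\ldots actually of plain retrieve transitions that empty every channel by reading off the residual contents, terminating at a new state $q_{\mathrm{stop}}$. Because the EGM2012 construction produces channel contents of bounded structure at $q_{\mathrm{sat}}$, a single flat drain schema of polynomial size suffices. The extended machine $S_{\varphi}'$ is still flat (we only appended linear path segments and elementary loops disjoint from the rest), and $\varphi$ is satisfiable iff the configuration $(q_{\mathrm{stop}},\vec{\epsilon})$ is reachable from the initial configuration. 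This yields \NP{}-hardness of reachability.

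Next I would handle the five infinitary problems uniformly with one gadget. Attach to $q_{\mathrm{stop}}$ a fresh state $q_\infty$ via a transition $q_{\mathrm{stop}} \fstrans{} q_\infty$ and a single elementary self-loop $q_\infty \fstrans{\mathtt{d}!a} q_\infty$, where $\mathtt{d}$ is a fresh channel and $a$ a fresh letter. Let $S_{\varphi}''$ denote the result; it is still flat because $q_\infty$ lives in only this one loop and no other state does. In $S_{\varphi}''$ the self-loop is always fireable once $q_\infty$ is reached, and it cannot be reached otherwise. Therefore, from the initial configuration: $q_\infty$ is visited infinitely often iff $\varphi$ is satisfiable (repeated control-state reachability); an infinite run exists iff $\varphi$ is satisfiable (non-termination); the reachability set is infinite iff $\varphi$ is satisfiable (unboundedness); channel $\mathtt{d}$ is unbounded iff $\varphi$ is satisfiable (channel-unboundedness); and letter $a$ in $\mathtt{d}$ is unbounded iff $\varphi$ is satisfiable (letter-channel-unboundedness). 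Each transformation is computable in linear time in the size of $S_\varphi$, so \NP{}-hardness transfers to every problem in the list.

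The main obstacle I anticipate is not the gadgets, which are straightforward, but verifying that the EGM2012 reduction can be adapted so that $q_{\mathrm{sat}}$ is reached with channel contents that can be drained by a flat polynomial-size path segment; equivalently, that the reduction yields a \emph{configuration}-reachability instance rather than only a control-state-reachability one. This requires a careful inspection of the clause-checking phase of their construction, but is routine because the residual contents are determined by the loop iteration counts, and each channel can be drained by a flat sequence of reads whose length is polynomial in $|\varphi|$. Once this is established, all six lower bounds fall out of the single construction described above.
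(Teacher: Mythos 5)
Your proposal matches the paper's proof in all essentials: both adapt the \textsc{3Sat} encoding of~\cite{EGM2012} into a flat FIFO machine, append a flat cleanup segment that empties every channel so that control-state reachability becomes configuration reachability to $(q_{\mathrm{stop}},\vec{\epsilon})$, and then attach a single send-only self-loop at the final state so that satisfiability becomes equivalent, in one stroke, to repeated control-state reachability, non-termination, unboundedness, channel-unboundedness and letter-channel-unboundedness. The one obstacle you flag (whether the residual channel contents admit a flat polynomial-size drain) is resolved trivially in the paper's explicit gadgets: the variable and clause gadgets are loop-free and leave exactly one letter ($0$ or $1$) in each channel at the end, so the drain is just a flat sequence of two-way branches reading one bit per channel.
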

\begin{proof}
    We reduce from 3SAT\@. Given a 3-CNF formula
    $\mathrm{clause}_{1} \land \cdots \land \mathrm{clause}_{m}$ over
    variables $x_{1}, \ldots, x_{n}$, we construct a flat FIFO machine
with $n$ channels $\set{x_{1}, \ldots,x_{n}}$. There are two letters $0,1$ in the
    message alphabet. The channel $x_{i}$ is used to keep a guess of
    the truth assignment to the variable $x_{i}$. The flat FIFO
machine consists
    of the gadgets shown in Fig.~\ref{fig:reachNPHard}.
    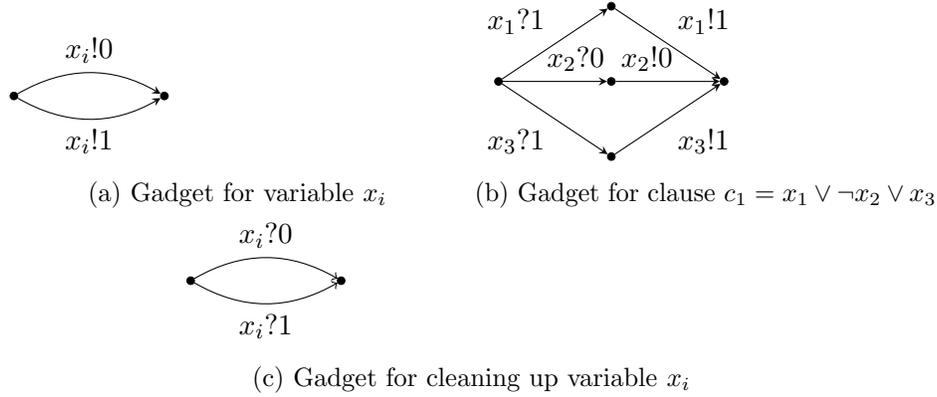
\begin{figure}[!htp]
        \centering
            \begin{subfigure}[t]{0.4\textwidth}
                \begin{tikzpicture}[>=stealth]
                    \node[state] (q0) at (0cm,0cm) {};
                    \node[state] (q1) at ([xshift=2cm]q0) {};

                    \draw[->] (q0) edge[bend left] node[auto=left] {$x_{i} !0$} (q1);
                    \draw[->] (q0) edge[bend right] node[auto=right] {$x_{i} !1$} (q1);
                \end{tikzpicture}
                \caption{Gadget for variable $x_{i}$}
            \end{subfigure}
            \begin{subfigure}[t]{0.4\textwidth}
                \begin{tikzpicture}[>=stealth]
                    \node[state] (q0) at (0cm,0cm) {};
                    \node[state] (q1) at ([xshift=3cm]q0) {};

                    \node[state] (qt) at ([yshift=1cm]barycentric cs:q0=1,q1=1) {};
                    \node[state] (qm) at (barycentric cs:q0=1,q1=1) {};
                    \node[state] (qb) at ([yshift=-1cm]barycentric cs:q0=1,q1=1) {};

                    \draw[->] (q0) -- node[auto=left] {$x_{1}?1$} (qt);
                    \draw[->] (qt) -- node[auto=left] {$x_{1}!1$} (q1);
                    \draw[->] (q0) -- node[auto=left,pos=0.7] {$x_{2}?0$} (qm);
                    \draw[->] (qm) -- node[auto=left,pos=0.3] {$x_{2}!0$} (q1);
                    \draw[->] (q0) -- node[auto=right] {$x_{3}?1$} (qb);
                    \draw[->] (qb) -- node[auto=right] {$x_{3}!1$} (q1);
                \end{tikzpicture}
                \caption{Gadget for clause $c_{1} = x_{1} \lor \lnot
                x_{2} \lor x_{3}$}
            \end{subfigure}
            \begin{subfigure}[t]{0.5\textwidth}
            \begin{tikzpicture}
                    \node[state] (q0) at (0cm,0cm) {};
                    \node[state] (q1) at ([xshift=2cm]q0) {};

                    \draw[->] (q0) edge[bend left] node[auto=left] {$x_{i} ?0$} (q1);
                    \draw[->] (q0) edge[bend right] node[auto=right] {$x_{i} ?1$} (q1);
            \end{tikzpicture}
            \caption{Gadget for cleaning up variable $x_{i}$}
            \end{subfigure}
            \caption{Gadgets used in the proof of Theorem~\ref{lem:reachNPHard}}%
        \label{fig:reachNPHard}
    \end{figure}
    The gadget for variable $x_{i}$ adds either $0$ (in the top transition)
    or $1$ (in the bottom edge) to channel $x_{i}$. At the end of this
    gadget, channel $x_{i}$ will have either $0$ or $1$. We will sequentially compose the
    gadgets for all variables. Starting from the initial control state
    of the gadget for variable $x_{1}$, we reach the final control
    state of the gadget for variable $x_{n}$ and the contents of the
    channels $x_{1}, \ldots, x_{n}$ determine a truth valuation.

    The gadget for the example clause $c_{1} =
    x_{1} \lor \lnot x_{2} \lor x_{3}$ (gadgets for other clauses
    follow similar pattern) is shown in Fig.~\ref{fig:reachNPHard}.
    The gadget checks that channel $x_{1}$ has $1$ (in the top path)
    or that channel $x_{2}$ has $0$ (in the middle path) or that
    channel $3$ has $1$ (in the bottom path). We
    append the clause gadgets to the end of the variable gadgets one
    after the other. All clauses are satisfied by the truth valuation
    determined by the contents of channels $x_{1}, \ldots, x_{n}$ iff
    we can reach the last control state of the last clause.

    The gadget for cleaning up variable $x_{i}$ is shown on the bottom
    in Fig.~\ref{fig:reachNPHard}. We append the cleanup gadgets to
    the end of the clause gadgets one after the other.

    The given 3-CNF formula is satisfiable iff the last control state
    of the cleanup gadget for variable $x_{n}$ can be reached with all
    channels being empty. Hence, this constitutes a reduction to the
    reachability problem. Note that in the flat FIFO machine
    constructed above, there are no loops, so all channels are bounded and none of the
    control states can be visited infinitely often. We add a self loop
    to the last control state of the cleanup gadget for variable
    $x_{n}$ that adds letter $1$ to channel $x_{1}$. If this loop can
    be reached, then it can be iterated infinitely often to add
    unboundedly many occurrences of the letter $1$ to channel
    $x_{1}$. Now, the given 3-CNF formula is satisfiable iff the
    constructed flat FIFO machine is unbounded iff channel $x_{1}$ is
    unbounded iff letter $1$ is unbounded in channel $x_{1}$ iff there
    is a non-terminating run iff the last control state of the cleanup
    gadget for variable $x_{n}$ can be reached infinitely often. Hence
    reachability, unboundedness, channel unboundedness, letter channel
    unboundedness, non-termination and repeated control state
    reachability are all \NP{}-hard.
\end{proof}
Hence we deduce the main result of this Section.

\begin{thm}[Most properties are \NP{}-complete]
For flat FIFO machines, reachability, repeated reachability, repeated control-state reachability, termination, boundedness, channel-boundedness and letter-channel-boundedness are \NP{}-complete. Cyclicity can be decided in linear time.
\end{thm}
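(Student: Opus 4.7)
The plan is simply to assemble the theorem from results we have already established in this section: for each listed problem both an \NP{} upper bound and an \NP{}-hardness lower bound are in hand.

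For the upper bounds I would proceed problem by problem. Reachability is handled by Corollary~\ref{cor:reachability}, which lifts the \NP{} algorithm for control-state reachability from~\cite{EGM2012} via the linear reduction of Proposition~\ref{reachability}. Repeated reachability is then in \NP{} by running that reachability algorithm twice, once to witness $(q_0,\fval_0)\fstrans{*}(q,\fval)$ and once to witness $(q,\fval)\fstrans{*}(q,\fval)$. Repeated control-state reachability is in \NP{} by Proposition~\ref{lem:RCSRInNP}, whose algorithm guesses the decomposition $x=x'x''$ and the period $z$ prescribed by Lemma~\ref{lem:solutionInfWordEqn} and then verifies the infinite-iterability conditions of Lemma~\ref{lem:characterizeInfIterLoops} by a reachability call on an augmented machine. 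Non-termination and unboundedness are in \NP{} by Corollary~\ref{cor:nonTerminationUnboundedness}, which reduce to guessing an appropriate control state (respectively one lying on an infinitely iterable loop with $\vec{v}_{\ell}\ge\vec{0}$ and $\vec{v}_{\ell}\ne\vec{0}$) and calling the algorithm for repeated control-state reachability. Channel-unboundedness and letter-channel-unboundedness are in \NP{} by Proposition~\ref{channel-boundedness}, which combines the Presburger description of Theorem~\ref{thm:fifoPathSchemaPresburgerFormula} with rational linear programming and the integer-feasibility result of Lemma~\ref{thm:maximiseIntegerLinearProgram}.

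For the matching \NP{}-hardness lower bounds, I would appeal to the single 3SAT reduction of Theorem~\ref{lem:reachNPHard}: the self-loop added on the last cleanup control state is exactly the device that lets the very same construction simultaneously encode satisfiability as reachability, as (non-)termination, as (channel/letter-channel-)(un)boundedness, and as repeated control-state reachability. Cyclicity is handled separately via Lemma~\ref{lem:cyclicity}: flatness forces $q$ to lie on at most one elementary loop, so one only fires its label $\sigma$ once from $(q,\fval)$ and checks whether $(q,\fval)$ is reproduced, which is clearly linear in the size of the input.

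There is no real obstacle here, since all the substantive work has been done in the preceding lemmas, propositions and corollaries; the only residual care needed is editorial, namely pairing each problem named in the statement with its matching upper- and lower-bound results, and reading ``termination'' and ``boundedness'' through the non-termination/unboundedness witnesses actually constructed in Corollary~\ref{cor:nonTerminationUnboundedness} and Theorem~\ref{lem:reachNPHard}.
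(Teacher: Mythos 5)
Your proposal is correct and matches the paper exactly: the theorem is stated as a direct consequence of the preceding results (Corollary~\ref{cor:reachability}, Proposition~\ref{lem:RCSRInNP}, Corollary~\ref{cor:nonTerminationUnboundedness}, Proposition~\ref{channel-boundedness}, Theorem~\ref{lem:reachNPHard} and Lemma~\ref{lem:cyclicity}), which is precisely the assembly you perform. The only point worth flagging --- present in the paper as well --- is that Theorem~\ref{lem:reachNPHard} does not explicitly list repeated reachability of a \emph{configuration} among the \NP{}-hard problems, so strictly speaking that one lower bound needs a one-line remark (e.g.\ that the reduction's target configuration on the added self-loop, or a trivially cycling variant of it, recurs iff the formula is satisfiable).
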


\section{Complexity of Reachability for Flat Lossy FIFO Machines}

Let us informally recall that lossy FIFO machines (often called lossy channel systems~\cite{DBLP:conf/cav/AbdullaBJ98}) are like FIFO machines except that the FIFO semantics allows the lost of any message in the FIFO channels in any configuration. The reachability set on each channel is then downward closed for the subword ordering, hence by Higman's Theorem, one deduces that it is regular; the knowledge of the regularity of the reachability set provides a semi-algorithm for deciding non-reachability by enumerating all inductive forward reachability invariants. By combining this semi-algorithm with a fair exploration of the reachability tree that enumerates all reachable configurations, one obtains an algorithm that decides reachability. Since reachability is Hyper-Ackermann-complete for lossy FIFO machines~\cite{DBLP:journals/toct/Schmitz16}, it is natural to study the complexity of reachability in \emph{flat} lossy FIFO machines.

%
%
%

Abdulla, Collomb-Annichini,  Bouajjani  and Jonsson studied the
verification of lossy FIFO machines by accelerating loops and
representing them by a class of regular expressions called Simple
Regular Expressions (SRE)~\cite{DBLP:conf/cav/AbdullaBJ98,DBLP:journals/fmsd/AbdullaCBJ04}.
Recall that SRE are exactly regular languages that are downward closed
(for the subword ordering). Suppose a lossy FIFO machine (with one channel to simplify notations) has a loop
labeled by $\sigma$ and $L$ is a SRE\@. Let $\sigma^*(L)$ denote the set
of channel contents reachable after executing the loop arbitrarily many
times, starting from channel contents that are in $L$. By analyzing the polynomial (quadratic)
algorithms for computing $\sigma^*(L)$, we obtain an upper bound
for the computation of the reachability set of a flat lossy FIFO
machine.

\begin{thm}\label{flatlossy}
The reachability set of a flat lossy FIFO machine $S$ is a SRE
that can be computed in exponential time.
\end{thm}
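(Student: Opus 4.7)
The plan is to enumerate the path schemas of $S$ and, for each one, to compute its reachability set as a SRE using Abdulla et al.'s quadratic loop-acceleration algorithm; the global reachability set is then the finite union of these schema-level SREs.

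First, since $S$ is flat, every state of $S$ lies in at most one elementary loop, so contracting each elementary loop of $G_S$ to a single vertex yields a DAG. The runs of $S$ are therefore captured by a finite family $\mathcal{P}$ of path schemas $p_0{(\ell_1)}^*p_1\cdots{(\ell_r)}^*p_r$, one for each source-to-vertex path in this DAG. The cardinality of $\mathcal{P}$ is bounded by $2^{O(|S|)}$, which is where the exponential factor in the statement comes from.

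Next, I would fix a path schema and compute its reachability set inductively on its length, maintaining the invariant that the set of channel valuations reachable at the current state is a SRE\@. The base case is the initial configuration, whose single valuation $\fval_0$ is trivially a SRE\@. Processing a path segment $p_i$ consists of finitely many send, receive and message-loss transitions; SRE are closed under each of these symbolic post-image operations (downward closure plus regularity) with only polynomial size increase. Processing a loop $\ell_i$ labelled $\sigma_i$ amounts to computing $\sigma_i^*(L)$ on the current SRE $L$; by the Abdulla--Collomb-Annichini--Bouajjani--Jonsson algorithm, this runs in polynomial time and returns a SRE\@. Iterating through the at most $r$ segments and loops of the schema thus yields a schema-level reachability SRE of polynomial size, computed in polynomial time.

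Finally, since SRE are closed under finite union, the reachability set of $S$ is $\bigcup_{\pi \in \mathcal{P}} \mathrm{Reach}(\pi)$, itself a SRE\@. The total cost is $|\mathcal{P}|$ times the per-schema polynomial cost, i.e.\ $2^{O(|S|)}$. The main obstacle I anticipate is controlling the SRE size under iterated acceleration: one must argue that the SRE returned by $\sigma_i^*(L)$ has size polynomial in $|L|+|\sigma_i|$, so that composing $r$ acceleration steps does not cause super-polynomial blow-up inside a single schema. The cited quadratic-time algorithm already implies such a polynomial output-size bound, so the exponential factor in the statement comes solely from the enumeration of $\mathcal{P}$, not from the work per schema.
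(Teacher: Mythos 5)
Your strategy is essentially the paper's: iterate Abdulla et al.'s quadratic computation of $\sigma^{*}(L)$ along the flat control structure and take finite unions (the explicit enumeration of path schemas is only a repackaging of ``iterating this computation on the flat structure''). The genuine gap sits exactly at the point you identify as the main obstacle and then wave away. A quadratic \emph{output-size} bound for one acceleration step does not give a polynomial bound for the composition of $r$ steps: from $|\sigma^{*}(L)| = O((|L|+|\sigma|)^{2})$ alone one only gets a bound of the shape $n^{2^{r}}$ after $r$ loops. Moreover the blow-up is real: $\sigma^{*}$ distributes over the products (disjuncts) of an SRE, and $\sigma^{*}(p)$ for a single product $p$ is itself a union of on the order of $|p|+|\sigma|$ products, so the \emph{number} of products in the schema-level SRE is multiplied by a polynomial factor at every loop. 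Hence the SRE computed for a single schema is exponential, not polynomial, in $|S|$ --- which is precisely why the paper states that the iteration yields ``a SRE of size exponential''. Your claim that the exponential factor comes solely from the enumeration of $\mathcal{P}$ is therefore incorrect.

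The theorem itself survives your route, because exponentially many schemas times an exponential-size SRE per schema is still exponential overall, but you must replace the polynomial-per-schema claim by a correct singly-exponential bound; otherwise nothing rules out doubly-exponential growth and the stated time bound is not established. The standard fix is to track the two size parameters separately: the length of each product grows only \emph{additively} by $O(|\sigma_{i}|)$ per accelerated loop, hence stays polynomial in $|S|$ along a schema, while the number of products is multiplied at each step by a factor polynomial in that length, giving at most $|S|^{O(r)} = 2^{O(|S|\log|S|)}$ products in total. With that accounting in place the rest of your argument (closure of SRE under the post-images of individual transitions and under finite union) is fine and matches the paper.
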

\begin{proof}
The SRE for $\sigma^*(L)$ where $L$ is a SRE can be computed in
quadratic time~\cite[Corollary~3]{DBLP:conf/cav/AbdullaBJ98}. By
iterating this computation on the flat structure, we obtain a SRE of
size exponential describing the reachability set.
\end{proof}
%
%

From Theorem~\ref{flatlossy}, we may deduce that reachability is in EXPTIME for flat lossy FIFO machines.  Since there is a linear algorithm for checking whether a SRE is
included in another one~\cite[Lemma 3]{DBLP:conf/cav/AbdullaBJ98}, we may use this algorithm for checking whether a word
$w=w_1w_2 \cdots w_n$ (where all $w_i$ are letters) is in a SRE $L$ by
testing whether the associated SRE $L_w =
(w_1+\epsilon).(w_2+\epsilon) \ldots (w_n+\epsilon)$ is included in $L$.
This proves that reachability of a configuration $(q,w)$ is in EXPTIME\@.
Let us show now that reachability is in  \NP{} for flat lossy FIFO machines.

We first prove that the control-state reachability problem is in \NP{} for front-lossy FIFO machines and then we will use an easy reduction of reachability in flat lossy FIFO machines to the control-state reachability problem for front-lossy FIFO machines.

A FIFO machine is said to be \emph{front-lossy} if at any time, any
letter at the front of any channel can be lost. A front-lossy FIFO
machine $S$ is a tuple $(Q,\channels,M,\Delta)$ as for standard FIFO
machines defined in Definition~\ref{def:fifoSystems}. Only the
semantics change for front-lossy machines. Suppose $w \in M^{*}$ is a
sequence and $\mathtt{c}$ be a channel; let ${(w)}_{\mathtt{c}}$ denote
the channel valuation that assigns $w$ to $\mathtt{c}$ and $\epsilon$
to all other channels. The front-lossy semantics is given by a
transition system $T_{S}$ as for standard semantics. For every transition
$q \fstrans{\mathtt{c}?a} q'$ of a front-lossy FIFO machine, the channel
valuation ${(wa)}_{\mathtt{c}}\cdot \fval$ results in the transition
$(q,{(wa)}_{\mathtt{c}}\cdot \fval) \fstrans{\mathtt{c}?a} (q',\fval)$
in $T_{S}$. Every transition $q \fstrans{\mathtt{c}!a} q'$
of $S$ and channel valuation $\fval \in {(M^{*})}^{\channels}$ results
in the transition $(q,\fval) \fstrans{\mathtt{c}!a} (q',\fval\cdot
\vec{a}_{\mathtt{c}})$ in $T_{S}$, as for standard FIFO machines.

We will prove that control state reachability in flat front-lossy FIFO
machines is in \NP{}, by adapting a construction from~\cite{EGM2012}.
We reproduce some definitions from~\cite{EGM2012} to be able to
describe our adaptation.
\begin{defiC}[\cite{EGM2012}]%
    \label{def:mhpda}
    A $d$-head pushdown automaton is a 9-tuple \[A=\langle S, \Sigma,
    \mathdollar, \Gamma, \Delta_{A}, \nu, s_{0}, \gamma_{0},
    S_{f}\rangle\]
    where
    \begin{enumerate}
        \item $S$ is a finite non-empty set of states,
        \item $\Sigma$ is the tape alphabet,
        \item $\mathdollar$ is a symbol not in $\Sigma$ (the endmarker
            for the tape),
        \item $\Gamma$ is the stack alphabet,
        \item $\Delta_{A}$, the set of transitions, is a mapping from
            $S \times (\Sigma \cup \set{\mathdollar} \cup
            \set{\epsilon}) \times \Gamma$ into finite subsets of
            $S \times \Gamma^{*}$,
        \item $\nu:S \to \set{1, \ldots, d}$ is the head selector
            function,
        \item $s_{0} \in S$ is the start state,
        \item $\gamma_{0} \in \Gamma$ is the initial pushdown symbol,
        \item $S_{f}\subseteq S$ is the set of final states.
    \end{enumerate}
\end{defiC}

\noindent
Intuitively, a $d$-HPDA has a a finite-state control ($S$), $d$
reading heads and a stack. All the reading heads read from the same
input tape. Each state $s \in S$ in the finite state control reads
from the head given by $\nu(s)$ and pops the top of the stack. The
transition relation then non-deterministically determines the new
control state and the sequence of symbols pushed on to the stack. The
read head moves one step to the right on the input tape. The size of a
$d$-HPDA is the number of bits needed to encode it, where the value
$\nu(s)$ is specified using binary encoding for every state $s$. We
write MHPDA for the family of $d$-HPDA for $d \ge 1$.

We write $(s,\gamma) \stackrel{\ket{\sigma}_{i}}{\hookrightarrow}
(s',w)$ whenever $(s',w) \in \Delta_{A}(s,\sigma,\gamma)$, where
$\nu(s)=i$. Let us fix a $d$-HPDA $A=\langle S, \Sigma, \mathdollar,
\Gamma, \Delta_{A}, \nu, s_{0}, \gamma_{0}, F\rangle$.  Define
$\mathcal{P}=\set{p:\set{1,\ldots,d} \to \Nat}$. An instantaneous
description (ID) of a is a triple $(s,\tau,p,w)\in S \times
\Sigma^{*}\mathdollar
\times \mathcal{P} \times \Gamma^{*}$. An ID $(s,\tau,p,w)$ denotes
that $A$ is in state $s$, the tape content is $\tau$, the reading head
$i$ is at the position $p(i)$ and the pushdown store content is $w$.
Let $\vdash$ be the binary relation between IDs defined as follows:
we have $(s,\tau,p,w\gamma) \vdash (s',\tau,p',ww')$ iff each of the
following conditions is satisfied:
\begin{enumerate}
    \item $(s,\gamma) \stackrel{\ket{\sigma}_{i}}{\hookrightarrow}
            (s',w')$ where $\nu(s)=i$ and the letter in position
            $p(i)$ of $\tau$ is $\sigma$.
    \item $p'(j)=p(j)+1$ if $j = \nu(s)$ and $p'(j)=p(j)$ otherwise.
\end{enumerate}
Let $\vdash^{*}$ denote the reflexive transitive closure of $\vdash$.

We say that reading head $i$ is off the tape in the ID $(s,\tau,p,w)$
if $p(i)$ is the last position of $\tau$, where the symbol is
$\mathdollar$. We say that $(s,\tau,p,w)$ is accepting iff $s \in
S_{F}$ and for every $i \in \set{1, \ldots, d}$, the reading head
$i$ is off the tape. A tape content $x\in \Sigma^{*}$ is accepted by
$A$ if $(s_{0},x\mathdollar,p_{0},\gamma_{0}) \vdash^{*}
(s,x\mathdollar,p,w)$ where $p_{0}(i)=1$ for all $i \in \set{1,
\ldots, d}$ and $(s,x\mathdollar,p,w)$ is some accepting ID\@. Let
$L(A)$ be the set of words in $\Sigma^{*}$ accepted by $A$.

A bounded expression is a regular expression
$\overline{w}=w_{1}^{*}\ldots w_{n}^{*}$, where each $w_{i}$ is a
non-empty word over $\Sigma$. With slight abuse of notation, we also
use $\overline{w}$ for the language defined by $\overline{w}$.

\begin{thmC}[\cite{EGM2012}]%
    \label{thm:mhpdaNonEmptiness}
    Let $\set{A_{i}}_{i \in \set{1, \ldots,q}}$ be a family of MHPDA such that
$A_{i}$ is a $d_{i}$-HPDA for every $i \in \set{1, \ldots, q}$. Let $d$ be a
constant such that $d\ge \max(d_{i} \mid i \in \set{1, \ldots,q})$. Let
$\overline{w}=w_{1}^{*} \ldots w_{n}^{*}$ be a bounded expression.  Checking
whether $\cap_{i=1}^{q}L(A_{i}) \cap \overline{w}$ is non-empty is in \NP\@.
\end{thmC}
We can now state our \NP{} upper bound result for flat front-lossy machines.
\begin{lem}%
    \label{lem:flatFrontLossyNP}
    The control state reachability problem for flat front-lossy machines is in \NP{}.
\end{lem}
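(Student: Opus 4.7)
The plan is to adapt the multi-head pushdown automaton construction of~\cite{EGM2012} that underlies Theorem~\ref{thm:mhpdaNonEmptiness}. First I would use flatness of $S$: any run reaching the target control state $q$ follows one of polynomially many path schemas of the form $p_{0}(\ell_{1})^{*}p_{1}\cdots(\ell_{r})^{*}p_{r}$ ending at $q$. So the nondeterministic algorithm starts by guessing such a schema and the associated bounded expression $\overline{w}=w_{1}^{*}\cdots w_{n}^{*}$ over the transition alphabet, where each $w_{i}$ is either a path segment traversed once or the label of an elementary loop.

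Next, for each channel $\mathtt{c}\in\channels$, I would construct an MHPDA $A_{\mathtt{c}}$ that reads a word over the transition alphabet and checks that the projection of the word onto the send/receive actions on $\mathtt{c}$ is compatible with the \emph{front-lossy} FIFO semantics on that channel. The construction is essentially the one from~\cite{EGM2012} for standard FIFO, augmented by a single extra kind of move: at any moment, $A_{\mathtt{c}}$ may perform an $\epsilon$-transition that pops the top of its stack without advancing any reading head. Semantically, this $\epsilon$-pop models the spontaneous loss of the letter currently at the front of channel $\mathtt{c}$, which is precisely what distinguishes front-lossy FIFO from standard FIFO. A symmetric adjustment in the state-change relation makes sure that the automaton remains a legitimate MHPDA.

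The correctness claim to be established is that the control state $q$ is reachable in $S$ iff there exists a path schema ending at $q$, with bounded expression $\overline{w}$, such that $\bigcap_{\mathtt{c}\in\channels} L(A_{\mathtt{c}}) \cap \overline{w} \neq \emptyset$. Once this equivalence is proved, the algorithm guesses the schema, constructs the MHPDAs in polynomial time, and invokes Theorem~\ref{thm:mhpdaNonEmptiness} to decide non-emptiness of the intersection in \NP{}, yielding the desired \NP{} upper bound.

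The main obstacle will be reconciling the LIFO nature of the pushdown with the FIFO discipline of the channels: the EGM2012 construction cannot simply push on sends and pop on receives, but instead uses a more intricate per-channel encoding (typically splitting the scan across two heads to linearise the match between sends and receives). I must then verify that interleaving $\epsilon$-pops into this machinery preserves both soundness and completeness. Concretely, every front-lossy run of $S$ should induce a legal schedule of $\epsilon$-pops in the $A_{\mathtt{c}}$'s, and conversely any input simultaneously accepted by all the $A_{\mathtt{c}}$'s and matching $\overline{w}$ must correspond to a genuine front-lossy run reaching $q$. Once this correspondence is carefully checked, membership in \NP{} is immediate from Theorem~\ref{thm:mhpdaNonEmptiness}.
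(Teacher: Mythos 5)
Your high-level route is the paper's: per-channel multi-head pushdown automata whose intersection with the bounded expression $\overline{w}$ supplied by flatness is tested for non-emptiness via Theorem~\ref{thm:mhpdaNonEmptiness}. But the one step that carries the actual content of the lemma --- how front-lossiness is encoded inside $A_{\mathtt{c}}$ --- does not work as you describe it. In the EGM-style encoding the stack does \emph{not} represent the channel content (as you yourself note when you observe that one cannot ``push on sends and pop on receives''); it merely counts the number of tape cells separating the lagging head $h$ from the leading head $H$, so that $h$ never overtakes $H$. The content of channel $\mathtt{c}$ is the sequence of as-yet-unmatched send symbols lying on the tape \emph{between} the two heads. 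Consequently an $\epsilon$-transition that pops the stack ``without advancing any reading head'' does not delete the front message: the corresponding send symbol is still on the tape ahead of $h$, and when $h$ next searches for the send matching a receive it will run into that symbol and, in the unmodified perfect-FIFO transition structure, be forced to match it or block. Worse, popping without moving $h$ makes the stack strictly smaller than the gap between the heads, so $h$ can exhaust its $\zeta$'s before reaching the genuine matching send; your automaton would then reject tapes that do correspond to legitimate front-lossy runs. The $\epsilon$-pop is therefore neither a sound nor a complete model of message loss.

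The correct (and strikingly small) modification, which is what the paper does, is to let the \emph{lagging head} absorb the loss: in the states $q_{h}^{i}$ where $h$ scans forward for the send $\mathtt{c}!a_{i}$ consumed by the current receive, the self-loops are allowed to skip send symbols $\mathtt{c}!a_{j}$ as well as receive symbols $\mathtt{c}?a_{j}$, each skip advancing $h$ and popping one $\zeta$ so the gap invariant is preserved (see Figure~\ref{fig:frontLossy}). A skipped send is precisely a message lost from the front of the channel before it could be retrieved; this is the only change relative to the perfect-FIFO construction of~\cite{EGM2012}. The rest of your plan --- fixing the bounded expression, reducing to intersection non-emptiness, normalising the initial channel contents by prepending a path --- is fine, but without replacing the $\epsilon$-pop by head-$h$ skips the reduction does not capture the front-lossy semantics.
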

\begin{proof}
Given a flat front-lossy machine $(Q,\channels,M,\Delta)$ with $p$
channels, an initial configuration $(q_0,\fval_0)$ and a target
control state $q$, we construct $(p+1)$ $2$-HPDAs $A_0,A_\mathtt{1},
\ldots, A_\mathtt{p}$ and a bounded exression $\overline{w}$ over
$\Delta$ such that some configuration $(q,\fval)$ is reachable from
$(q_0,\fval_0)$ iff $\cap_{i=1}^{q}L(A_{i}) \cap \overline{w}$ is
non-empty. We assume without loss of generality that
$\fval_{0}(\mathtt{c})=\epsilon$ for all channels $\mathtt{c}$; if
not, we prepend paths that add the required symbols to each channel.
Since the given front-lossy machine is flat, there is a bounded
expression $\overline{w}$ over $\Delta$ whose language is the set of
paths from $q_{0}$ to $q$; this is the bounded expression required.
Next we describe the $2$-HPDAs.

The automaton $A_0$ simply checks whether the tape content
is in the language of $\overline{w}$. This can be done with a single
reading head and without a stack. For every channel $\mathtt{c}$, the
2-HPDA $A_\mathtt{c}$ will check that the sequence of transitions in
its tape is viable, with respect to the contents of channel
$\mathtt{c}$. Suppose $M=\set{a_1, a_2, \ldots, a_n}$.
Figure~\ref{fig:frontLossy} illustrates $A_\mathtt{c}$, which uses the pushdown symbol
$\zeta$ and two reading heads $H$ and $h$. A transition labeled $\ket{\set{!}\times M}_H, \epsilon/\zeta$ is to be read as follows: the
reading head $H$ can read any transition $t$ in $\Delta$, provided $t$
sends some letter to channel $\mathtt{c}$, nothing is popped from the
stack and $\zeta$ is pushed on to the stack. At any state of
$A_\mathtt{c}$, there are self loops that can read any
transition in $\Delta$ that does not interact with channel $\mathtt{c}$. The self
loops are not shown in the figure to reduce clutter.

In state $q_{H}$, head $H$ reads all transitions in $\Delta$ that is
of the form $\mathtt{c}!a_{i}$ for any $i \in \set{1, \ldots, n}$,
until a transition of the form $\mathtt{c}?a_{i}$ for some $i \in
\set{1, \ldots, n}$ or $\mathdollar$ is read. When a transition of the
form $\mathtt{c}?a_{i}$ is read, control jumps to $q_{h}^{i}$. In
$q_{h}^{i}$, the head $h$ looks for the symbol
$\mathtt{c}!a_{i}$, skipping symbols of the form
$\mathtt{c}?a_{j}$ or $\mathtt{c}!a_{j}$. Intuitively,
if a message $a_{i}$ is to be retrieved from channel $\mathtt{c}$, it
should have been sent previously. The head $h$ looks for the
transition that sent $a_{i}$ previously. It skips over any
$\mathtt{c}?a_{j}$ since they don't denote send action. It skips over
$\mathtt{c}!a_{j}$ since in the front-lossy semantics, a message at
the front of the queue can be lost. Non-deterministcally, zero or more
occurrences of $\mathtt{c}!a_{j}$ are skipped; then
$\mathtt{c}!a_{i}$ is read and the control is back to $q_{H}$. To
ensure that the head $h$ does not move beyond $H$, we use the stack.
Whenever $H$ moves, it pushes $\zeta$ onto the stack and whenever $h$
moves, it pops $\zeta$ from the stack. In any reachable configuration
not in the state $q_f$, $A_{\mathtt{c}}$ maintains the invariant that
the number of symbols between $h$ and $H$ is equal to the number of
$\zeta$'s on the stack.  Because of this invariant, head $H$ will be
the first to read $\mathdollar$ in which case the control is updated
to $q_{f}$. Then all the remaining symbols are read by head $h$ until
it also reads $\mathdollar$. This way, every retrieve action is
matched with a unique send action, so the sequence of actions in the
tape of $A_{\mathtt{c}}$ is viable with respect to channel
$\mathtt{c}$. Since this is done for every channel, we infer that the
control state $q$ is reachable from $(q_{0},\fval_{0})$ iff
$\cap_{i=1}^{q}L(A_{i}) \cap \overline{w}$ is non-empty. Since
checking this later conditions is in \NP{} (by
Theorem~\ref{thm:mhpdaNonEmptiness}), we conclude that the
control state reachability problem for flat front-lossy systems is in
\NP\@.
\end{proof}
There is only a small difference between the construction given in the proof of
Lemma~\ref{lem:flatFrontLossyNP} and the one given in Section~IX of~\cite{EGM2012}. In our constrction, the self loops on
$q_{h}^{1}, \ldots, q_{h}^{n}$ in Figure~\ref{fig:frontLossy} can read
$\set{?}\times M$ as well as $\set{!}\times M$, whereas as in~\cite{EGM2012}, only $\set{?} \times M$ can be read.\\
\begin{figure}
\centering
\begin{tikzpicture}
    \node[circle, draw=black] (qH) at (0cm,0cm) {$q_H$};
    \node[circle, draw=black] (q1) at ([xshift=-4cm,yshift=-3cm]qH) {$q_{h}^{1}$};
    \node[circle, draw=black] (q2) at ([yshift=-4cm]qH) {$q_{h}^{2}$};
    \node[circle, draw=black] (q3) at ([xshift=4cm,yshift=-3cm]qH) {$q_{h}^{n}$};
    \node[circle, draw=black, double] (qf) at ([xshift=4cm]qH) {$q_{f}$};

    \draw[->] (qH) edge[loop above] node {$[\set{!}\times M\rangle_H, \epsilon/\zeta$} (qH);
    \draw[->] (qH) --node[auto=left,sloped, pos=0.8] {$[?a_1\rangle_H,\epsilon/\zeta$} (q1);
    \draw[->] (q1) edge[bend left] node[auto=left, rotate=45,pos=0.7, yshift=0.3cm] {$[!a_1\rangle_H,\zeta/\epsilon$} (qH);
    \draw[->] (q1) edge[loop below] node[auto=left] {$[\set{?,!} \times M\rangle_h,\zeta/\epsilon$} (q1);

    \draw[->] (qH) --node[auto=left,rotate=90,pos=0.9, yshift=-0.25cm] {$[?a_2\rangle_H,\epsilon/\zeta$} (q2);
    \draw[->] (q2) edge[bend left] node[auto=left, rotate=90,pos=0.7] {$[!a_2\rangle_H,\zeta/\epsilon$} (qH);
    \draw[->] (q2) edge[loop below] node[auto=left] {$[\set{?,!} \times M\rangle_h,\zeta/\epsilon$} (q2);

    \draw[->] (qH) --node[auto=left,sloped, pos=0.4] {$[?a_n\rangle_H,\epsilon/\zeta$} (q3);
    \draw[->] (q3) edge[bend left] node[auto=left, rotate=-40,pos=0.2, yshift=-0.3cm] {$[!a_n\rangle_H,\zeta/\epsilon$} (qH);
    \draw[->] (q3) edge[loop below] node[auto=left] {$[\set{?,!} \times M\rangle_h,\zeta/\epsilon$} (q3);

    \draw[->] (qH) --node[pos=0.5, auto=left] {$[\mathdollar\rangle_{H}$} (qf);
    \draw[->] (qf) edge[loop above] node {$[\set{?,!} \times M\rangle_{h}$} (qf);
    \draw[->] (qf) edge[loop below] node {$[\mathdollar\rangle_{h}$} (qf);

    \draw[->] ([xshift=-1cm,yshift=0.3cm]qH.center) -- (qH);

    \draw[loosely dotted] (q2) -- (q3);
\end{tikzpicture}
\caption{The 2-HPDA $A_\mathtt{c}$}%
\label{fig:frontLossy}
\end{figure}

Hence we deduce that reachability is in  \NP{} for both flat
front-lossy FIFO machines and flat lossy FIFO machines. To achieve
these results, we reduce reachability (in both models) to
control-state reachability in a front-lossy FIFO machine.

To test
whether $(q,\fval)$ is reachable from $(q_0,\fval_0)$ in a flat
front-lossy FIFO machine $S$, we complete $S$ into the front-lossy FIFO machine $S_{(q,\fval)}$
by adding a new path in $S$ (in a similar way as the added path in
the proof of Proposition~\ref{reachability}), from $q$ to
$q_\mathrm{stop}$,
that essentially consumes $\fval$. We obtain that $(q,\fval)$ is
reachable in $S$ iff $q_\mathrm{stop}$ is reachable in the front-lossy FIFO
machine $S_{(q,\fval)}$.

To test whether $(q,\fval)$ is reachable from
$(q_0,\fval_0)$ in a flat lossy FIFO machine $S'$, we complete $S'$
into the lossy FIFO machine $S'_{(q,\fval)}$ by adding a new path in
$S'$ similarly as above from $q$ to $q_{\mathrm{stop}}$, such that the
added path essentially consumes $\fval$. We obtain that $(q,\fval)$ is
reachable in $S'$ iff $q_\mathrm{stop}$ is reachable in the flat lossy FIFO
machine $S'_{(q,\fval)}$. Then we observe that control state
reachability in flat lossy machines reduces to control state
reachability in flat front-lossy machines. Hence, reachability in flat
lossy FIFO machines reduces to control state reachability in flat
front-lossy FIFO machines, which is in \NP{}.

Finally, we can use the same reduction as
the one used in the proof of Theorem~\ref{lem:reachNPHard} to prove
\NP{}-hardness in flat (front-)lossy FIFO machines. The only way to 
reach the target state
in the FIFO machine of that reduction is to not have any losses in the
entire operation of the machine, in addition to satisfying the given
3-CNF formula. Hence, the introduction of (front-)lossy semantics will 
not change anything in the proof of Theorem~\ref{lem:reachNPHard}. So
we may deduce the following.

\begin{thm}
Reachability is \NP{}-complete for both flat front-lossy FIFO machines and flat lossy FIFO machines.
\end{thm}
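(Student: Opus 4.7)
My plan is to obtain the two directions (upper bound, lower bound) separately, with the upper bound reducing to the already established \NP{} upper bound for control-state reachability in flat front-lossy FIFO machines (Lemma~\ref{lem:flatFrontLossyNP}) and the lower bound reusing the 3SAT reduction of Theorem~\ref{lem:reachNPHard}.

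For the \NP{} upper bound, I would first handle the front-lossy case. Given a flat front-lossy FIFO machine $S$ and configurations $(q_{0},\fval_{0})$ and $(q,\fval)$, I would construct $S_{(q,\fval)}$ from $S$ exactly in the spirit of Proposition~\ref{reachability}: append to the control state $q$ a fresh linear path that, channel by channel, uses the $\mathtt{c}!\#/\mathtt{c}?\fval(\mathtt{c})\#$ trick to force the channel to contain exactly $\fval(\mathtt{c})$ at the moment we start the retrieval, and then ends in a fresh $q_{\mathrm{stop}}$. Flatness is preserved because the appended fragment is a single simple path. The key verification is that $(q,\fval)$ is reachable in $S$ iff $q_{\mathrm{stop}}$ is reachable in $S_{(q,\fval)}$: one direction is immediate, and for the other the endmarker $\#$ (which is not in $M$) prevents the front-lossy semantics from allowing any channel to match $\fval(\mathtt{c})\#$ unless the contents are literally $\fval(\mathtt{c})\#$ at that time. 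Control-state reachability for flat front-lossy FIFO machines is in \NP{} by Lemma~\ref{lem:flatFrontLossyNP}, which yields \NP{} reachability for flat front-lossy FIFO machines.

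For flat (ordinary) lossy FIFO machines, I would use the same construction, together with the observation that control-state reachability in a flat lossy FIFO machine reduces to control-state reachability in a flat front-lossy FIFO machine (every lossy step can be simulated in front-lossy semantics by first rotating the offending message to the front using auxiliary gadgets or, more simply, by observing that a reachable configuration in the lossy semantics is reachable in the front-lossy one up to appropriate extra receive transitions). Composing this with the previous reduction gives a polynomial-time reduction of reachability in flat lossy FIFO machines to control-state reachability in flat front-lossy FIFO machines, and hence the \NP{} upper bound.

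For the \NP{} lower bound, I would reuse verbatim the reduction of Theorem~\ref{lem:reachNPHard} from 3SAT, and argue that the introduction of (front-)lossy semantics does not create spurious solutions. The crucial observation is that the reduction uses each channel $x_{i}$ essentially as a one-slot register: exactly one symbol is written during the variable gadget, each clause gadget reads that symbol and writes the same symbol back, and the cleanup gadget finally empties all channels. Consequently, any losing move would immediately invalidate the subsequent matching read, making the target control state unreachable. Thus the reduction is sound for the (front-)lossy semantics as well. The main (and essentially only) obstacle is spelling out this non-interference argument carefully; once that is done, \NP{}-hardness follows and combining with the upper bound yields \NP{}-completeness in both models.
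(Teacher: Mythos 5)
Your proposal follows the paper's proof essentially verbatim: the upper bound reduces reachability in both models to control-state reachability in flat front-lossy machines via the appended $\mathtt{c}!\#/\mathtt{c}?\fval(\mathtt{c})\#$ consuming path and then invokes Lemma~\ref{lem:flatFrontLossyNP}, while the lower bound reuses the 3SAT reduction of Theorem~\ref{lem:reachNPHard} together with the observation that losses cannot create spurious accepting runs. One small caveat: under the front-lossy semantics each single-letter retrieve may silently skip a prefix of the channel, so the appended path can complete whenever $\fval(\mathtt{c})$ occurs as a subword of the content, not only when the content is \emph{literally} $\fval(\mathtt{c})\#$ --- this is harmless (it matches the intended lossy notion of reaching $(q,\fval)$, and the paper itself only says the added path ``essentially consumes'' $\fval$), but your justification of the backward direction should be phrased in those terms rather than as an exact-match argument.
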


During the review process, we were aware of a new unpublished paper from Schnoebelen~\cite{DBLP:journals/corr/abs-2007-05269} about flat lossy machines. This paper analyses iterations of lossy channel actions. As a consequence, it also proves the NP upper bound (for reachability and similar problems) by using compressed word techniques.


\section{Construction of an Equivalent Counter System}

Suppose we want to model check flat FIFO machines against logics in
which atomic formulas are of the form $\#^{a}_{\mathtt{c}} \ge k$, which
means there are at least $k$ occurrences of the letter $a$ in channel
$\mathtt{c}$. Suppose the letter $a$ denotes an undesirable situation
and we would like to ensure that if there are $4$ occurrences of the
letter $a$, then the number reduces within the next two steps. This
is expressed by the LTL formula $\mathtt{G}(\#^{a}_{\mathtt{c}} \ge 4
\Rightarrow \mathtt{XX}\lnot(\#^{a}_{\mathtt{c}} \ge 4))$, where
$\mathtt{G}$ and $\mathtt{X}$ are the usual LTL operators.

There is no easy way of designing an algorithm for this model
checking problem based on the construction in~\cite{EGM2012}, even
though we solved reachability and related problems in previous
sections using that
construction. That
construction is based on simulating FIFO machines using automata that
have multiple reading heads on an input tape. The channel
contents of the FIFO machine are represented in the automaton as the
sequence of letters on the tape between two reading heads. There is no way in the
automaton to access
the tape contents between two heads, and hence no way to check the
number of occurrences of a specific letter in a channel.
CQDDs
introduced in~\cite{DBLP:journals/tcs/BouajjaniH99} represent the
entire
set of reachable states and they are also not suitable for model checking.

To overcome this problem, we introduce here a counter system to
simulate flat FIFO machines. This has the additional advantage of being
amenable to analysis using existing tools on counter machines.
Counter machines are finite state
automata augmented with counters that can store natural numbers.
Let $\counters$ be a finite set of counters and let \emph{guards over
$\counters$} be the set $G(\counters)$ of positive Boolean
combinations\footnote{In the literature, counter machines can have more
complicated guards, such as Presburger constraints. For our purposes,
this restricted version suffices.} of constraints of the form $C=0$ and $C>0$, where
$C \in \counters$.
\begin{defi}[Counter machines]
A counter machine $S$ is a tuple $( Q, \counters, \Delta )$
where $Q$ is a finite set of control states and $\Delta \subseteq Q
\times G(\counters) \times {\{-1,0,1\}}^\counters \times Q$ is a finite set of
transitions.
\end{defi}
%
We may add one or two labeling functions to the
tuple $( Q, \counters, \Delta )$ to denote labeled counter
machines. The semantics of a counter machine is a transition system with
set of states $Q \times \Nat^{\counters}$, called
\emph{configurations of the counter machines}. A counter
valuation $\cval \in \Nat^{\counters}$ \emph{satisfies} a guard $C=0$
(resp.~$C>0$) if $\cval(C)=0$ (resp.~$\cval(C)>0$), written as $\cval
\models C=0$ (resp.~$\cval\models C>0$). The satisfaction relation is
extended to Boolean combinations in the standard way. For every
transition $\delta=q\cstrans{\vec{u}}{g}q'$ in the counter machine, we
have transitions $(q,\cval_{1}) \xlongrightarrow{\delta}
(q',\cval_{2})$ in the associated transition system for every
$\cval_{1}$ such that $\cval_{1} \models g$ and $\cval_{2} =
\cval_{1}+\vec{u}$ (addition of vectors is done component-wise). We
write a transition $(q,C_{2}=0,\langle 1,0\rangle, q')$ as $q
\cstrans{C_{1}^{++}}{C_{2}=0} q'$, denoting addition of $1$ to $C_{1}$
by $C_{1}^{++}$. We denote by $\cstrans{}{}$ the union $\cup_{\delta
\in \Delta} \xlongrightarrow{\delta}$. A \emph{run} of the counter
machine is a finite or infinite sequence $(q_{0},\cval_{0})
\cstrans{}{} (q_{1}, \cval_{1}) \cstrans{}{} \cdots$ of
configurations, where each pair of consecutive configurations is in
the transition relation.

We assume for convenience that the message alphabet $M$ of a FIFO
machine is the
disjoint union of $M_{\mathtt{1}}, \ldots, M_{\mathtt{p}}$, where
$M_{\mathtt{c}}$ is the alphabet
for channel $\mathtt{c}$.
In the following, let $S=(Q,\channels,M, \Delta)$ be a flat FIFO
machine, where
the set of channels $\channels=\set{1, \ldots, \mathtt{p}}$ and the set of
    transitions $\Delta = \set{t_{1}, \ldots, t_{r}}$.

\paragraph{The counting abstraction machine}

The idea behind the \emph{counting abstraction machine} is to \emph{ignore the order
of letters} stored in the channels and use counters to remember only
the number of occurrences of each letter. If a transition $t$ sends
letter $a$, the corresponding transition in the counting abstraction
machine increments the counter $(a,t)$. If a transition
$t$ retrieves a letter $a$, the retrieved letter would have been
produced by some earlier transition $t'$; the corresponding
transition in the counting abstraction machine will decrement the
counter $(a,t')$. The counting abstraction machine doesn't exactly
simulate the flat FIFO machine. For example, if the transition labeled
${(a,t_{1})}^{--}$ in Fig.~\ref{fig:countingAbstraction} is
executed, we know that there is at least one occurrence of the letter
$a$ in the channel, since the counter $(a,t_{1})$ is greater than
zero at the beginning of the transition. However, it is not clear that
the letter $a$ is at the front of the channel; there might be an
occurrence of the letter $b$ at the front. This condition can't be
tested using  the counting abstraction machine. We use other counter
machines to maintain the
order of letters.

\medskip

Formally, the
    \emph{counting abstraction machine} corresponding to $S$ is a
    labeled counter
    machine $S_{\cnt} = (Q,\counters,\Delta_{\cnt}, \psi, T)$, where
    $(Q,\counters,\Delta_{\cnt})$ is a counter machine and $\psi, T$
    are labeling functions.
    The set of \emph{counters} $\counters$ is in bijection with $M
    \times \Delta$ and a counter will be denoted $c_{a,t}$ or shortly
    $(a,t)$, for $a \in M$ and $t \in \Delta$. The set $\Delta_{\cnt}$ of \emph{transitions} of
    $S_{\cnt}$ and the labeling functions  $\psi:\Delta_{\cnt} \to (M
    \times \Delta) \cup \set{\tau}$ and $T:\Delta_{\cnt}
    \to \Delta$ are defined as follows: for every transition $t \in \Delta$, one adds the following transitions in $\Delta_{\cnt}$:
 \begin{itemize}
        \item If $t$ \emph{sends} a message,
            $t=q_{1}\fstrans{\mathtt{c}!a}q_{2}$, then the transition
            $t_{\cnt}=q_{1}\cstrans{ {(a,t)}^{++}}{}q_{2}$ is added to $\Delta_{\cnt}$;
            we define $\psi(t_{\cnt})=\tau$ and $T(t_{\cnt})=t$.
        \item If
            $t=q_{1}\fstrans{}q_{2}$ doesn't change any channel content, then the transition
            $t_{\cnt}=q_{1}\cstrans{}{}q_{2}$ is added to $\Delta_{\cnt}$;
            we define $\psi(t_{\cnt})=\tau$ and $T(t_{\cnt})=t$.
        \item If $t$ \emph{receives} a message,
            $t=q_{1}\fstrans{\mathtt{c}?a}q_{2}$, then the set of transitions $A_t$ is added to $\Delta_{\cnt}$ with
            $A_t = \set{\delta_{a,t'}  =
            q_{1}\cstrans{{(a,t')}^{--}}{(a,t')>0}q_{2}
            \mid
            t' \text{ sends } a \text{ to channel } \mathtt{c}}$.
We define  $\psi(\delta_{a,t'})=(a,t')$ and $T(\delta_{a,t'})=t$, for all $\delta_{a,t'} \in A_t$.
    \end{itemize}

\noindent
The function $\psi$ above will be used for synchronization
 with other
counter machines later and $T$ will be used to match the traces of this
counter machine with those of the original flat FIFO machine. In
figures, we do not show the labels given by $\psi$ and
$T$. They can be easily determined. For a
transition $\delta_{a,t'} \in \Delta_{\cnt}$, it decrements the
counter $(a,t')$ and $\psi(\delta_{a,t'})=(a,t')$. Transitions
that don't decrement any counter are mapped to
$\tau$ by $\psi$.

\begin{exa}%
    \label{ex:countingAbstraction}
    Figure~\ref{fig:flatFIFOSystem} shows a flat FIFO machine and
Fig.~\ref{fig:countingAbstraction} shows its counting abstraction
machine.
\end{exa}
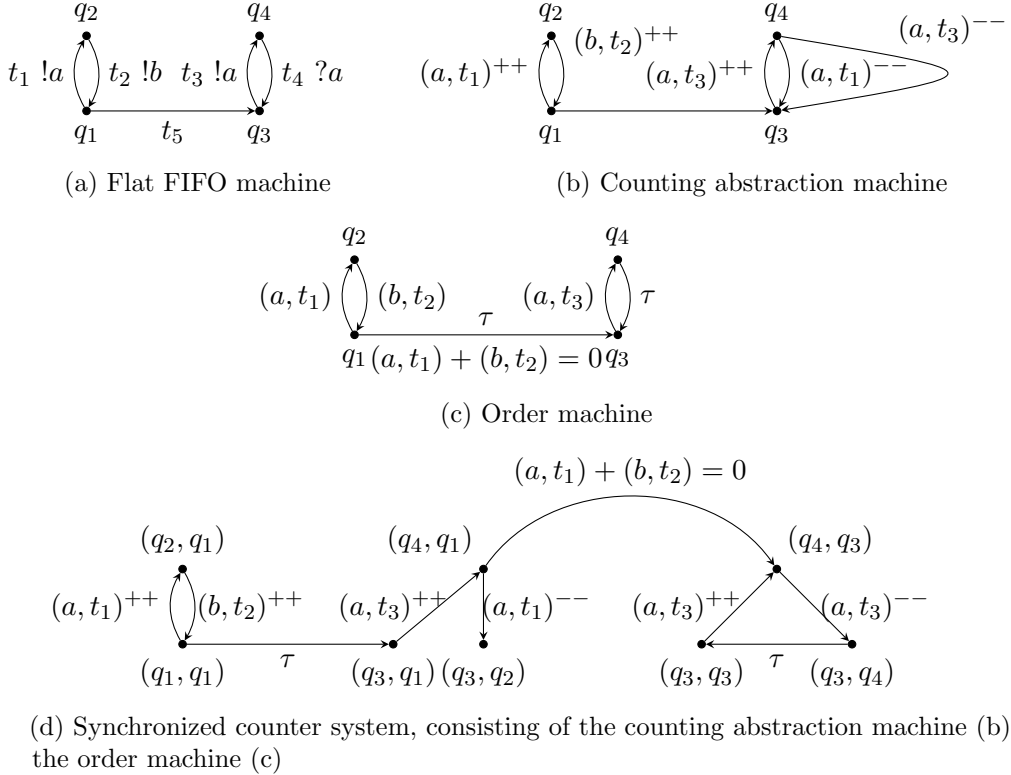
\begin{figure}[!htp]
    \centering
    \begin{subfigure}[t]{0.35\textwidth}
        \begin{tikzpicture}[>=stealth]
            \node[state, label=-90:$q_{1}$] (q1) at (0cm,0cm) {};
            \node[state, label=90:$q_{2}$] (q2) at ([yshift=1cm]q1) {};
            \node[state, label=-90:$q_{3}$] (q3) at ([xshift=2.3cm]q1) {};
            \node[state, label=90:$q_{4}$] (q4) at ([yshift=1cm]q3) {};

            \draw[->] (q1) to [bend left] node[auto=left] {$t_{1}~!a$} (q2);
            \draw[->] (q2) to [bend left] node[auto=left] {$t_{2}~!b$} (q1);
            \draw[->] (q1) to node[auto=right] {$t_{5}$} (q3);
            \draw[->] (q3) to [bend left] node[auto=left] {$t_{3}~!a$} (q4);
            \draw[->] (q4) to [bend left] node[auto=left] {$t_{4}~?a$} (q3);
        \end{tikzpicture}
        \caption{Flat FIFO machine}%
        \label{fig:flatFIFOSystem}
    \end{subfigure}
    \begin{subfigure}[t]{0.6\textwidth}
        \begin{tikzpicture}[>=stealth]
            \node[state, label=-90:$q_{1}$] (q1) at (0cm,0cm) {};
            \node[state, label=90:$q_{2}$] (q2) at ([yshift=1cm]q1) {};
            \node[state, label=-90:$q_{3}$] (q3) at ([xshift=3cm]q1) {};
            \node[state, label=90:$q_{4}$] (q4) at ([yshift=1cm]q3) {};

            \draw[->] (q1) to [bend left] node[auto=left] {$(a,t_{1})^{++}$} (q2);
            \draw[->] (q2) to [bend left] node[auto=left, pos=0.4] {$(b,t_{2})^{++}$} (q1);
            \draw[->] (q1) to (q3);
            \draw[->] (q3) to [bend left] node[auto=left, pos=0.47] {$(a,t_{3})^{++}$} (q4);
            \draw[->] (q4) to [bend left] node[auto=left] {$(a,t_{1})^{--}$} (q3);
            \path[->, draw=black] (q4) .. controls ([xshift=3cm]barycentric cs:q3=1,q4=1) ..
            node[auto=left, pos=0.2] {$(a,t_{3})^{--}$} (q3);
        \end{tikzpicture}
        \caption{Counting abstraction machine}%
        \label{fig:countingAbstraction}
    \end{subfigure}

    \begin{subfigure}[t]{0.6\textwidth}
        \begin{tikzpicture}[>=stealth]
            \node[state, label=-90:$q_{1}$] (q1) at (0cm,0cm) {};
            \node[state, label=90:$q_{2}$] (q2) at ([yshift=1cm]q1) {};
            \node[state, label=-90:$q_{3}$] (q3) at ([xshift=3.5cm]q1) {};
            \node[state, label=90:$q_{4}$] (q4) at ([yshift=1cm]q3) {};

            \draw[->] (q1) to [bend left] node[auto=left] {$(a,t_1)$} (q2);
            \draw[->] (q2) to [bend left] node[auto=left] {$(b,t_2)$} (q1);
            \draw[->] (q1) to node[auto=right] {$(a,t_{1}) + (b,t_{2}) = 0$} node[auto=left]{$\tau$} (q3);
            \draw[->] (q3) to [bend left] node[auto=left] {$(a,t_3)$} (q4);
            \draw[->] (q4) to [bend left] node[auto=left] {$\tau$} (q3);
            \path[use as bounding box] ([xshift=-2cm,yshift=-0.5cm]q1) rectangle
            ([xshift=1cm,yshift=0.75cm]q4);
        \end{tikzpicture}
        \caption{Order machine}%
        \label{fig:OrderSystem}
    \end{subfigure}

    \begin{subfigure}[t]{0.9\textwidth}
        \begin{tikzpicture}[>=stealth]
            \node[state, label=-90:{$(q_{1},q_{1})$}] (q1) at (0cm,0cm) {};
            \node[state, label=90:{$(q_{2},q_{1})$}] (q2) at ([yshift=1cm]q1) {};
            \node[state, label=-90:{$(q_{3},q_{1})$}] (q3) at ([xshift=2.8cm]q1) {};
            \node[state, label=93:{$(q_{4},q_{1})$}] (q4) at ([yshift=1cm,xshift=1.2cm]q3) {};
            \node[state, label=-90:{$(q_{3},q_{2})$}] (q5) at ([xshift=1.2cm]q3) {};
            \node[state, label=87:{$(q_{4},q_{3})$}] (q6) at ([xshift=3.9cm]q4) {};
            \node[state, label=-90:{$(q_{3},q_{4})$}] (q7) at ([xshift=4.9cm]q5) {};
            \node[state, label=-90:{$(q_{3},q_{3})$}] (q8) at ([xshift=-2cm]q7) {};

            \draw[->] (q1) to [bend left] node[auto=left] {$(a,t_1)^{++}$} (q2);
            \draw[->] (q2) to [bend left]  (q1);
            \node at ([xshift=0.9cm]barycentric cs:q2=1,q1=1) {$(b,t_2)^{++}$};
            \draw[->] (q1) to node[auto=right]{$\tau$} (q3);
            \draw[->] (q3) --  (q4);
            \node at ([xshift=-0.6cm]barycentric cs:q3=1,q4=1) {$(a,t_3)^{++}$};
            \draw[->] (q4) -- (q5);
            \node at ([xshift=0.7cm]barycentric cs:q4=1,q5=1) {$(a,t_1)^{--}$};
            \draw[->] (q4) to [bend left=55] node[auto=left] {${(a,t_{1}) + (b,t_{2})=0}$} (q6);
            \draw[->] (q6) -- (q7);
            \node at ([xshift=0.8cm]barycentric cs:q6=1,q7=1) {${(a,t_{3})^{--}}$};
            \draw[->] (q7) -- node[auto=left] {$\tau$} (q8);
            \draw[->] (q8) --  (q6);
            \node at ([xshift=-0.7cm]barycentric cs:q8=1,q6=1) {${(a,t_{3})^{++}}$};

            \path[use as bounding box] ([xshift=-2cm,yshift=-0.6cm]q1)
            rectangle ([xshift=1cm,yshift=1.8cm]q7|-q6);
        \end{tikzpicture}
        \caption{Synchronized counter system, consisting of the counting abstraction machine (b) and the order machine (c)}%
        \label{fig:synchronizedCounterSystem}
    \end{subfigure}
    \caption{An example flat FIFO machine (a) and the equivalent counter
        system (d).}
\end{figure}


Note that the counting abstraction machine associated with the flat
FIFO machine is not flat. Indeed, the receiving transition
$t_4=q_{4}\fstrans{?a}q_{3}$ in the FIFO machine is ``translated'' into
two decrementation transitions $q_{4}\fstrans{{(a,t_1)}^{--}}q_{3}$ and
$q_{4}\fstrans{{(a,t_3)}^{--}}q_{3}$ in the counting abstraction machine; these transitions breake the flatness property by creating nested loops on $\{q_3,q_4\}$.

%

\paragraph{The order machine}
The \emph{order machine for channel $\mathtt{c}$} is a labeled counter
machine
    $S_{\ord}^{\mathtt{c}} =
    (Q,\counters,\Delta_{\ord}^{\mathtt{c}},\psi^{\mathtt{c}})$, where
    $(Q,\counters,\Delta_{\ord}^{\mathtt{c}})$ is a counter machine and
    $\psi^{\mathtt{c}}$ is a labeling function.
   The set of control states $Q$ and the set of counters $\counters$
   are the same as in the counting abstraction machine.  The set
   $\Delta_{\ord}^{\mathtt{c}}$ of \emph{transitions} of
   $S_{\ord}^{\mathtt{c}}$ and the \emph{labeling function}
   $\psi^{\mathtt{c}}: \Delta_{\ord}^{\mathtt{c}} \to (M\times \Delta)
   \cup \set{\tau}$ are defined as follows: for every $t \in \Delta$,
   one adds the following transitions in $\Delta_{\ord}^{\mathtt{c}}$:
  \begin{itemize}
      \item If $t = q_{1} \fstrans{\mathtt{c}!a}q_{2}$, one adds to $\Delta_{\ord}^{\mathtt{c}}$ the transition $t'= q_{1} \rightarrow
            q_{2}$ and $\psi^{\mathtt{c}}(t')=(a,t)$.
        \item If $t = q_{1} \fstrans{x} q_{2}$ where $x$ doesn't contain a sending operation (of a letter) to channel $\mathtt{c}$,
           one adds to $\Delta_{\ord}^{\mathtt{c}}$ the transition $t'= q_{1} \rightarrow
            q_{2}$ and $\psi^{\mathtt{c}}(t')=\tau$.
    \end{itemize}
    While adding the transitions above, if $t$ happens to be the first
    transition after and outside a loop in $S$, we add a guard to the
    transition $t'$ that we have given in the above two cases. Suppose
    $t$ is the first
            transition after and outside a loop, and the loop is labeled by
            $\sigma$. We add the following guard to the
                transition $t'$.
            \begin{align*}
                \sum_{\substack{t'' \text{ occurs in }
                \sigma\\ a \in M}} (a,t'') & = 0
            \end{align*}
This constraint ensures that all the letters produced by iterations of
$\sigma$ are retrieved before letters produced by later transitions.

Figure~\ref{fig:OrderSystem} shows the order machine corresponding to
the flat FIFO machine of Fig.~\ref{fig:flatFIFOSystem}.

\paragraph{The synchronized counter system}

 We will
synchronize the counting abstraction machine $S_{\cnt}$ with the order
machines ${(S_{\ord}^{\mathtt{c}})}_c$ by
\emph{rendez-vous} on transition labels.

Suppose that the machine $S_{\ord}^{\mathtt{c}}$ is in state $q_{2}$
as shown in Fig.~\ref{fig:OrderSystem} and the machine $S_{\cnt}$ is in state $q_4$, as shown in
Fig.~\ref{fig:countingAbstraction}.
The machine $S_{\ord}^{\mathtt{c}}$ is in state $q_2$ and the
only transition going out from $q_{2}$ is labeled by $(b,t_{2})$,
denoting the fact that the next letter to be retrieved from the channel
is $b$. 
The
machine $S_{\cnt}$ can't execute the transition labeled with
${(a,t_{1})}^{--}$ in this configuration, since its
$\psi$-label is $(a,t_{1})$ and hence it can't synchronize with the
machine $S_{\ord}^{\mathtt{c}}$, whose next transition is labeled with $(b,t_{2})$. The guard
$(a,t_{1}) + (b,t_{2}) = 0$ in the bottom transition in
Fig.~\ref{fig:OrderSystem} ensures that all occurrences of
letters produced by iterations of the first loop are retrieved before
those produced by the second loop.

In the following, the \emph{label of a transition} refers
to the image of that transition under the function $\psi$ (if the
transition is in the counting abstraction machine) or the function
$\psi^{\mathtt{c}}$ (if the transition is in the order machine for channel
$\mathtt{c}$).

\medskip
The
    \emph{synchronized counter system} $S_{\sync}=
    S_{\cnt} \mid \mid S_{\ord}^{1}  \mid \mid \cdots  \mid \mid
    S_{\ord}^{\mathtt{c}}  \mid \mid \cdots \mid \mid
    S_{\ord}^{\mathtt{p}}$ is the
    synchronized (by rendez-vous) product of the  \emph{counting
abstraction machine} $S_{\cnt}$
    and the  \emph{order machines} $S_{\ord}^{\mathtt{c}}$ for all channels $\mathtt{c} \in
    \set{1, \ldots, \mathtt{p}}$. All counter machines share the same set of
    counters $\counters$ and have disjoint copies of the set of
    control states $Q$, so the global control states of the synchronized counter system are
    tuples in $Q^{\mathtt{p}+1}$. Transitions labeled with $\tau$
    need not synchronize with others. Each transition
    labeled (by the function $\psi$ or $\psi^{\mathtt{c}}$ as
    explained above) with an element of $M\times \Delta$ should synchronize with exactly one
    other transition that is similarly labeled. We extend the labeling
function $T$ of $S_\cnt$ to $S_\sync$ as follows: if a transition $t$
of $S_\cnt$ participates in a transition $t_s$ of $S_\sync$, then
$T(t_s)=T(t)$. If no transition from $S_\cnt$ participates in $t_s$,
then $T(t_s)=\tau$ and we call $t_s$ a silent transition.

\medskip
Since we have assumed that the channel alphabets for different
channels are mutually disjoint, synchronizations can only happen
between the counting abstraction machine and one of the order machines.
For a global control state $\overline{q} \in Q^{\mathtt{p}+1}$,
$\overline{q}(0)$ denotes the local state of the counting
abstraction machine
and $\overline{q}(\mathtt{c})$ denotes the local state of the order
machine for
channel $\mathtt{c}$. The synchronized counter system maintains the channel
contents of the flat FIFO machine as explained next.

\paragraph{A weak bisimulation between the FIFO machine and the synchronized system}
We now explain that every reachable configuration
$(\overline{q},\cval)$ of $S_{\sync}$ corresponds to a unique
configuration $h(\overline{q},\cval)$ of the original FIFO
machine $S$. The corresponding configuration of $S$ is
$h(\overline{q},\cval) =(\overline{q}(0),
h_1(v_1),h_2(v_2),\ldots,h_\mathtt{p}(v_\mathtt{p}))$, where the words
$v_{\mathtt{c}} \in \Delta^{*}$ and morphisms $h_{\mathtt{c}}:
\Delta^{*} \to M^{*}$ are as follows. Fix a channel $\mathtt{c}$. Let $v_\mathtt{c} \in
\Delta^*$ be a word labelling a path in $S$ from
$\overline{q}(\mathtt{c})$ to $\overline{q}(0)$ such that
$\mathit{Parikh}(v_\mathtt{c})(t)=\cval\left( (a,t) \right)$ for every
transition $t \in \Delta$ that sends some letter $a$
to channel $\mathtt{c}$.
Now, define $h_\mathtt{c}(t)=a$ if $t$ sends some letter $a$ to
channel $\mathtt{c}$
and
$h_\mathtt{c}(t)=\epsilon$ otherwise. The word $h_{c}(v_\mathtt{c})$
is unique since $S$ is flat and so the set of traces of $S$,
interpreted as a language over the alphabet $\Delta$, is included in a
bounded language (recall that a bounded language is included in a language of the
form $w_{1}^{*}w_{2}^{*}\cdots w_{k}^{*}$).
Intuitively, the path $v_{\mathtt{c}}$ gives the order of letters in
channel $\mathtt{c}$ and the counters give the number of occurrences
of each letter. Let us denote by $R_{h,\sync}$ the relation $\set{(
    h((\overline{q},\cval)), (\overline{q},\cval)) \mid
    (\overline{q},\cval) \text{ is reachable in } S_{\sync}}$.

\begin{exa}%
    \label{ex:correspondence}
    Figure~\ref{fig:synchronizedCounterSystem} shows the reachable states of the
    synchronized counter system for the flat FIFO machine in
    Fig.~\ref{fig:flatFIFOSystem}. Initially, both the counting
    abstraction machine and the order machine are in state $q_{1}$, so
    the global state is $(q_{1},q_{1})$. Then the counting abstraction
    machine may execute the transition labeled ${(a,t_{1})}^{++}$ and go
    to state $q_{2}$ while the order machine stays in state
    $q_{1}$, resulting in the global state $(q_{2}, q_{1})$. Consider the global state
    $(q_{3},q_{2})$
and counter valuation $\cval$ with
    $\cval( (a,t_{1})) = 2, \cval( (b,t_{2}))=3$ and $\cval((a,t_{3}))
    = 1$.  Then, for the only channel $\mathtt{c} = \mathtt{1}$,
    $v_{\mathtt{c}}={t_{2}(t_{1}t_{2})}^{2}t_{5}t_{3}t_{4}$ and
    $h_{\mathtt{c}}(v_{\mathtt{c}})=b{(ab)}^{2}a$.
\end{exa}

Let us recall that a relation $R$ between the reachable
configurations of the FIFO machine $S$ and the synchronized counter
system $S_{\sync}$ is a \emph{weak bisimulation} if every pair
$((q,\fval),(\overline{q},\cval)) \in R$ satisfies the following
conditions: (1) for
every transition $(q,\fval) \fstrans{t} (q',\fval')$ in $S$, there is
a sequence $\sigma$ of transitions in
$S_{\sync}$ such that $T(\sigma)\in\tau^{*}t\tau^{*}$,
$(\overline{q},\cval) \cstrans{\sigma}{}
(\overline{q'},\cval')$ and
$( (q',\fval'), (\overline{q'}, \cval')) \in R$,
(2) for every transition $(\overline{q},\cval) \cstrans{t_{s}}{}
(\overline{q'},\cval')$ in $S_{\sync}$ with $T(t_{s})=\tau$,
$((q,\fval),(\overline{q'},\cval')) \in R$ and (3) for every
transition $(\overline{q},\cval) \cstrans{t_{s}}{}
(\overline{q'},\cval')$ in $S_{\sync}$ with $T(t_{s})=t \ne\tau$,
$(q,\fval) \fstrans{t} (q',\fval')$ is a transition in $S$ and $(
(q',\fval'), (\overline{q'}, \cval')) \in R$.
\begin{prop}%
    \label{lem:weakBisimulation}
    The relation $R_{h,\sync}$ is
        a weak bisimulation.
\end{prop}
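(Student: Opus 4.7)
The plan is to verify each of the three clauses of the weak bisimulation definition, after a preliminary check that $h$ is well defined on all reachable configurations of $S_\sync$. By induction on the length of a run reaching $(\overline{q},\cval)$, I would show that for every channel $\mathtt{c}$ there exists a path $v_\mathtt{c}$ in $S$ from $\overline{q}(\mathtt{c})$ to $\overline{q}(0)$ whose Parikh image on sends-to-$\mathtt{c}$ matches $\cval$. Uniqueness of the word $h_\mathtt{c}(v_\mathtt{c})$ follows from flatness of $S$, as already noted after the definition of $h$: the set of traces of $S$ lies in a bounded language, so the Parikh constraints pin down the number of iterations of each loop, and the residual freedom in the positions of internal or receive transitions contributes $\epsilon$ under $h_\mathtt{c}$.

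For clause (1), I would do a case analysis on the transition $t$ fired in $S$. If $t$ is internal or a send $q \fstrans{\mathtt{c}!a} q'$, the corresponding transition $t_\cnt$ of $S_\cnt$ is labeled $\tau$ under $\psi$ and $t$ under $T$; it fires alone in $S_\sync$ without any synchronization, updating $\overline{q}(0)$ to $q'$ and, in the send case, incrementing the counter $(a,t)$. The new witness for channel $\mathtt{c}$ is $v_\mathtt{c} \cdot t$, which gives $h_\mathtt{c}(v_\mathtt{c}) \cdot a$, while for every other channel $\mathtt{c}'$ the appended transition contributes $\epsilon$ under $h_{\mathtt{c}'}$. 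If $t = q \fstrans{\mathtt{c}?a} q'$ is a receive, the invariant forces $h_\mathtt{c}(v_\mathtt{c})$ to begin with $a$, so the first send-to-$\mathtt{c}$ transition along $v_\mathtt{c}$ is some $t' = q_1' \fstrans{\mathtt{c}!a} q_2'$. I would first fire, in the order machine for $\mathtt{c}$, the $\tau$-transitions matching the prefix $u_0$ of $v_\mathtt{c}$ that precedes $t'$ --- these are available without synchronization since none of them sends to $\mathtt{c}$ --- moving $\overline{q}(\mathtt{c})$ to $q_1'$. Then the synchronization between $\delta_{a,t'}$ of $S_\cnt$ and the $(a,t')$-labeled transition of the order machine for $\mathtt{c}$ fires, producing a sequence with $T$-image in $\tau^* \cdot t \subseteq \tau^* t \tau^*$ and updating $v_\mathtt{c}$ by deleting its prefix $u_0 \cdot t'$ and appending $t$.

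For clauses (2) and (3), a silent transition of $S_\sync$ is a $\tau$-move of some order machine for a channel $\mathtt{c}$; it shifts $\overline{q}(\mathtt{c})$ forward along the first symbol of $v_\mathtt{c}$ without changing counters, and since that first symbol does not send to $\mathtt{c}$ (otherwise the order-machine transition would not be $\tau$), it contributes $\epsilon$ under $h_\mathtt{c}$ and the $h$-image is preserved. A non-silent transition of $S_\sync$ either fires a transition of $S_\cnt$ alone (an internal move or a send, with $T$-label equal to the underlying $S$-transition) or synchronizes a receive $\delta_{a,t'}$ of $S_\cnt$ with an $(a,t')$-transition of the order machine for $\mathtt{c}$ (whose $T$-label is the $S$-receive). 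The same witness-update bookkeeping as in clause (1) shows that in each case the $h$-image changes exactly as dictated by the matching $S$-transition.

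The main obstacle is the guard $\sum_{t'' \in \sigma,\, a \in M}(a,t'') = 0$ attached to every order-machine transition that exits a loop $\sigma$ of $S$. I must verify that whenever my simulation calls for an order machine to cross such a boundary, every counter for a send within $\sigma$ has already dropped to zero. I would establish a separate invariant asserting that once the witness $v_\mathtt{c}$ no longer traverses any transition of $\sigma$, and similarly for every other channel whose order machine needs to move past the loop, the Parikh identity forces each $(a,t'')$ with $t'' \in \sigma$ down to zero. Intuitively, this reflects the FIFO discipline on a flat machine: letters produced by iterations of $\sigma$ must be retrieved before any post-loop transition can contribute to the channel contents, so by the time an order machine tries to cross the boundary the corresponding counters have already been decremented back to zero. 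With this invariant in hand, the case analysis above closes the proof.
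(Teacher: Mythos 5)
Your proof follows the same route as the paper's: the same three-clause verification, the same case analysis on internal, send and receive transitions, and the same bookkeeping on the witness paths $v_{\mathtt{c}}$ (append the fired transition for sends and internal moves, strip a prefix up to the matching send for receives, with the order machine's $\tau$-moves absorbed into the weak-bisimulation clauses). The only difference is that you explicitly flag and sketch an invariant for the satisfiability of the loop-exit guards on the order machines --- a point the paper's proof passes over in silence --- which is welcome extra care but does not change the argument.
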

\begin{proof}
    Suppose $( h((\overline{q},\cval)), (\overline{q},\cval)) \in
    R_{h,\sync}$, where $\overline{q}(0)=q$.  Suppose there is a transition
    $h((\overline{q},\cval)) \fstrans{t} (q',\fval')$ in $S$. We have
    $h((\overline{q},\cval))=(\overline{q}(0),\fval)$, where
    $\fval(\mathtt{c})=h_{\mathtt{c}}(v_{\mathtt{c}})$ for every
    channel $c$ and $v_\mathtt{c} \in \Delta^*$ is a word labelling a
    path in $S$ from $\overline{q}(\mathtt{c})$ to $\overline{q}(0)$
    such that $\mathit{Parikh}(v_\mathtt{c})(t)=\cval\left( (a,t)
    \right)$ for every transition $t \in \Delta$ that sends some
    letter to channel $\mathtt{c}$ (and $a$ is the letter that is sent
    by $t$).

We will prove condition (1) above for weak bisimulation
    by a case analysis, depending on the type of transition $t$.

    Case 1: transition $t$ is of the form $(q,\fval) \rightarrow
    (q',\fval)$. In
    $S_{sync}$, the counting abstraction machine executes the
    transition $q \rightarrow q'$ and the order machines do not perform
    any transition. Then $S_{\sync}$ is in the configuration
    $(q',\overline{q}(\mathtt{1}),\ldots,
    \overline{q}(\mathtt{p}),\cval)$, where $\mathtt{p}$ is the number
    of channels. For every channel
    $\mathtt{c}$, we get a word $v_{c}'$ labelling a path in
    $S$ from $\overline{q}(\mathtt{c})$ to $q'$
    such that $\mathit{Parikh}(v_\mathtt{c})(t')=\cval\left( (a,t')
    \right)$ for every transition $t' \in \Delta$ that sends some
    letter to channel $\mathtt{c}$ (and $a$ is the letter that is sent
    by $t'$) as follows: we simply append the transition $q
    \rightarrow q'$ to $v_{c}$. Hence,
    $h_{\mathtt{c}}(v_{c}')=h_{\mathtt{c}}(v_{c})$ and
    $((q',\fval), (q',\overline{q}(\mathtt{1}),\ldots,
    \overline{q}(\mathtt{p}),\cval)) \in R_{h,\sync}$.

    Case 2: transition $t$ is of the form $(q,\fval)
    \fstrans{\mathtt{c}!a}(q',\fval\cdot \vec{a}_{\mathtt{c}})$
    (recall that $\vec{a}_{\mathtt{c}}$ is the channel valuation that
    assigns $a$ to channel $\mathtt{c}$ and $\epsilon$ to all others).
    In $S_{\sync}$, the counting abstraction machine executes the
    transition $q \cstrans{{(a,t)}^{++}}{} q'$ and the order machines do
    not execute any transitions. Then $S_{\sync}$ is in the
    configuration $(q',\overline{q}(\mathtt{1}),\ldots,
    \overline{q}(\mathtt{p}),\cval')$, where $\cval'$ is obtained
    from $\cval$ by adding one to the counter $(a,t)$. For every
    channel $\mathtt{c'}$, we get a word $v_{c'}'$ labelling a path in
    $S$ from $\overline{q}(\mathtt{c'})$ to $q'$ such that
    $\mathit{Parikh}(v_\mathtt{c'}')(t')=\cval'\left( (a',t') \right)$
    for every transition $t' \in \Delta$ that sends some letter to
    channel $\mathtt{c'}$ (and $a'$ is the letter that is sent by $t'$)
    as follows: we simply append the transition $q
    \fstrans{\mathtt{c}!a} q'$ to $v_{\mathtt{c'}}$. Hence,
    $h_{\mathtt{c'}}(v_{c'}')=h_{\mathtt{c'}}(v_{c'})$ for $\mathtt{c'}
    \ne \mathtt{c}$ and
    $h_{\mathtt{c}}(v_{c}')=h_{\mathtt{c}}(v_{c})\cdot a$. Hence,
    $((q',\fval \cdot \vec{a}_{\mathtt{c}}), (q',\overline{q}(\mathtt{1}),\ldots,
    \overline{q}(\mathtt{p}),\cval')) \in R_{h,\sync}$.

    Case 3: transition $t$ is of the form
    $(q,\vec{a}_{\mathtt{c}} \cdot \fval) \fstrans{\mathtt{c}?a}
    (q',\fval)$. Since $(q,\vec{a}_{\mathtt{c}} \cdot \fval) =
    h(\overline{q},\cval)$, $(\vec{a}_{\mathtt{c}} \cdot \fval)
    (c)=h_{\mathtt{c}}(v_{\mathtt{c}})$, where $v_{\mathtt{c}}$ is a
    word labeling a path in
    $S$ from $\overline{q}(\mathtt{c})$ to $q$
    such that $\mathit{Parikh}(v_\mathtt{c})(t')=\cval\left( (a',t')
    \right)$ for every transition $t' \in \Delta$ that sends some
    letter to channel $\mathtt{c}$ (and $a'$ is the letter that is sent
    by $t'$). Hence, the first transition in $v_{\mathtt{c}}$ that
    sends a letter to channel $c$ is of the form $t'=q_{1}
    \fstrans{\mathtt{c} !a} q_{2}$ and $\cval( (a,t'))\ge 1$. In
    $S_{\sync}$, the order machine for channel $\mathtt{c}$ executes
    the sequence of transitions from $\overline{q}(\mathtt{c})$ to
    $q_{2}$; note that $\psi_{\mathtt{c}}$ labels the last transition
    of this sequence with $(a,t')$ and labels other transitions in
    this sequence with $\tau$. The order machines for other
    channels do not execute any transitions. The counting abstraction
machine
    executes the transition $q \cstrans{ {(a,t')}^{--}}{(a,t')>0} q'$,
    which is labeled by $\psi$ with $(a,t')$ so it can synchronize
    with the transition $q_{1} \fstrans{} q_{2}$ executed by the order
    machine for channel $c$. Now the synchronized counter system
    $S_{\sync}$ is in the configuration $( \overline{q'},\cval')$,
    where $\overline{q'}$ is obtained from $\overline{q}$ by changing
    $\overline{q}(0)$ from $q$ to $q'$ and changing
    $\overline{q}(\mathtt{c})$ to $q_{2}$ and $\cval'$ is obtained
    from $\cval$ by subtracting one from the counter $(a,t')$. For
    channels $\mathtt{c'} \ne \mathtt{c}$, let
    $v_{\mathtt{c'}}'=v_{\mathtt{c'}}$ and let $v_{\mathtt{c}}'$ be obtained
    from $v_{\mathtt{c}}$ by removing the prefix ending at
    $q_{2}$. Now for every channel $\mathtt{c'}$, the word $v_{c'}'$
    labels a path in
    $S$ from $\overline{q'}(\mathtt{c'})$ to $q'$ such that
    $\mathit{Parikh}(v_\mathtt{c})(t')=\cval'\left( (a',t') \right)$
    for every transition $t' \in \Delta$ that sends some letter to
    channel $\mathtt{c'}$ (and $a'$ is the letter that is sent by
    $t'$). Hence, $((q',\fval), (\overline{q'},\cval')) \in
    R_{h,\sync}$ (end of Case 3 and of condition (1)).

    Next we prove condition (2) for weak bisimulation: for every
    transition $(\overline{q},\cval) \cstrans{t_{s}}{}
    (\overline{q'},\cval')$ in $S_{\sync}$ with $T(t_{s})=\tau$, we
    will show that $((q,\fval),(\overline{q'},\cval')) \in
    R_{h,\sync}$. Recall that the labeling function $T$ of $S_\cnt$ is
    extended to $S_\sync$ as follows: if a transition $t$ of $S_\cnt$
    participates in a transition $t_s$ of $S_\sync$, then
    $T(t_s)=T(t)$. If no transition from $S_\cnt$ participates in
    $t_s$, then $T(t_s)=\tau$. Hence,
    if $S_{\sync}$ executes a transition
    $(\overline{q},\cval) \fstrans{t_{s}}
    (\overline{q'},\cval')$ and $T(t_{s})=\tau$, the counting
    abstraction machine does not participate in $t_{s}$. The only
    transition participating in $t_{s}$ is some transition $q_{1}
    \rightarrow q_{2}$ in $S_{\ord}^{\mathtt{c}}$ for some channel
    $\mathtt{c}$ satisfying the following property:
    $q_{1} \fstrans{x} q_{2}$ is a transition in the FIFO
    machine $S$ and $x$ does not contain any sending operation of any
    letter to $\mathtt{c}$. In this case, $S_{\sync}$ goes to the
    configuration $(\overline{q'},\cval')$ where $\cval'=\cval$ and
    $\overline{q'}$ is obtained from $\overline{q}$ by changing
    $\overline{q}(\mathtt{c})$ from $q_{1}$ to $q_{2}$. For channels
    $\mathtt{c'} \ne \mathtt{c}$, let
    $v_{\mathtt{c'}}'=v_{\mathtt{c'}}$ and let $v_{\mathtt{c}}'$ be
    obtained from $v_{\mathtt{c}}$ by removing the prefix transition
    $q_{1} \fstrans{x} q_{2}$. Now for every channel $\mathtt{c'}$,
    the word $v_{c'}'$ labels a path in $S$ from
    $\overline{q'}(\mathtt{c'})$ to $\overline{q}(0)$ such that
    $\mathit{Parikh}(v_\mathtt{c})(t')=\cval'\left( (a',t') \right)$
    for every transition $t' \in \Delta$ that sends some letter to
    channel $\mathtt{c'}$ (and $a'$ is the letter that is sent by
    $t'$). Hence, $(h((\overline{q},\cval)), (\overline{q'},\cval'))
    \in R_{h,\sync}$.

    Next we prove condition (3) for weak bisimulation by a case
    analysis, depending on the type of transition $t_{s}$.

    Case 1: the transition $t_{s}$ is of the form $t_{\cnt}=q
    \cstrans{}{} q'$ executed by the counting abstraction machine. Then
    $S_{\sync}$ goes to the configuration
    $(q',\overline{q}(\mathtt{1}),\ldots,
    \overline{q}(\mathtt{p}),\cval)$. If $h(
    (\overline{q},\cval))=(q,\fval)$, then the FIFO machine $S$
    executes the transition $q \fstrans{} q'$ and we conclude that
    $( (q',\fval), (q',\overline{q}(\mathtt{1}),\ldots,
    \overline{q}(\mathtt{p}),\cval)) \in R_{h,\sync}$ as in
    case 1 above.

    Case 2: the transition $t_{s}$ is of the form $t_{\cnt}=q
    \cstrans{ {(a,t)}^{++}}{} q'$ executed by the counting abstraction
    machine, where $t=q \fstrans{\mathtt{c}!a} q'$ is a transition of
    the FIFO machine $S$. Then $S_{\sync}$ goes to the configuration
    $(q',\overline{q}(\mathtt{1}),\ldots,
    \overline{q}(\mathtt{p}),\cval')$, where $\cval'$ is obtained from
    $\cval$ by adding one to the counter $(a,t)$. If $h(
    (\overline{q},\cval))=(q,\fval)$, then the FIFO machine $S$
    executes the transition $q \fstrans{\mathtt{c}!a} q'$ and goes to
    the configuration $(q,\fval \cdot \vec{a}_{\mathtt{c}})$. We
    conclude that $( (q,\fval \cdot \vec{a}_{\mathtt{c}}),
    (q',\overline{q}(\mathtt{1}),\ldots,
    \overline{q}(\mathtt{p}),\cval'))$ as in case 2 above.

    Case 3: the transition $t_{s}$ is a synchronized transition with
    the counting abstraction machine executing the transition
    $\delta_{a,t'}=q \cstrans{ {(a,t')}^{--}}{ (a,t')>0} q'$ (where $t'$
    sends the letter $a$ to channel $\mathtt{c}$) and the order
machine
    for channel $\mathtt{c}$ executing the transition $q_{1}
    \rightarrow q_{2}$ (which is labeled with $(a,t')$ by
    $\psi_{\mathtt{c}}$). The synchronized transition system
    $S_{\sync}$ goes to the configuration $( \overline{q'},\cval')$,
    where $\overline{q'}$ is obtained from $\overline{q}$ by changing
    $\overline{q}(0)$ from $q$ to $q'$ and changing
    $\overline{q}(\mathtt{c})$ to $q_{2}$ and $\cval'$ is obtained
    from $\cval$ by subtracting one from the counter $(a,t')$. If
    $h ( (\overline{q},\cval))=(q,\vec{a}_{\mathtt{c}}\cdot\fval)$,
    then the FIFO machine $S$ executes the transition
    $(q,\vec{a}_{\mathtt{c}} \cdot \fval) \fstrans{\mathtt{c}?a}
    (q',\fval)$. We conclude that $( (q',\fval), (
    \overline{q'},\cval')) \in R_{h,\sync}$ as in case 3
    above.
\end{proof}

\paragraph{A bisimulation between the FIFO machine and the modified synchronized system}

We proved weak bisimulation above instead of bisimulation, due to
the presence of silent transitions in the order machines participating in
$S_\sync$. We can modify the order machines as follows to get a bisimulation.
For every channel $\mathtt{c}$ and every transition $q_1
\cstrans{}{} q_2$ labeled $\tau$ in $S_\ord^\mathtt{c}$, remove the
transition and merge the two states $q_1, q_2$ into one state. If
exactly one
of the two states $q_1,q_2$ was an anchor state, retain the name of
the anchor state as the name of the merged state. Otherwise, retain
$q_2$ as the name of the merged state. Repeat this process
until there are no more transitions labeled $\tau$. Note that we have
only removed transitions that do not correspond to any transition of
$S$ sending letters to channel $\mathtt{c}$. Such transitions are
assigned $\epsilon$ by the morphism $h_\mathtt{c}$ defined in the
paragraph preceding Ex.~\ref{ex:correspondence}. Hence, the deletion
of $\tau$-labeled transitions do not affect the correspondence between
the configurations of $S$ and $S_\sync$. If there are no sending
transitions between two anchor states, the above deletion procedure
may result in two anchor states getting merged, destroying the
flatness of the order machine. Next we describe a way to tackle this.

Suppose a transition $t'$ in the order machine modified as above
corresponds to a transition $t$ in the original flat FIFO machine $S$.
Suppose this transition $t$ of $S$ is in a loop $\ell$, which is
labeled by the sequence of transitions $\sigma$. For every transition $t_1$
in $S$ outside $\ell$ but reachable from states in $\ell$, we make the
following modification.
If the order machine has a transition $t'_1$ corresponding to $t_1$, we
add the following guard to $t'_1$.
\begin{align*}
\sum_{\substack{t'' \text{ occurs in }
\sigma\\ a \in M}} (a,t'') & = 0
\end{align*}
These guards ensure that all letters sent by transitions in $\ell$ are
retrieved before retrieving letters sent by later transitions. In
addition, the guards ensure that the modified order machine is
flattable. Suppose the loop $\ell$ in $S$ corresponds to loop $\ell'$
in $S_\ord^\mathtt{c}$. If a transition occurring after and outside
the loop $\ell'$
is fired in $S_\ord^\mathtt{c}$, loop $\ell'$ can't be entered again. The
reason is that any transition $t''$ in the loop $\ell'$
tries to decrement some counter $(a,t'')$, but it can't be decremented
since it has value $0$, as checked in the guard newly added to every
transition occurring after $\ell'$.

The modified order machines don't have $\tau$-labeled transitions
anymore, hence the modified synchronized counter system $S'_\sync$
doesn't have silent transitions. Now a proof similar to that of
Proposition~\ref{lem:weakBisimulation} can be used to show bisimulation
between $S$ and the modified synchronized counter system $S'_\sync$.

Let $R'_{h,\sync}$ be  the relation $\set{(
        h((\overline{q},\cval)), (\overline{q},\cval)) \mid
        (\overline{q},\cval) \text{ is reachable in } S'_{\sync}}$.

\begin{prop}%
    \label{lem:Bisimulation}
    The relation $R'_{h,\sync}$ is
        a bisimulation.
\end{prop}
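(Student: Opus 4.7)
The plan is to adapt the case analysis from the proof of Proposition~\ref{lem:weakBisimulation}, exploiting the fact that every silent transition invoked there has been eliminated, either by state merging or by the new guard additions. I expect the argument to be somewhat simpler than the weak bisimulation proof rather than harder: each sequence of shape $\tau^{*}\,t\,\tau^{*}$ used in Proposition~\ref{lem:weakBisimulation} to simulate a FIFO transition $t$ now collapses to a single $S'_{\sync}$-transition, and condition~(2) of weak bisimulation (preservation under silent moves) becomes vacuous.

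First I would verify that the mapping $h$ extends to reachable configurations of $S'_{\sync}$. The merging identifies precisely those order-machine states connected by $\tau$-labeled transitions, that is, by transitions whose underlying FIFO transition in $S$ does not send to the relevant channel~$\mathtt{c}$. Since $h_{\mathtt{c}}$ sends any such transition to $\epsilon$, the word $h_{\mathtt{c}}(v_{\mathtt{c}})$ is unaffected by whether $v_{\mathtt{c}}$ traverses these transitions or short-circuits them; hence $h$ is well-defined on configurations of $S'_{\sync}$, and in particular $R'_{h,\sync}$ is a well-defined relation.

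Next I would redo the case analysis of Proposition~\ref{lem:weakBisimulation}. For the forward direction, a FIFO transition $t = q \fstrans{} q'$ or $t = q \fstrans{\mathtt{c}!a} q'$ is simulated by the counting abstraction machine firing alone, with each order machine either remaining in its (now-merged) state or leaving its non-synchronizing $(a,t)$-labeled transition pending for a later receive. A receive $t = q \fstrans{\mathtt{c}?a} q'$ is simulated by the unique synchronized step between the counting-abstraction decrement $\delta_{a,t'}$ and the matching $(a,t')$-labeled transition of $S_{\ord}^{\mathtt{c}}$; previously this step was preceded by a (possibly empty) sequence of $\tau$-moves in $S_{\ord}^{\mathtt{c}}$, all of which are absorbed by state merging. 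Since $S'_{\sync}$ has no silent transitions, the backward direction reduces to showing that every step in $S'_{\sync}$ corresponds to a FIFO transition enabled in $h(\overline{q},\cval)$, which is symmetric to the forward direction and handled by the same case split.

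The main obstacle will be to justify that the new guards of the form $\sum_{t'' \in \sigma,\, a \in M} (a,t'') = 0$ neither spuriously block bisimilar behaviour nor destroy the flatness of the modified order machines. Flatness of $S$ ensures that once an execution exits a loop $\ell$ with label $\sigma$, it cannot re-enter $\ell$; consequently, in any $S$-configuration $h(\overline{q},\cval)$ from which $S$ is about to fire a transition past $\ell$, every letter sent by prior iterations of $\sigma$ must already have been retrieved. This is equivalent to saying that $\cval((a,t''))=0$ for every $t'' \in \sigma$ and $a \in M$, so the guard is satisfied exactly when the matching FIFO transition is enabled. Dually, the guard forbids $S_{\ord}^{\mathtt{c}}$ from re-entering the corresponding loop once exited, which restores the flatness of $S_{\ord}^{\mathtt{c}}$ that merging two anchor states of adjacent loops could otherwise have broken.
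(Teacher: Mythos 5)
Your overall strategy --- redo the case analysis of Proposition~\ref{lem:weakBisimulation}, observe that the $\tau^{*}t\tau^{*}$ sequences collapse to single synchronized steps so that condition (2) becomes vacuous, and check that $h$ survives the state merging because the merged transitions are exactly those mapped to $\epsilon$ by $h_{\mathtt{c}}$ --- is precisely what the paper intends; the paper gives no explicit proof and only remarks that a proof similar to that of Proposition~\ref{lem:weakBisimulation} applies. Up to that point your proposal is a faithful (and more detailed) rendering of the paper's argument.

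The gap is in your justification of the new guards. You claim that ``in any $S$-configuration $h(\overline{q},\cval)$ from which $S$ is about to fire a transition past $\ell$, every letter sent by prior iterations of $\sigma$ must already have been retrieved.'' This is false: a flat FIFO machine can exit a purely sending loop with the channels still full of the letters produced by that loop, and then fire transitions past $\ell$ while the counters $(a,t'')$ with $t''$ in $\sigma$ are strictly positive. What actually saves the construction is that the guard sits not on the counting abstraction machine (which tracks the control flow of $S$) but on the order machine $S_{\ord}^{\mathtt{c}}$, whose local state tracks the \emph{read position} in channel $\mathtt{c}$: after merging, the guarded transition $t_1'$ can only fire as part of a synchronization on the retrieval of a letter sent by $t_1$ to channel $\mathtt{c}$, and at that moment the FIFO discipline on $\mathtt{c}$ guarantees that every letter sent to $\mathtt{c}$ by the earlier loop $\ell$ has already been retrieved. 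Your argument conflates the control position with the read position, so the step in which you discharge the guards (needed both to match a retrieve of $S$ in the forward direction and to map a guarded step of $S'_{\sync}$ back to an enabled retrieve) fails as written. Note also that the guard literally sums the counters for letters sent by $\sigma$ to \emph{all} channels, not only to $\mathtt{c}$; the corrected argument only yields that the counters for sends to $\mathtt{c}$ are zero, so a careful proof must either adopt the per-channel reading of the guard or separately argue why the remaining counters vanish.
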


\paragraph{Trace-flattening}
The counting abstraction machine $S_{\cnt}$ is not flat in general. 
 E.g., there
are two transitions from $q_{4}$ to $q_{3}$ in
Fig.~\ref{fig:countingAbstraction}. Those two states are in more than
one loop, violating the condition of flatness.
\begin{figure}[!htp]
    \centering
        \begin{subfigure}[t]{0.4\textwidth}
            \begin{tikzpicture}[>=stealth]
                \node[state, label=-90:$q_{1}$] (q1) at (0cm,0cm) {};
                \node[state, label=90:$q_{2}$] (q2) at ([yshift=1cm]q1) {};
                \node[state, label=-90:$q_{3}$] (q3) at ([xshift=2.3cm]q1) {};
                \node[state, label=90:$q_{4}$] (q4) at ([yshift=1cm]q3) {};

                \draw[->] (q1) to [bend left] node[auto=left] {$t_{1}~!a$} (q2);
                \draw[->] (q2) to [bend left] node[auto=left] {$t_{2}~!b$} (q1);
                \draw[->] (q1) to node[auto=right] {$t_{5}$} (q3);
                \draw[->] (q3) to [bend left] node[auto=left] {$t_{3}~!a$} (q4);
                \draw[->] (q4) to [bend left] node[auto=left] {$t_{4}~?a$} (q3);
            \end{tikzpicture}
            \caption{Flat FIFO machine}
        \end{subfigure}
        \begin{subfigure}[t]{0.5\textwidth}
            \begin{tikzpicture}[>=stealth]
                \node[state, label={[black!20]-90:$q_{1}$}, unreachable] (q1) at (0cm,0cm) {};
                \node[state, label={[black!20]90:$q_{2}$}, unreachable] (q2) at ([yshift=1cm]q1) {};
                \node[state, label=-90:$q_{3}$] (q3) at ([xshift=3cm]q1) {};
                \node[state, label=90:$q_{4}$] (q4) at ([yshift=1cm]q3) {};

                \draw[->, draw=black!20] (q1) to [bend left] (q2);
                \draw[->, draw=black!20] (q2) to [bend left] (q1);
                \draw[->, unreachable] (q1) to (q3);
                \draw[->] (q3) to [bend left] node[auto=left, pos=0.47] {$(a,t_{3})^{++}$} (q4);
                \draw[->] (q4) to [bend left] node[auto=left] {$(a,t_{1})^{--}$} (q3);
                \path[->, draw=black] (q4) .. controls ([xshift=3cm]barycentric cs:q3=1,q4=1) ..
                node[auto=left, pos=0.2] {$(a,t_{3})^{--}$} (q3);
            \end{tikzpicture}
            \caption{Counting abstraction machine (grey part no longer reachable)}
        \end{subfigure}
        \begin{subfigure}[t]{0.4\textwidth}
            \begin{tikzpicture}[>=stealth,yshift=-2.3cm]
                \node[state, label={[black!20]-90:$q_{1}$}, unreachable] (q1) at (0cm,0cm) {};
                \node[state, label={[black!20]90:$q_{2}$}, unreachable] (q2) at ([yshift=1cm]q1) {};
                \node[state, label=-90:$q_{3}$] (q3) at ([xshift=3.5cm]q1) {};
                \node[state, label=90:$q_{4}$] (q4) at ([yshift=1cm]q3) {};

                \draw[->, draw=black!20] (q1) to [bend left] (q2);
                \draw[->, draw=black!20] (q2) to [bend left] (q1);
                \draw[->, unreachable] (q1) to (q3);
                \draw[->] (q3) to [bend left] node[auto=left] {$(a,t_3)$} (q4);
                \draw[->] (q4) to [bend left] node[auto=left] {$\tau$} (q3);

                \path[use as bounding box]
                ([xshift=-0.5cm,yshift=-0.5cm]q1) rectangle
                ([xshift=0.5cm,yshift=0.5cm]q4);
            \end{tikzpicture}
            \caption{Order machine (grey part\\ no longer reachable)}
        \end{subfigure}
        \begin{subfigure}[t]{0.4\textwidth}
            \begin{tikzpicture}[>=stealth]
                \node[state, label=90:{$(q_{4},q_{3})$}] (q6) at (0cm,0cm) {};
                \node[state, label=-90:{$(q_{3},q_{4})$}] (q7) at ([yshift=-1cm, xshift=1cm]q6) {};
                \node[state, label=-90:{$(q_{3},q_{3})$}] (q8) at ([xshift=-2cm]q7) {};

                \draw[->] (q6) -- node[auto=left] {${(a,t_{3})^{--}}$} (q7);
                \draw[->] (q7) -- node[auto=left] {$\tau$} (q8);
                \draw[->] (q8) -- node[auto=left, pos=0.2] {${(a,t_{3})^{++}}$} (q6);
                \path[use as bounding box]
                ([xshift=-1.5cm,yshift=-0.7cm]q8) rectangle
                ([xshift=1cm,yshift=1cm]q6-|q7);
            \end{tikzpicture}
            \caption{Part of synchronized counter system still reachable}%
            \label{fig:traceFlat}
        \end{subfigure}
        \caption{Flattening}%
    \label{fig:flattening}
\end{figure}
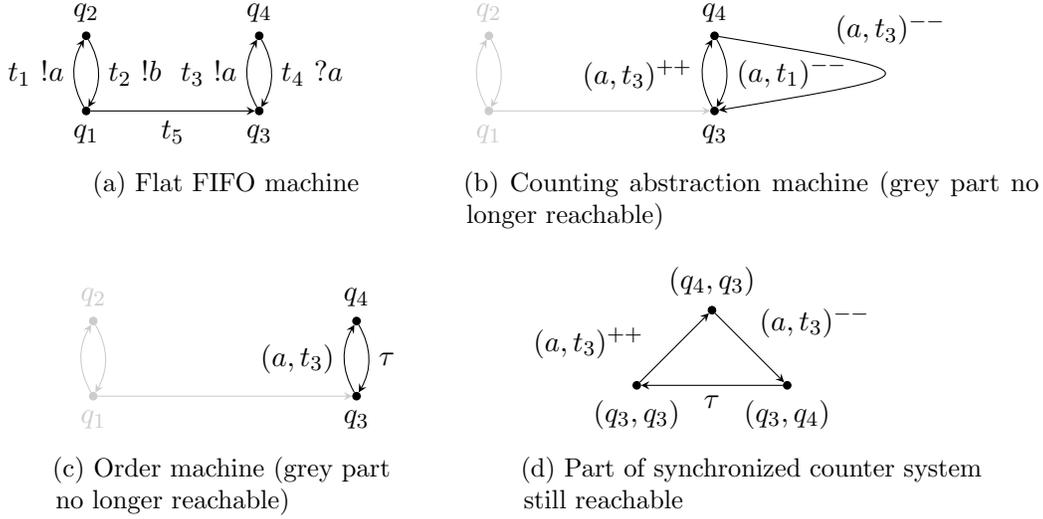
However, suppose a run is visiting states $q_{3},q_{4}$ of the
counting abstraction machine and states $q_{3}, q_{4}$ of the order
machine as shown in Fig.~\ref{fig:flattening} (parts of the system
that are no longer reachable are greyed out). Now the transition
labeled ${(a,t_{1})}^{--}$ can't be used and the run is as shown in
Fig.~\ref{fig:traceFlat}, which is a flat counter machine. In general,
suppose $\ell_{0}, \ell_1, \ldots, \ell_{r}$ are
the loops in $S$. There is a flat counter machine $S_{\flatcs}$ whose
set of runs is the set of runs $\rho$ of the synchronized
transition system which satisfy the following property: in $\rho$, all
local states of the counting abstraction machine are in some loop
$\ell_{i}$ and for every channel $\mathtt{c}$, all local states of the
order machine $S_{\ord}^{\mathtt{c}}$ are in some loop
$\ell_{\mathtt{c}}$. This is the intuition for the next result.

Let $\mathrm{traces}(S_{\sync})$ be the set of all runs
of $S_{\sync}$. Let $S'$ be another counter machine with set of
states $Q'$ and the same set of counters as $S_{\sync}$ and let
$f:Q' \to Q$ be a function. We say that $S'$ is a $f$-flattening
of $S_{\sync}$~\cite[Definition~6]{DFGV2006} if $S'$ is flat and for
every transition $q \cstrans{u}{g} q'$ of $S'$, $f(q) \cstrans{u}{g}
f(q')$ is a transition in $S_{\sync}$. Further, $S'$ is a $f$-trace-flattening
of $S_{\sync}$~\cite[Definition~8]{DFGV2006} if $S'$ is a $f$-flattening of
$S_{\sync}$ and $\mathrm{traces}(S_{\sync})=f(\mathrm{traces}(S'))$.
\begin{prop}%
    \label{lem:traceFlattable}
    The synchronized counter system $S_{\sync}$ is trace-flattable.
\end{prop}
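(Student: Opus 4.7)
The plan is to build $S_{\flatcs}$ by annotating every state of $S_{\sync}$ with a tuple that records, for each component, which loop or path segment of the flat FIFO machine $S$ that component is currently traversing, and to show that this annotated system is flat while covering every trace of $S_{\sync}$.

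First, I fix the path schema $p_0 \ell_1 p_1 \cdots \ell_r p_r$ of $S$. Each loop $\ell_i$ of $S$ induces a corresponding loop in $S_{\cnt}$ and in each order machine $S_{\ord}^{\mathtt{c}}$, though $S_{\cnt}$ carries additional nested loops coming from the multiple decrementation transitions $\delta_{a,t'}$ attached to each retrieve transition. The key observation is that for any run of $S_{\sync}$, at any instant, each component (the counting abstraction machine and each order machine) sits inside one specific such loop or on one specific path segment; moreover, by the flatness of $S$ and the weak bisimulation of Proposition~\ref{lem:weakBisimulation}, these per-component indices advance monotonically along the linear order of the path schema.

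Second, I define $S_{\flatcs}$ to have states of the form $(\bar{q},\vec{L})$, where $\bar{q}\in Q^{\mathtt{p}+1}$ is a state of $S_{\sync}$ and $\vec{L}=(L_0,L_1,\ldots,L_{\mathtt{p}})$ records, for $S_{\cnt}$ and each $S_{\ord}^{\mathtt{c}}$ respectively, the index of the loop or path segment of $S$ in which that component currently sits. The counters of $S_{\flatcs}$ are those of $S_{\sync}$, and the forgetting map $f$ sends $(\bar{q},\vec{L})$ to $\bar{q}$. A transition of $S_{\sync}$ lifts to a transition of $S_{\flatcs}$ only between annotations consistent with its effect: in particular, a retrieve transition $\delta_{a,t'}$ in $S_{\cnt}$ is permitted only when $L_{\mathtt{c}}$ (for the appropriate channel $\mathtt{c}$) matches the loop of $S$ in which $t'$ occurs, and a transition leaving a loop must increment the corresponding coordinate of $\vec{L}$.

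Third, I verify flatness and trace equivalence. Within a fixed annotation $\vec{L}$, the reachable transitions form the product of single loops (or of single path segments), so every state $(\bar{q},\vec{L})$ belongs to at most one cycle of $S_{\flatcs}$; the guards on the order machines forcing counters for a quitted loop to be zero (added just before Proposition~\ref{lem:Bisimulation}) prevent any return to earlier $\vec{L}$ values. Thus $S_{\flatcs}$ is a $f$-flattening of $S_{\sync}$. For trace equivalence $\mathrm{traces}(S_{\sync})=f(\mathrm{traces}(S_{\flatcs}))$, every run of $S_{\sync}$ admits a unique annotation by $\vec{L}$ respecting the path schema, and conversely every run of $S_{\flatcs}$ projects by $f$ to a genuine run of $S_{\sync}$ by the definition of the lifted transitions.

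The main obstacle will be verifying the flatness claim in the presence of the nested loops in $S_{\cnt}$: a priori, several decrementation transitions $\delta_{a,t'}$ at the same state of $S_{\cnt}$ could form distinct cycles in $S_{\flatcs}$. This is where synchronization with the order machine is crucial: for a fixed $\vec{L}$, the order machine enables exactly the $\delta_{a,t'}$ whose annotated-loop index matches $L_{\mathtt{c}}$, so the per-annotation restriction is a simple elementary loop rather than a bundle of nested loops.
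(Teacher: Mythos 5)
Your overall strategy is essentially the paper's: both decompose the runs of $S_{\sync}$ according to which loop or path segment of $S$ each component (the counting abstraction machine and each order machine) currently occupies, observe that these per-component positions can only advance along the path schema, and assemble one flat piece per position tuple, glued along the transitions that exit a loop. The paper packages this as an induction on the vector counting the loops still reachable by each component and sequentially composes the pieces; your annotated state space $(\overline{q},\vec{L})$ is the explicit unfolding that this induction produces. Your use of the annotation to disambiguate the anchor states and to cut down the bundle of decrement transitions $\delta_{a,t'}$ in $S_{\cnt}$ matches the paper's intuition (the greyed-out transition ${(a,t_{1})}^{--}$ in its flattening figure).

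The genuine gap is in your third step, where you assert that within a fixed annotation $\vec{L}$ "the reachable transitions form the product of single loops (or of single path segments), so every state $(\overline{q},\vec{L})$ belongs to at most one cycle." That inference is false: a rendez-vous product of elementary loops need not be flat, because the $\tau$-labelled transitions of distinct components (increments and no-ops in $S_{\cnt}$, non-sending transitions in $S_{\ord}^{\mathtt{c}}$) fire asynchronously and interleave freely between two synchronization points, creating commuting diamonds, and a diamond inside an iterated loop places a state on two distinct cycles. This already occurs in the paper's running example: restrict $S_{\cnt}$ and $S_{\ord}^{\mathtt{1}}$ to the loop on $\set{q_{3},q_{4}}$; from the global state $(q_{3},q_{4})$ both the asynchronous increment ${(a,t_{3})}^{++}$ (to $(q_{4},q_{4})$) and the order machine's $\tau$-transition (to $(q_{3},q_{3})$) are enabled, and the two branches reconverge at $(q_{4},q_{3})$, which therefore lies on two different cycles of the restricted product. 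Your synchronization argument handles the nested $\delta_{a,t'}$ loops of $S_{\cnt}$ but not this interleaving phenomenon. (The paper's own proof merely asserts the existence of the per-tuple flat machine $S_{\flatcs}$ without justifying it, so it is vulnerable to the same objection; but since your write-up supplies an explicit justification, that justification has to be correct, and it is not.) Closing the gap needs an additional idea --- for instance, serializing the independent $\tau$-moves by a canonical schedule and arguing that this is harmless for the notion of trace-flattening being used --- which your proposal does not contain.
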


\begin{proof}
    Starting from a global state $\overline{q}$ of $S_{\sync}$, we claim
    that we can build a flat counter machine that is a trace-flattening
    of $S_{\sync}$. Let $n_{0}$ be the number of loops in $S$ reachable
    from $\overline{q}(0)$. For each channel $\mathtt{c}$, let
    $n_{\mathtt{c}}$ be the number of loops in $S$ reachable from
    $\overline{q}(\mathtt{c})$. We prove the claim by induction on the
    vector $\langle n_{0}, n_{\mathtt{1}}, \ldots, n_{\mathtt{p}}
    \rangle$. The order on vectors is component-wise comparison ---
    $\langle n_{0}, n_{\mathtt{1}}, \ldots, n_{\mathtt{p}} \rangle <
    \langle n_{0}', n_{\mathtt{1}}', \ldots, n_{\mathtt{p}}' \rangle$
    if $n_{i} \le n_{i}'$ for all $i \in \set{0, \ldots, \mathtt{p}}$
    and $n_{j} < n_{j}'$ for some $j \in \set{0, \ldots, \mathtt{p}}$.

    For the base case, $\langle n_{0}, n_{\mathtt{1}}, \ldots, n_{\mathtt{p}}
    \rangle = \vec{0}$. From such a global state, the counting
    abstraction machine and order machines for all the channels have
    unique paths to follow and hence there is a unique run of
    $S_{\sync}$. This unique run can be easily simulated by a flat
    counter machine, proving the base case.

For the induction step, suppose $\ell_{0}$ is the first loop in
    $S$ reachable from $\overline{q}(0)$ and for every channel
    $\mathtt{c}$, suppose $\ell_{\mathtt{c}}$ is the
    first loop in $S$ reachable from $\overline{q}(\mathtt{c})$, with
$\ell'_c$ being the corresponding loop in $S_\ord^\mathtt{c}$. There
    is a flat counter machine $S_{\flatcs}$ described in the paragraph
    preceding this lemma, which can simulate runs of the synchronized
    counter system as long as the counting abstraction machinew doesn't
    exit the loop $\ell_{0}$ and for every channel $\mathtt{c}$, the
    order machine $S_{\ord}^{\mathtt{c}}$ doesn't exit the loop
    $\ell'_{\mathtt{c}}$. If the counting abstraction machine exits the
    loop $\ell_{0}$ (or the order machine $S_{\ord}^{\mathtt{c}}$ exits
    the loop $\ell'_{\mathtt{c}}$ for some channel $\mathtt{c}$), then
    the vector $\langle n_{0}-1,n_{\mathtt{1}}, \ldots,
    n_{\mathtt{p}}\rangle$ (or the vector $\langle n_{0},
    n_{\mathtt{1}}, \ldots, n_{\mathtt{c}}-1, \ldots,
    n_{\mathtt{p}}\rangle$) is strictly smaller than the vector $\langle
    n_{0}, n_{\mathtt{1}}, \ldots, n_{\mathtt{p}}\rangle$.\footnote{This step fails in non-flat FIFO machines; if a
    loop is exited in a non-flat FIFO machine, it may be possible to reach
the loop again, so the vector doesn't necessarily decrease.} The induction
    hypothesis shows that there is a flat counter machine $S_{\flatcs}'$
    that can cover the remaining possible runs. We sequentially
    compose $S_{\flatcs}$ and $S_{\flatcs}'$ by identifying the initial
    state of $S_{\flatcs}'$ with the state of $S_{\flatcs}$ in which the
    counting abstraction machine exits the loop $\ell_{0}$ (or the
    order machine $S_{\ord}^{\mathtt{c}}$ exits the loop
    $\ell'_{\mathtt{c}}$). There are finitely many possibilities of the
    counting abstraction machine or one of the order machines exiting a loop; for
    each of these possibilities, the induction hypothesis gives a
    flat counter machine $S_{\flatcs}'$. We sequentially compose
    $S_{\flatcs}$ with all such flat counter machines $S_{\flatcs}'$. The
    result is a trace-flattening of the synchronized counter system.
\end{proof}

Let $S_{\flatcs}$ be a trace-flattening of $S_{\sync}$. In general, the size of
$S_{\flatcs}$ is exponential in the size of $S_{\sync}$, which is exponential
in the size of $S$. In theory, problems on flat FIFO machines can be solved by
using tools on counter machines (bisimulation preserves CTL\textsuperscript{*}
and trace-flattening preserves LTL~\cite[Theorem1]{DFGV2006}); hence we deduce:

\begin{thm}%
\label{ltlctl}
LTL is decidable for flat FIFO machines.
\end{thm}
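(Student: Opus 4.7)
The plan is to chain together the two structural results proved immediately before this theorem. Given a flat FIFO machine $S$ and an LTL formula $\varphi$ (whose atomic propositions are, as the section's motivating example suggests, linear constraints on the number of occurrences of letters in channels), I would first build the modified synchronized counter system $S'_{\sync}$ from $S$. By Proposition~\ref{lem:Bisimulation}, the relation $R'_{h,\sync}$ is a bisimulation between $S$ and $S'_{\sync}$, so $S \models \varphi$ iff $S'_{\sync} \models \varphi$ (bisimulation preserves CTL$^{*}$, hence in particular LTL). Crucially, the atomic propositions of the form $\#^{a}_{\mathtt{c}} \ge k$ translate to linear constraints on the counters $(a,t)$ of $S'_{\sync}$, because the map $h$ reconstructs the channel content of $S$ exactly from the counters of $S'_{\sync}$, and counter summations $\sum_{t \text{ sends } a \text{ to } \mathtt{c}} (a,t)$ give the total number of occurrences of $a$ in $\mathtt{c}$.

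Next, I would apply Proposition~\ref{lem:traceFlattable} to obtain a flat counter machine $S_{\flatcs}$ that is an $f$-trace-flattening of $S'_{\sync}$ (the argument for $S'_{\sync}$ is the same as for $S_{\sync}$, since the modifications only add guards and merge silent transitions). By~\cite[Theorem~1]{DFGV2006}, trace-flattening preserves LTL, so $S'_{\sync} \models \varphi$ iff $S_{\flatcs} \models \varphi$. It remains to invoke the known fact that LTL model-checking is decidable (in fact \PSPACE{}-complete) for flat counter machines~\cite{DDS-icalp13}, which closes the chain $S \models \varphi \iff S'_{\sync} \models \varphi \iff S_{\flatcs} \models \varphi$, and the last question is decidable.

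The main obstacle I foresee is not the overall chain, which is quite clean, but ensuring that the translation of atomic propositions behaves correctly under both reductions. In particular, the bisimulation $R'_{h,\sync}$ relates configurations via the morphisms $h_{\mathtt{c}}$, so one must verify that an atomic proposition $\#^{a}_{\mathtt{c}} \ge k$ on the FIFO side corresponds to the linear counter constraint $\sum_{t \text{ sends } a \text{ to } \mathtt{c}} (a,t) \ge k$ on the counter side, and that this latter constraint is expressible in the class of guards allowed by the LTL-decidability result on flat counter machines. Once this labelling correspondence is fixed, the conclusion is immediate from combining Proposition~\ref{lem:Bisimulation}, Proposition~\ref{lem:traceFlattable}, and the existing decidability result for LTL on flat counter machines.
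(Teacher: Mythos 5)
Your proof follows exactly the route the paper takes: the paper deduces Theorem~\ref{ltlctl} from the bisimulation with the synchronized counter system (Proposition~\ref{lem:Bisimulation}), its trace-flattability (Proposition~\ref{lem:traceFlattable}), the facts that bisimulation preserves CTL\textsuperscript{*} and trace-flattening preserves LTL~\cite[Theorem1]{DFGV2006}, and decidability of LTL on flat counter machines. Your additional care about translating atomic propositions $\#^{a}_{\mathtt{c}} \ge k$ into counter sums is a useful detail the paper leaves implicit, but the argument is essentially the same.
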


The decidability of CTL\textsuperscript{*} is an open problem for
bisimulation-flattable counter machines~\cite{DFGV2006}, so we cannot
use it for deciding CTL\textsuperscript{*} in flat FIFO machines.

\section{Conclusion and Perspectives}
We answered the complexity of the main reachability problems for flat
(perfect, lossy and front-lossy) FIFO machines which are \NP{}-complete as for flat counter machines. We
also show how to translate a flat FIFO machine into a trace-flattable
counter system.
This opens the way to model-check a general FIFO machine
by \emph{enumerating its flat sub-machines}.

Let us recall the spirit of many tools for non-flat counter machines like FAST, FLATA, \ldots~\cite{FAST-cav03,BFLS05-atva,BFLP-sttt08,DBLP:conf/cade/BozgaIKV10,DFGD-jancl10} and for general well structured transition systems~\cite{FG-lmcs12}. The framework for underapproximating a non-flat machine $M$ proposes to enumerate a (potentially) infinite sequence of flat sub-machines $M_1, M_2,\ldots,M_n,\ldots$,
to compute the reachability set of each flat sub-machine $M_n$, and to iterate this process till the reachability set is computed. For this strategy, we use a fair enumeration of flat sub-machines, which means that every flat sub-machine will eventually appear in the enumeration.
%

Suppose $M_n$ is a flat FIFO sub-machine enumerated and we want to check if $Reach(M_n)$ is stable under $Post_M$. We don't want to compute directly $Reach(M_n)$ but we will compute $Reach(C_n)$ that is possible since $C_n$ is a flat counter machine. If there is a transition $t$ in the non-flat machine $M$ that does not have any copy in $M_n$ then $M_n$ is not stable and we continue. Otherwise, transition $t$ of $M$ has copies $t_1,\ldots,t_m$ in $M_n$. Check if $M_n$ is stable under each transition $t_1,\ldots,t_m$; this is done by testing whether, for every $i=1,\ldots,m$, $Post_{T_i}(Reach(C_n)) \subseteq Reach(C_n)$ where $T_i$ is the set of transitions, in $C_n$, associated (by bisimulation) with transition $t_i$ in $M_n$. If one of the $m$ tests fails, $M_n$ is not stable. Otherwise, $M_n$ is stable.

The following semi-algorithm gives an overview of a strategy to compute the reachability relation and then verify, for instance, whether a configuration is reachable from another one.
\begin{itemize}
\item  start fairly enumerating flat sub-machines $M_1, M_2,\ldots,M_n,\ldots$
\item   for every flat subsystem $M_n$
\begin{itemize}
\item   compute the synchronized counter system $C_n$ associated with $S_n$
\item compute the reachability set $Reach(C_n)$
\item test whether $Reach(M_n)$ is stable under $Post_M$
\item if $Reach(M_n)$ is stable under $Post_M$ we can terminate. Otherwise, we go to the next flat subsystem $M_{n+1}$ and repeat.
\end{itemize}
\end{itemize}
The above semi-algorithm terminates if there is a flat FIFO sub-machine having the same reachability set as the entire machine.

But real systems of FIFO systems are often not reduced to an \emph{unique} FIFO machine. Let us show how results on flat FIFO machines can be used to verify \emph{systems} of communicating FIFO machines.

%
Let us consider a peer-to-peer FIFO system $S=(M_1, M_2,\ldots,M_k)$ where machine $M_i$ communicates with machine $M_j$ through two one-directional FIFO channels: $M_i$ sends letters to $M_j$ through channel $c_{i,j}$ and  $M_i$ receives letters from $M_j$ through channel $c_{j,i}$ for every $i,j=1,\ldots,k$ ($i \neq j$). Remark that peer-to-peer flat FIFO systems don't produce (by product) a flat FIFO machine.
%
%
If we consider the product of the three flat  machines shown in Fig.~\ref{fig:motivation},
the resulting FIFO  machine is not flat. It does become flat if we
remove the self loop labeled $\mathtt{pq}?y$ in Process P. The resulting flat
sub-machine is unbounded, so it implies that the original system is also
unbounded. Hence, even if the given flat FIFO system don't produce (by product) a flat FIFO machine, some
questions can often be answered by analyzing sub-systems and flat sub-machines.
%
Fortunately, reachability in such peer-to-peer flat FIFO systems reduces to reachability in VASS~\cite{BDM-concur20}, hence the reachability problem is decidable (but with the non-elementary complexity of reachability in VASS).


When systems of FIFO machines $S=(M_1, M_2,\ldots,M_k)$ are not composed of flat FIFO machines, we may use different strategies: we may compute the product $M$ of machines $M_i$ and enumerate the flat sub-machines of $M$ or enumerate the flat sub-systems $S_n$ of $S$ and analyse them.

It remains to be seen if tools can be optimized to make verifying FIFO machines work in practice. This
strategy has worked well for counter machines and offers hope for FIFO
machines. We have to evaluate all these possible verification strategies on real case studies.

\paragraph{Acknowledgements.} We
would like to thank the anonymous referees of both the Conference CONCUR'2019 and the Journal LMCS
for their attentive reading and their constructive questions and suggestions that allowed us to improve the quality of our article.


\bibliographystyle{plain}
\bibliography{flat}
\vfill

%



\end{document}